\documentclass[11pt,english]{article}
\usepackage[latin9]{inputenc}
\usepackage{float,amsmath,nicefrac,amsthm,color,amssymb,graphicx,setspace,soul,xcolor,mwe,comment,paralist}
\onehalfspacing
\usepackage{multirow,array}
\usepackage[lofdepth,lotdepth]{subfig}
\usepackage[title]{appendix}

\makeatletter

\theoremstyle{plain}
\newtheorem{ax}{\protect\axiomname}
\theoremstyle{plain}
\newtheorem{axb}{\protect\axiombname}
\theoremstyle{plain}
\newtheorem{thm}{\protect\theoremname}
\theoremstyle{plain}
\newtheorem{lem}{\protect\lemmaname}
\theoremstyle{remark}
\newtheorem{claim}{\protect\claimname}
\theoremstyle{plain}
\newtheorem{prop}{\protect\propositionname}
\theoremstyle{remark}

\theoremstyle{remark}

\theoremstyle{plain}

\theoremstyle{remark}

\theoremstyle{remark}

\usepackage[T1]{fontenc}
\usepackage{lmodern, microtype}
\usepackage[small]{titlesec}

 \linespread{1.14}
 \usepackage[left=1.3in, right=1.3in, top=1.3in,bottom=1.3in]{geometry}

\usepackage[colorlinks=true, pdfstartview=FitV, linkcolor=blue,citecolor=blue, urlcolor=blue]{hyperref}

\makeatother
\usepackage{natbib}
\bibliographystyle{chicago}
\setcitestyle{authoryear,open={(},close={)}}
\providecommand{\definitionname}{Definition}
\providecommand{\axiomname}{Axiom}
\providecommand{\axiombname}{Axiom}
\providecommand{\claimname}{Claim}
\providecommand{\lemmaname}{Lemma}
\providecommand{\theoremname}{Theorem}
\providecommand{\observationname}{Observation}
\providecommand{\examplename}{Example}
\providecommand{\propositionname}{Proposition}
\providecommand{\remarkname}{Remark}


\global\long\def\RR{\mathbb{R}}
\newcommand{\argmax}{\operatornamewithlimits{argmax}}

   \def\dd{\mathrm{d}}
        \def\PP{\mathbb{P}}
\def\ee{\mathrm{e}}         
        \def\P{\mathcal{P}}
\def\dkl{D_{\mathrm{KL}}}

\title{\textbf{\Large{The Cost of Information: The Case of Constant Marginal Costs}}\thanks{We thank Kim Border, Ben Brooks, Simone Cerreia-Vioglio, Tommaso Denti, Federico Echenique, Drew Fudenberg, Ed Green, Adam Kapor, Massimo Marinacci, Jeffrey Mensch, Filip Mat{\v{e}}jka, Stephen Morris, Teemu Pekkarinen, Doron Ravid, and Yangfan Zhou for their comments. All errors and omissions are our own.}
}

\author{ \large Luciano Pomatto\thanks{Caltech. Email: 	luciano@caltech.edu.} \ \ \ %
Philipp Strack\thanks{Yale. Email:  philipp.strack@gmail.com. Philipp Strack was supported by a Sloan Fellowship.} \ \ \ %
Omer Tamuz\thanks{Caltech. Email: tamuz@caltech.edu. Omer Tamuz was supported by a grant from the Simons Foundation (\#419427), a Sloan research fellowship, and a BSF award (\#2018397).}}

\begin{document}

\maketitle
\begin{abstract}

    We develop an axiomatic theory of information acquisition that captures the idea of constant marginal costs in information production: the cost of generating two independent signals is the sum of their costs, and generating a signal with probability half costs half its original cost. Together with Blackwell monotonicity and a continuity condition, these axioms determine the cost of a signal up to a vector of parameters. These parameters have a clear economic interpretation and determine the difficulty of distinguishing states. 
    
\end{abstract}

\section{Introduction}
%
%

%
%

 Much of contemporary economic theory is built on the idea that information is scarce and valuable. A proper understanding of information as an economic commodity requires theories for its value, as well as for its production cost. While the literature on the value of information \citep*{bohnenblust1949reconnaissance, blackwell1951comparison} is by now well established, modeling the cost of producing information has remained an unsolved problem.\footnote{For example, \citet*{arrow1985informational} makes the following statement: ``The choice of information structures must be subject to some limits, otherwise, of course, each agent would simply observe the entire state of the world. There are costs of information, and it is an important and incompletely explored part of decision theory in general to formulate reasonable cost functions for information structures.''} In this paper, we develop an axiomatic theory of costly information acquisition.


 We characterize all cost functions over Blackwell experiments  that satisfy three main axioms: First, experiments that are more informative in the sense of \cite*{blackwell1951comparison} are more costly. Second, the cost of generating independent experiments equals the sum of their individual costs. Third, the cost of generating an experiment with probability half equals half the cost of generating it with probability one.


%
%
 Our three axioms admit a straightforward economic interpretation. The first one is a form of monotonicity: more precise information is more costly. The second and third axioms capture the idea of linear cost. The second axiom implies that the cost of collecting $n$ independent random samples is linear in $n$. For example, if the variable is the perceived quality of a new product, and information is generated by surveying random customers, the axiom is satisfied if the cost of calling an additional customer is constant: i.e.\ calling 20 customers is twice as costly as calling 10. More generally, the axiom requires the cost to be additive with respect to experiments that are independent conditional on the state. Similarly, the third axiom implies that the cost of producing a sample with probability $\alpha$ is a fraction $\alpha$ of the cost of acquiring the same sample with probability one. This axiom is satisfied by all posterior separable costs, which include nearly all models of information cost in the literature.

 
 We propose these linearity assumptions as a way of studying cost functions over information structures. In the context of traditional commodities, a standard avenue for studying cost functions is by categorizing them in terms of decreasing, increasing, or constant marginal costs, with the latter being arguably the conceptually simplest case. In this paper we take a similar approach for studying the cost of information acquisition, and our axioms make an attempt at formalizing the assumption of constant marginal costs for information. As in the case of traditional commodities, assuming linear costs is restrictive, and it is easy to conceive of decision problems where our axioms are violated. For example, if customers are hard to find, surveying 20 customers might cost more than twice as much as surveying 10. Conversely, economies of scale may result in decreasing marginal costs. Nevertheless, our axioms have the advantage of admitting a clear economic interpretation, making it possible to judge for which applications they are appropriate. We thus propose the study of linear cost functions as a first step towards the wider goal of studying general information costs in terms of their economic properties.

\paragraph*{Representation.} The main result of this paper is a characterization theorem for cost functions over experiments.
We are given a finite set $\Theta$ of states of nature. An experiment $\mu$ produces a signal realization $s \in S$ with probability $\mu_{i}(s)$ in state $i \in \Theta$.
We show that for any cost function $C$ that satisfies the above postulates, together with a continuity assumption, there exist unique non-negative coefficients $(\beta_{ij})$, one for each ordered pair of states of nature $i$ and $j$, such that%
\footnote{Throughout the paper we assume that the set of states of nature $\Theta$ is finite. We do not assume a finite set $S$ of signal realizations and the generalization of \eqref{eq:cost} to infinitely many signal realizations is given in \eqref{eq:thm_1}.}
\begin{equation}\label{eq:cost}
    C(\mu) = \sum_{i,j\in\Theta}\beta_{ij}\left(\sum_{s\in S}\mu_{i}(s)\,\log\frac{\mu_{i}(s)}{\mu_{j}(s)}\right)\,.
\end{equation}
%
%
Each coefficient $\beta_{ij}$ can be interpreted as capturing the difficulty of discriminating between state $i$ and state $j$, as the cost can be expressed as a linear combination
$$
  C(\mu) = \sum_{i,j\in\Theta}\beta_{ij}\dkl(\mu_i \Vert \mu_j),
$$
where the \emph{Kullback-Leibler divergence}
$$
\dkl(\mu_i\Vert \mu_j) = \sum_{s\in S}\mu_{i}(s)\log\frac{\mu_{i}(s)}{\mu_{j}(s)}
$$ 
is the expected log-likelihood ratio between state $i$ and state $j$ when the state equals $i$. The term $\dkl(\mu_i \Vert \mu_j)$ is thus large if the experiment $\mu$ on average produces evidence that strongly favors state $i$ over $j$, conditional on the state being $i$. Hence, the larger the coefficient $\beta_{ij}$, the more costly it is to reject the hypothesis that the state is $j$ when it truly is $i$. Formally, $\beta_{ij}$ is the marginal cost of increasing the expected log-likelihood ratio of an experiment with respect to states $i$ and $j$, conditional on $i$ being the true state. We refer to the cost \eqref{eq:cost} function as the \textit{log-likelihood ratio cost} (or \textit{LLR cost}).

In many common information acquisition problems, states of the world are one-dimensional quantities. For instance, this is the case when the unknown state is a physical quantity such as height or weight, or an economic quantity such as the inflation rate. In these examples, an experiment can be seen as a noisy measurement of the unknown underlying state $i \in \RR$. We provide a framework for choosing the coefficients $\beta_{ij}$ in these contexts. Our main hypotheses are that the difficulty of distinguishing between two states $i$ and $j$ is a function of the distance between them, and that the cost of performing a measurement with standard Gaussian noise does not depend on the set of states $\Theta$ in the particular information acquisition problem; this is a feature that is commonly assumed in models that exogenously restrict attention to normal experiments.

Under these assumptions, we   show that there exists a constant $\kappa \geq 0$ such that, for every pair of states $i,j \in \Theta$,
\[
\beta_{ij} = \frac{\kappa}{(i-j)^2}.
\]
In this functional form, the difficulty of distinguishing between states is a quadratic decreasing function of  the distance between them. As we show, this choice of parameters offers a simple and tractable framework for analyzing the implications of the LLR cost. 

\medskip

The concept of a Blackwell experiment makes no direct reference to subjective probabilities nor to  Bayesian reasoning.\footnote{Blackwell experiments have been studied both within and outside the Bayesian framework. See, for instance, \cite{le1996comparison} for a review of the literature on Blackwell experiments.
}
Likewise, our axioms and  characterization theorem do not presuppose the existence of a prior over the states of nature. Nevertheless, given a prior $q$ over $\Theta$, an experiment induces a distribution over posteriors $p$, making $p$ a random variable. 
Under this formulation, the LLR cost \eqref{eq:cost} of an experiment can be represented as the expected change of the function 
\[
F(p)=\sum_{i,j \in \Theta}\beta_{ij}\frac{p_{i}}{q_{i}}\log\left(\frac{p_{i}}{p_{j}}\right)
\]
from the prior $q$ to the posterior $p$ induced by the signal. That is, the cost of an experiment equals
\[
\mathbb{E}\left[F(p)-F(q)\right]
\]
where the expectation is taken with the distribution of posterior beliefs induced by the experiment and the prior. This establishes that LLR cost is posterior-separable, and makes it possible to apply techniques and insights derived for posterior-separable costs functions \citep*{caplin2013behavioral, caplindeanleahy2017}.

\paragraph*{Relation to Mutual Information Cost.}

 Following the seminal work of \cite*{sims2003implications,sims2010rational} on rational inattention, cost functions based on mutual information have been commonly used in applications;  \cite*{mackowiak2018rational} review the literature on rational inattention. Mutual information costs are defined as the expected change $$\mathbb{E}\left[H(q)-H(p)\right]$$ of the Shannon entropy
 $H(p) = -\sum_{i\in\Theta}p_i\log p_i$ between the decision maker's prior belief $q$ and posterior $p$. Equivalently, in this formulation, the cost of an experiment is given by the mutual information between the state of nature and the signal.\footnote{\label{linear-mutual} Related specifications discussed in the literature include models where the decision maker can acquire, for free, any experiment whose mutual information is below an upper bound \citep{sims2003implications}, as well as costs that are increasing transformation of mutual information \citep{denti2022posterior}.} \label{minor:10}One of the main differences between mutual information and the LLR cost, is that the first is subadditive rather than additive \citep[see, e.g.][]{lindley1956measure}, so that the cost of $n$ independent copies of an experiment is a strictly concave function of $n$. In applications, the LLR cost function leads to predictions which are qualitatively different from those induced by mutual information cost. We illustrate the differences in \S\ref{sec:examples} and \S\ref{sec:stochastic}.


 \paragraph*{Examples and Applications.} In \S\ref{sec:stochastic} we apply the LLR cost function to information acquisition problems and derive a number of predictions. Our applications include binary prediction problems, where a decision maker needs to predict whether the state is above or below a given threshold. An example of this is an analyst trying to predict which party will obtain the majority of votes in an election. Another example is a perception task where a subject is asked to observe a number of dots of two different colors on a screen, and must predict which color is predominant.\footnote{The two examples have a similar structure but are, of course, quite different in terms of data collection since perception tasks are usually performed with experimental subjects in controlled environments.}
 
 We show that in binary prediction problems the decision maker is strictly more likely to make the correct choice when the quantity to be predicted is farther from the desired threshold, under general assumptions on the coefficients $(\beta_{ij})$. For example, it is harder for the agent to  predict the winner in a close election than in an election where one of the candidates has a large lead.  Moreover, we show that under the specification $\beta_{ij} = \frac{\kappa}{(i-j)^2}$, the decision maker's probability of a choosing an action is a sigmoidal function of the state---a prediction in line with psychometric evidence on perception tasks.
 
 This and other examples illustrate how the LLR cost function leads to optimal choice probabilities that  take into account the difficulty of distinguishing between states. While intuitive, this property is ruled out by cost functions such as mutual information that treat states symmetrically.  
 
 \paragraph{Scope and Limitations.} There are many applications where the our  additivity assumption is violated, and so the LLR cost function is inadequate. A stark case, which we discuss in the next section, is that of  experiments that completely rules out a state; these would have infinite LLR cost. Thus our framework is incompatible with partitional information structures, which are an important modelling tool. Moreover, the fact that our representation has a number of parameters that grows with the number of states makes calculations and identification more difficult.

 A natural question is how the LLR cost can be applied in dynamic settings in which agents decide sequentially what information to acquire. As discussed in depth by \cite{bloedel2020cost}, it is impossible---under reasonable assumptions---to have a cost function that satisfies the assumption of constant marginal costs and is independent of the prior of the decision maker. This is a subtle issue which we explore in more detail in \S\ref{sec:bayesian}.

\section{Model}\label{sec:model}

A decision maker acquires information on an unknown state of nature belonging to a finite set $\Theta$. Elements of $\Theta$ will be denoted by $i,j,k$, etc.
Following \citet*{blackwell1951comparison},
we model the information acquisition process by means of \textit{experiments}.
An experiment $\mu=(S,(\mu_i)_{i \in \Theta})$ consists of a set $S$ of signal realizations equipped with a sigma-algebra $\Sigma$, and for each state $i \in \Theta$ a probability measure $\mu_i$ defined on $(S,\Sigma)$. The set $S$ represents the possible outcomes of the experiment, and each measure $\mu_i$ describes the distribution of outcomes when the true state is $i$.

We assume throughout that the measures $(\mu_i)$ are mutually absolutely continuous, so that each derivative (i.e.\ ratio between densities) $\frac{\dd \mu_i}{\dd \mu_j}$ is finite almost everywhere. In the case of finite signal realizations these derivatives are simply equal to ratio between probabilities $\frac{\mu_i(s)}{\mu_j(s)}$.\footnote{This assumption means that no signal can ever rule out any state, and in particular can never completely reveal the true state.}

Given an experiment $\mu$, we denote by%
\[
\ell_{ij}(s)=\log\frac{\dd \mu_{i}}{\dd \mu_{j}}(s)
\]
the log-likelihood ratio between states $i$ and $j$ upon observing the realization $s$. We define the vector
\[
    \left(\ell_{ij}(s)\right)_{i,j\in \Theta}
\]
of log-likelihood ratios among all pairs of states. The distribution of $\ell$ depends on the true state generating the data.
Given an experiment $\mu$, we denote by $\bar\mu_i$ 
the distribution of $\ell$ conditional on state $i$.%
\footnote{The measure $\bar\mu_i$ is defined as $\bar\mu_i(A)= \mu_i (\{s: \left(\ell_{ij}(s)\right) \in A\})$ for every measurable $A\subseteq\mathbb{R}^{\Theta\times\Theta}$. }

We restrict our attention to experiments where the induced log-likelihood ratios $\left(\ell_{ij}\right)$ have finite moments. That is, experiments such that for every state $i$ and every  vector of integers $\alpha\in\mathbb{N}^{\Theta}$ the expectation $\int_S | \prod_{k\neq i}\ell_{ik}^{\alpha_{k}}|\dd\mu_i$ is finite. We denote by $\mathcal{E}$ the class of all such experiments.\footnote{We refer to $\mathcal{E}$ as a class, rather than a set, since Blackwell experiments do not form a  well-defined set. In doing so we follow a standard convention in set theory \citep*[see, for instance,][p. 5]{jech2013set}.} The restriction to  $\mathcal{E}$ is a technical condition that rules out experiments whose  log-likelihood ratios have very heavy tails, but, to the best of our knowledge, includes all (not fully revealing) experiments commonly used in applications.

The cost of producing information is described by an \textit{information cost function}
\[
C\colon \mathcal{E} \to\mathbb{R}_{+}
\]
assigning to each experiment $\mu \in \mathcal{E}$ its cost $C(\mu)$. In the next section we introduce and characterize four basic properties for information cost functions.

\subsection{Axioms}

Our first axiom postulates that the cost of an experiment should depend only on its informational content.
For instance, it should not be sensitive to the way signal realizations are labelled. In making this idea formal we follow \citet*[Section 4]{blackwell1951comparison}.

Let $q \in \P(\Theta)$ be the uniform prior assigning equal probability to each element of $\Theta$.%
\footnote{Throughout the paper, $\P(\Theta)$ denotes the set of probability measures on $\Theta$ identified with their representation in $\RR^{\Theta}$, so that for every $q \in \P(\Theta)$, $q_i$ is the probability of the state $i$.}
Let $\mu$ and $\nu$ be two experiments, inducing the distributions over posteriors $\pi_{\mu}$ and $\pi_{\nu}$ given the uniform prior $q$. Then $\mu$ dominates $\nu$ in the Blackwell order if $$\int_{\P(\Theta)} f(p) \,\dd \pi_{\mu}(p) \geq \int_{\P(\Theta)} f(p) \,\dd \pi_{\nu}(p) $$ for every convex function $f\colon \P(\Theta) \to \RR$. As is well known, dominance with respect to the Blackwell order is equivalent to the requirement that in any decision problem, a Bayesian decision maker achieves a (weakly) higher expected utility when basing her action on $\mu$ rather than $\nu$. We say that two  experiments are \textit{Blackwell equivalent} if they dominate each other.

It is natural to require the cost of information to be increasing in the Blackwell order. For our main result, it is sufficient to require that any two experiments that are Blackwell equivalent lead to the same cost. Nevertheless, it will turn out that our axioms imply the stronger property of Blackwell monotonicity, as shown by Proposition \ref{prop:llr-monotnoe} below.

\begin{ax}\label{axm:info-content}
If $\mu$ and $\nu$ are Blackwell equivalent, then $C(\nu)=C(\mu).$
\end{ax}

The lower envelope of a cost function assigns to each $\mu$ the minimum cost of producing an experiment that is Blackwell equivalent to $\mu$. If experiments are optimally chosen by a decision maker then we can, without loss of generality, identify a cost function with its lower envelope. This results in a cost function for which Axiom \ref{axm:info-content} is automatically satisfied.

For the next axiom, we study the cost of performing multiple independent experiments. Given two experiments $\mu = (S,(\mu_i))$ and $\nu=(T,(\nu_i))$ we define their product
\[
    \mu \otimes \nu = (S \times T,(\mu_i \times \nu_i))
\]
where $\mu_i \times \nu_i$ denotes the product of the two measures.\footnote{When the set of signal realizations is finite, the measure $\mu_i \times \nu_i$ assigns to each realization $(s,t)$ the probability $\mu_i(s)\nu_i(t)$.} 
Under the experiment $\mu \otimes \nu$, the realizations of both experiments $\mu$ and $\nu$ are observed, and the two observations are independent conditional on the state.
To illustrate, suppose $\mu$ and $\nu$ consist of drawing a random sample from two possible populations. Then $\mu \otimes \nu$ is the experiment where two independent samples, one for each population, are collected.

Our second axiom states that the cost function is additive with respect to combining independent experiments:
\begin{ax}\label{axm:additivity}
The cost of performing two independent experiments is the sum of their costs:
\[
C(\mu \otimes \nu) = C(\mu)+C(\nu)\text{ for all }\mu \text{ and } \nu.
\]
\end{ax}
An immediate implication of Axioms  \ref{axm:info-content} and \ref{axm:additivity} is that a completely uninformative experiment has zero cost. This follows from the fact that  an uninformative experiment $\mu$ is Blackwell equivalent to the product experiment $\mu \otimes \mu$.

In many settings, an experiment can sometimes fail to produce new evidence. The next axiom states that the cost of an experiment is linear in the probability that it will generate information. Given $\mu$, we define a new experiment, which we call a \emph{dilution} of $\mu$ and denote by $\alpha \cdot \mu$. In this new experiment, with probability $\alpha$ the experiment $\mu$ is produced, and with probability $1-\alpha$ a completely uninformative signal is observed. Formally, given $\mu = (S,(\mu_i))$, fix a new signal realization $o \notin S$ and a probability $\alpha\in\left[0,1\right]$. We define 
$$\alpha \cdot \mu =(S\cup\{ o\} ,(\nu_i)),$$ where $\nu_{i}(E) = \alpha \mu_i(E)$ for every measurable $E \subseteq S$, and $\nu_i(\{o\})= 1 - \alpha$. The next axiom specifies the cost of such an experiment:

\begin{ax}
\label{axm:affinity-1}The cost of a dilution $\alpha \cdot \mu$ is linear in the probability $\alpha$:
\[
C(\alpha \cdot \mu)=\alpha\,C(\mu)\text{ for every }\mu \text{ and }\alpha\in\left[0,1\right].
\]
\end{ax}

Our final assumption is a continuity condition. We first introduce a (pseudo)-metric over $\mathcal{E}$. Recall that for every experiment $\mu$, $\bar{\mu}_{i}$ denotes its distribution of
log-likelihood ratios conditional on state $i$. We denote by $d_{tv}$ the total-variation distance.\footnote{That is, $d_{tv}(\bar\mu_i,\bar\nu_i) = \sup|\bar\mu_i(A) - \bar\nu_i(A)|$, where the supremum is over all measurable subsets of $\RR^{\Theta \times \Theta}$.} Given a vector $\alpha\in\mathbb{N}^{\Theta}$, let $M_{i}^{\mu}(\alpha)= \int_S | \prod_{k\neq i}\ell_{ik}^{\alpha_{k}}|d\mu_i$ be the $\alpha$-moment of the vector of log-likelihood ratios $(\ell_{ik})_{k\neq i}$. Given an upper bound $N\geq1$, we define the distance:
\begin{equation*}
d_N(\mu,\nu)=\max_{i\in\Theta}d_{tv}\left(\bar\mu_i,\bar\nu_i\right)+\max_{i\in\Theta}\max_{\alpha\in\left\{ 0,...,N\right\} ^{n}}\left|M_{i}^{\mu}(\alpha) - M_{i}^{\nu}(\alpha)\right|.
\end{equation*}
According to the metric $d_N$, two experiments $\mu$ and $\nu$ are close if, for each state $i$, the induced distributions of log-likelihood ratios are close in total-variation and, in addition, have similar moments, for any moment $\alpha$ lower or equal to $\left( N,\ldots,N\right)$.

\begin{ax}\label{axm:continuity}
For some $N \geq 1$ the function $C$ is uniformly continuous with respect to $d_N$.
\end{ax}

As is well known, convergence with respect to the total-variation distance is a demanding requirement, as compared to other topologies such as the weak topology. So, continuity with respect to $d_{tv}$ is a relatively mild assumption. Continuity with respect to the stronger metric $d_N$ is, therefore, an even milder assumption.\footnote{We discuss this topology in detail in \S\ref{sec:topology}. Any information cost function that is continuous with respect to the metric $d_N$ satisfies Axiom \ref{axm:info-content}. For expositional clarity, we maintain the two axioms as separate throughout the paper.} As we show in Theorem~\ref{thm:mu} in the Appendix, our characterization holds for the case of two states and bounded experiments even if one only imposes Blackwell monotonicity, Axiom~\ref{axm:additivity} and Axiom~\ref{axm:affinity-1}, without requiring continuity.

\subsection{Discussion}

 We now discuss the interpretation of our axioms as well as some limitations imposed by our modeling assumptions. Axiom \ref{axm:additivity} has a simple interpretation. Consider the classical problem of learning the bias of a coin by flipping it multiple times. This experiment could correspond to the act of surveying customers, who either like a product or not, in order to learn whether the product is popular. It could also represent a political party surveying voters to discover the appeal of a potential candidate.  

 Suppose the coin either yields heads 80\% of the time or tails 80\% of the time and that either bias is equally likely. We compare the cost of observing a single coin flip versus a long sequence of coin flips.  Under the additivity axiom, the cost of observing $k$ coin flips is linear in $k$. 

  Additivity assumptions in the spirit of Axiom \ref{axm:additivity} have appeared in multiple parametric models of information acquisition. A standard assumption in Wald's classic model of sequential sampling and its variations is that the cost of acquiring $n$ independent samples is linear in $n$ \citep*[see, e.g.,][]{wald1945sequential, arrow1949bayes}. A similar condition appears in the continuous-time formulation of the sequential sampling problem, where the information structure consists of observing a signal with Brownian noise over a time period of length $t$, under a cost that is linear in $t$ \citep*{dvoretzky1953sequential,chan2017deliberating, morrisstrack2018}. Likewise, in static models where information is acquired by means of normally distributed experiments, a standard specification is that the cost of an experiment is inversely proportional to its variance \citep*[see, e.g.][]{wilson1975informational}. This amounts to an additivity assumption, since the product of two independent normal experiments is Blackwell equivalent to a normal experiment whose precision is the sum of the original precisions. Underlying these different models is the notion that the cost of an additional independent experiment is constant. Axiom \ref{axm:additivity} captures this idea in a non-parametric context, with no a priori restrictions  over the domain of feasible experiments. 


 Axiom \ref{axm:affinity-1} expresses the idea that the marginal cost of increasing the probability of success of an experiment is constant. The axiom is implied by posterior separability---the standard assumption in the literature for cost functions over experiments.\footnote{\label{r1:5}A posterior separable cost function is affine with respect to the distribution of beliefs induced by an experiment. The distribution of beliefs induced by the diluted experiment $\alpha \cdot \mu$ is a convex combination that puts weight $\alpha$ on the distribution generated by $\mu$ and weight $1-\alpha$ on the prior. Thus, under posterior separability the cost of $\alpha \cdot \mu$ is affine in $\alpha$.} It is however, a strictly weaker assumption. We also note that for proving our results it suffices to restrict this axiom to $\alpha=1/2$.\footnote{This axiom admits an additional interpretation. Suppose the decision maker is allowed to randomize her choice of experiment. Then, the property
 \begin{equation}\label{eq:noarb1}
    C(\alpha \cdot \mu) \leq \alpha \, C(\mu)
 \end{equation}
 ensures that the cost of the diluted experiment $\alpha \cdot \mu$ is not greater than the expected cost of performing $\mu$ with probability $\alpha$ and collecting no information with probability $1-\alpha$. Hence, if  \eqref{eq:noarb1} was violated, the experiment $\alpha \cdot \mu$ could be replicated at a strictly lower cost through a simple randomization by the decision maker. Now assume Axiom $\ref{axm:additivity}$ holds, and the decision maker is allowed to perform independent copies of the diluted experiment $\alpha \cdot \mu$ until it succeeds. Then, the converse inequality
 \[
    C(\alpha \cdot \mu) \geq \alpha \, C(\mu)
 \]
 ensures that the cost $C(\mu)$ of an experiment is not greater than the expected cost $(1/\alpha)C(\alpha \cdot \mu)$ of performing the experiment $\alpha \cdot \mu$ until it succeeds.}

 \label{r4:5} The domain of our cost function rules out experiments that with positive probability allow the decision maker to be certain that a state did not happen. Such experiments, if included in the domain, would have infinite cost under our axioms.\footnote{For example, if a cost function $C$ is Blackwell monotone, additive, and assigns strictly positive cost to at least one experiment $\mu$ that is not perfectly revealing, then it must assign infinite cost to a perfectly revealing experiment. Indeed, by Blackwell monotonicity, the cost of the $n$-times repeated experiment $\mu^{\otimes n}$ must always be below the cost of a perfectly informative experiment. By additivity, $C(\mu^{\otimes n}) = n C(\mu)$, and thus a perfectly informative experiment must have infinite cost.} 
 While this is not special to our framework---the same issue applies to Wald's model and others---it is nevertheless an important limitation, since information structures that rule out states with certainty are a common  modeling tool. An example are  partitional information structures, which are standard in information economics. A disadvantage of the LLR cost function is that it cannot be applied in such settings.

 To gain some intuition for the sort of experiments that are ruled out, consider an urn containing 100 balls. Suppose there are only two states: either all balls are red, or all balls are blue. In this case, sampling from the urn perfectly reveals the state, and thus such an experiment cannot be accommodated by the LLR cost. Indeed, it conflicts with the constant marginal cost assumption: If the experiment had finite cost, then repeating it twice would have twice the cost. But repeating the experiment does not provide any additional information, since one sample is enough to reveal the state. Thus, the constant marginal cost assumption fails in this example.
 
 Suppose instead the urn contains either 1 blue ball and 99 red balls, or 1 red ball and 99 blue ones. In this case, drawing from the urn is an experiment that does not exclude states with certainty, and fits with the assumption of additivity. As the number of samples grows, the decision maker obtains more and more accurate statistical evidence of the true state, but without ever reaching full certainty. 

\section{Representation}
\begin{thm}\label{thm:repr-1}
An information cost function $C$ satisfies Axioms \ref{axm:info-content}-\ref{axm:continuity} if and only if there exists a collection $\left(\beta_{ij}\right)_{i, j\in\Theta, i \neq j}$
in $\mathbb{R}_{+}$ such that for every experiment $\mu=(S, (\mu_i))$,
\begin{equation}\label{eq:thm_1}
C(\mu)=\sum_{i,j}\beta_{ij}\int_S \log\frac{\dd \mu_{i}}{\dd \mu_{j}}(s) \, \dd \mu_i (s).
\end{equation}
Moreover, the collection $(\beta_{ij})$ is unique given $C$.
\end{thm}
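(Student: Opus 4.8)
The plan is to prove the two directions separately, with essentially all of the work in the ``only if'' direction.

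\emph{Sufficiency.} First I would verify that the functional in \eqref{eq:thm_1} satisfies the four axioms. Writing $C(\mu)=\sum_{i,j}\beta_{ij}\,\dkl(\mu_i\Vert\mu_j)$ and recalling that $\dkl(\mu_i\Vert\mu_j)=\int \ell_{ij}\,\dd\bar\mu_i$ depends only on the law $\bar\mu_i$ of log-likelihood ratios, Axiom~\ref{axm:info-content} is immediate. Axiom~\ref{axm:additivity} holds because under a product the log-likelihood ratios add, so $\dkl(\mu_i\times\nu_i\Vert\mu_j\times\nu_j)=\dkl(\mu_i\Vert\mu_j)+\dkl(\nu_i\Vert\nu_j)$; Axiom~\ref{axm:affinity-1} holds because a dilution replaces $\bar\mu_i$ by $\alpha\bar\mu_i+(1-\alpha)\delta_0$, and $\delta_0$ adds no mass away from $\ell_{ij}=0$, so each divergence is scaled by $\alpha$. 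For Axiom~\ref{axm:continuity} I would bound $|\dkl(\mu_i\Vert\mu_j)-\dkl(\nu_i\Vert\nu_j)|$ by truncating $\ell_{ij}$ at a level $T$: the truncated part is controlled by $T\,d_{tv}(\bar\mu_i,\bar\nu_i)$, and the tails by a second moment via $\int_{|\ell_{ij}|>T}|\ell_{ij}|\,\dd\bar\mu_i\le M_i^\mu(2e_j)/T$, yielding uniform continuity already for $N=2$.

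\emph{Necessity.} Assume $C$ satisfies the axioms. By Axiom~\ref{axm:info-content}, $C$ is a function of the tuple $(\bar\mu_i)_{i\in\Theta}$, on which Axiom~\ref{axm:additivity} acts by coordinatewise convolution and Axiom~\ref{axm:affinity-1} by mixing each coordinate with $\delta_0$. The key device is a Poissonization identity. Fix $\mu$ and $t>0$ and consider $(\tfrac tn\cdot\mu)^{\otimes n}$; by Axioms~\ref{axm:additivity} and~\ref{axm:affinity-1} its cost equals $n\cdot\tfrac tn\,C(\mu)=t\,C(\mu)$ for every $n$. Under each state its law of log-likelihood ratios is $\bigl(\tfrac tn\bar\mu_i+(1-\tfrac tn)\delta_0\bigr)^{\ast n}$, which converges, by the classical Poisson limit theorem, to the compound Poisson law $\mathrm{CP}(t,\bar\mu_i)$ with rate $t$ and jump distribution $\bar\mu_i$. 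Denoting by $R_t\mu$ the experiment with these compound Poisson laws, I would show this convergence holds in the metric $d_N$ and conclude, by Axiom~\ref{axm:continuity}, that $C(R_t\mu)=t\,C(\mu)$.

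Now I exploit that compound Poisson laws convolve by adding Lévy measures: $\mathrm{CP}(t,\bar\mu_i)\ast\mathrm{CP}(s,\bar\nu_i)$ is compound Poisson with Lévy measure $t\bar\mu_i+s\bar\nu_i$. Combined with Axiom~\ref{axm:additivity}, this makes $\rho\mapsto C$ additive and positively homogeneous in the Lévy measure $\rho=t\bar\mu$, and Axiom~\ref{axm:continuity} upgrades this to an integral representation $C(\mu)=\int G\,\dd\bar\mu_{i_0}$ against a fixed reference state $i_0$, where I use the tilting relation $\dd\bar\mu_i/\dd\bar\mu_{i_0}=\ee^{\ell_{ii_0}}$ to reduce the tuple to a single law. (Equivalently, this step shows $C$ is linear over mixtures of experiments, hence posterior-separable as in the Introduction, since $R_t\mu\otimes R_s\nu$ is Blackwell equivalent to the Poissonization of the mixture $t\mu\oplus s\nu$.) Re-imposing Axiom~\ref{axm:additivity} gives $\mathbb{E}[G(X+Y)]=\mathbb{E}[G(X)]+\mathbb{E}[G(Y)]$ for independent log-likelihood-ratio vectors from any two experiments; since $\mathcal E$ is closed under products and dilutions and rich enough to separate, $G$ must be additive, hence---being continuous and vanishing at $0$ by Axiom~\ref{axm:affinity-1}---linear in the coordinates $\ell_{ij}$. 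Translating this linear $G$ back through the tilting relations yields $C(\mu)=\sum_{i,j}\beta_{ij}\int \ell_{ij}\,\dd\bar\mu_i$, which is \eqref{eq:thm_1}. Finally, $\beta_{ij}\ge 0$ follows from $C\ge 0$ by exhibiting experiments that load on a single ordered pair $(i,j)$, and uniqueness of $(\beta_{ij})_{i\ne j}$ follows because the divergences $\dkl(\mu_i\Vert\mu_j)$, $i\ne j$, can be varied independently (the diagonal terms multiply $\dkl(\mu_i\Vert\mu_i)=0$ and are unidentified).

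I expect the main obstacle to be the Poissonization step: establishing $(\tfrac tn\cdot\mu)^{\otimes n}\to R_t\mu$ in the demanding metric $d_N$ requires quantitative Poisson-approximation bounds in total variation together with convergence of all moments up to order $N$ of the log-likelihood ratios---and verifying $R_t\mu\in\mathcal E$---so that Axiom~\ref{axm:continuity} can be invoked. The secondary difficulty is functional-analytic: making precise the passage from additivity-in-the-Lévy-measure to the integral representation, and the richness argument that turns additivity of $G$ under convolution into pointwise linearity.
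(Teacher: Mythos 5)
Your necessity argument breaks down at the step where you reduce the tuple $(\bar\mu_i)_{i\in\Theta}$ to a single reference law $\bar\mu_{i_0}$ and then claim that additivity of $\mu\mapsto\int G\,\dd\bar\mu_{i_0}$ under convolution forces $G$ to be pointwise additive, hence linear. That inference is false, and the counterexample is the LLR cost itself. Take $i_0=0$ and write $\xi_i=\ell_{i0}$, so that admissibility of $\bar\mu_0$ means $\int \ee^{\xi_i}\,\dd\bar\mu_0=1$ for every $i$ (Lemma~\ref{lem:admis}). The divergence $\dkl(\mu_i\Vert\mu_j)$ is represented against $\bar\mu_0$ by the kernel $G_{ij}(\xi)=(\xi_i-\xi_j)\ee^{\xi_i}$, which is \emph{not} pointwise additive; yet for independent admissible $X\sim\sigma_0$, $Y\sim\tau_0$,
\[
\mathbb{E}\left[G_{ij}(X+Y)\right]=\mathbb{E}\left[(X_i-X_j)\ee^{X_i}\right]\mathbb{E}\left[\ee^{Y_i}\right]+\mathbb{E}\left[\ee^{X_i}\right]\mathbb{E}\left[(Y_i-Y_j)\ee^{Y_i}\right]=\mathbb{E}\left[G_{ij}(X)\right]+\mathbb{E}\left[G_{ij}(Y)\right],
\]
precisely because the normalization $\mathbb{E}[\ee^{X_i}]=\mathbb{E}[\ee^{Y_i}]=1$ holds on the admissible class. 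So that class is not ``rich enough to separate'' in the way you need: point masses at $\xi\neq 0$ are not admissible, and nonlinear kernels of exactly this tilted form act additively in expectation. Worse, if your argument did go through it would prove too much: a pointwise linear $G$ integrated against $\bar\mu_0$ yields only combinations of divergences anchored at the reference state, $C(\mu)=\sum_{j}\beta_{0j}\dkl(\mu_0\Vert\mu_j)$, which contradicts the theorem and your own sufficiency direction (an LLR cost with $\beta_{12}>0$ satisfies the axioms but is not of that restricted form). The innocuous-looking final sentence ``translating this linear $G$ back through the tilting relations'' is exactly where the exponential factors reappear and destroy linearity.

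This is the trap the paper's proof is engineered to avoid: instead of one reference law, it keeps as separate coordinates the moments of $(\ell_{i0})_i$ under \emph{every} state (so the tilting factors are absorbed into the integrating measures), shows via the dilution trick $\mu[r]=(1/r)\cdot\mu^{\otimes r}$ that the cost depends only on finitely many such moments (Lemma~\ref{lem:thm1_c(mu)}), passes to cumulants so that Axiom~\ref{axm:additivity} becomes additivity on a subsemigroup of $\RR^{d}$, proves that this semigroup has nonempty interior (Theorem~\ref{thm:K_int}, via Brouwer's invariance of domain) so that the Cauchy-equation result (Theorem~\ref{thm:cauchy}) yields genuine linearity, and finally uses Axiom~\ref{axm:affinity-1} together with the Leonov--Shiryaev formulas to kill every coefficient of order $|\alpha|>1$. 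Incidentally, the paper never needs your compound-Poisson limit: comparing the two equal-cost sequences $\mu[r]$ and $\nu[r]$, whose $d_N$-distance tends to $0$, sidesteps quantitative Poisson approximation entirely. A smaller but real flaw on the sufficiency side: your truncation bound $M_i^\mu(2e_j)/T$ does not give \emph{uniform} continuity, since second moments are unbounded over $\mathcal{E}$; the fix is the uniform exponential tail bound $\bar\mu_i(\ell_{ij}\leq -t)\leq \ee^{-t}$ (from $\mathbb{E}_{\mu_i}[\ee^{-\ell_{ij}}]=1$), which controls the negative part of $\ell_{ij}$ uniformly across all experiments, while the positive part is already controlled by the first absolute moment appearing in $d_N$.
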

We refer to a cost function that satisfies Axioms 1-4 as a \emph{log-likelihood ratio (LLR) cost}. As shown by the theorem, this class of information cost functions is uniquely determined up to the parameters $\left(\beta_{ij}\right)$. The expression $\int_S \log(\dd \mu_{i}/ \dd \mu_{j}) \dd \mu_i$ is the Kullback-Leibler divergence $\dkl(\mu_{i} \Vert \mu_{j})$ between the two distributions,  a well understood and tractable measure of informational content \citep{kullback1951information}.
The representation \eqref{eq:thm_1} can be rewritten as
\[
C(\mu)=\sum_{i,j}\beta_{ij}\dkl{(\mu_{i} \Vert \mu_{j})}.
\]

A higher value of $\dkl(\mu_{i} \Vert \mu_{j})$ describes an experiment which, conditional on state $i$, produces stronger evidence in favor of state $i$ compared to $j$, as represented by a higher expected value of the log-likelihood ratio $\log\dd \mu_{i}/\dd \mu_{j}$. 
The coefficient $\beta_{ij}$ thus measures the marginal cost of increasing the expected log-likelihood ratio between states $i$ and $j$, conditional on $i$, while keeping all other expected log-likelihood ratios fixed.\footnote{As we formally show in Lemma \ref{lem:domain} in the Appendix, this operation of increasing a single expected log-likelihood ratio while keeping all other expectations fixed is well-defined: for every
experiment $\mu$ and every $\varepsilon>0$, if $\dkl(\mu_i\Vert\mu_j)>0$ then there exists a new experiment $\nu$ such that $\dkl(\nu_i\Vert\nu_j) = \dkl(\mu_i\Vert\mu_j) + \varepsilon$, and all other divergences are equal. Hence the difference in cost between $\nu$ and the experiment $\mu$ is given by $\beta_{ij}$ times the difference $\varepsilon$ in the expected log-likelihood ratio. The result formally justifies the interpretation of each coefficient $\beta_{ij}$ as a marginal cost.}

The specification of the parameters $(\beta_{ij})$ must of course depend on the particular application at hand. Consider, for instance, a doctor who must choose a treatment for a patient displaying a set of symptoms, and who faces uncertainty regarding their cause. In this example, a state of nature $i$ represents a possible pathology affecting the patient. In order to distinguish between two possible diseases $i$ and $j$ it is necessary to collect samples and run tests, whose costs will depend on factors that are specific to the two conditions, such as their similarity, or the prominence of their physical manifestations. These differences in costs can then be reflected by the coefficients $\beta_{ij}$ and $\beta_{ji}$. For example, suppose that  $i$ and $i'$ are two types of viral infections, $k$ is a bacterial infection, and $i$ and $i'$ are difficult to tell apart, but telling $i$ and $k$ apart is easier. This can be captured  by setting $\beta_{ii'} > \beta_{ik}$. In \S\ref{sec:verification} we discuss environments where the coefficients might naturally be assumed to be asymmetric, in the sense that $\beta_{ij} \neq \beta_{ji}$.%
 \footnote{Since we do not impose symmetry axioms, it is in a sense a natural finding that the LLR cost function can capture differences in the costs of learning about different states. It is perhaps more surprising that the cost function has a relatively small set of $n(n-1)$ parameters, where $n$ is the number of states.
 }
 In environments where no pair of states is a priori harder to distinguish than another,
a simple choice is to set all the coefficients $(\beta_{ij})$ to be equal.\footnote{An example common in the literature \citep[e.g.,][]{christie} is that of a detective who has to discover which member of a finite group of people committed a violent crime in some isolated setting, such as a train.} Finally, in the next section  we propose a specific functional form in the more structured case where states represent a one-dimensional quantity.

We end this section by noting that the  LLR cost function is monotone with respect to the Blackwell order:
\begin{prop}
\label{prop:llr-monotnoe}
Let $\mu$ and $\nu$ be experiments such that $\mu$ Blackwell dominates $\nu$. Then every LLR cost $C$ satisfies $C(\mu) \geq C(\nu)$.
\end{prop}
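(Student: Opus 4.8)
The plan is to exploit the posterior representation stated in the introduction: given the uniform prior $q$ appearing in the paper's definition of the Blackwell order, the LLR cost can be written as $C(\mu) = \int_{\P(\Theta)} F(p)\,\dd\pi_\mu(p) - F(q)$, where $F(p) = \sum_{i,j}\beta_{ij}\frac{p_i}{q_i}\log(p_i/p_j)$ and $\pi_\mu$ is the distribution over posteriors induced by $\mu$. Since the definition of Blackwell dominance given in the paper is precisely that $\int f\,\dd\pi_\mu \geq \int f\,\dd\pi_\nu$ for every convex $f:\P(\Theta)\to\RR$, the proposition follows immediately once I show that $F$ is convex: the constant $F(q)$ is common to both experiments, so $C(\mu) - C(\nu) = \int F\,\dd\pi_\mu - \int F\,\dd\pi_\nu \geq 0$.

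First I would verify the convexity of $F$. Writing $F(p) = \sum_{i,j}\frac{\beta_{ij}}{q_i}\,g(p_i,p_j)$ with $g(x,y)=x\log(x/y)$, and recalling that each $\beta_{ij}\geq 0$ and each $q_i>0$, it suffices to show that $g$ is jointly convex on the positive orthant. This is the standard observation that $g$ is the \emph{perspective} of the convex map $t\mapsto t\log t$, namely $g(x,y)=y\,(x/y)\log(x/y)$; perspectives of convex functions are jointly convex, so each summand is convex in $(p_i,p_j)$, hence in $p$, and a nonnegative combination of convex functions is convex.

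Second, I would confirm the posterior representation in the generality needed. For finitely many signal realizations it is a direct computation: writing $p_i(s)=q_i\mu_i(s)/m(s)$ with $m(s)=\sum_k q_k\mu_k(s)$ and $q$ uniform, one has $\log(p_i(s)/p_j(s))=\log(\mu_i(s)/\mu_j(s))$ and $\frac{p_i(s)}{q_i}m(s)=\mu_i(s)$, so that $\int F\,\dd\pi_\mu = \sum_{i,j}\beta_{ij}\sum_s \mu_i(s)\log(\mu_i(s)/\mu_j(s)) = C(\mu)$, while $F(q)=0$ because $q$ is uniform. For general $\mu\in\mathcal{E}$ the same identity holds with densities in place of probabilities and integrals in place of sums.

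The main obstacle is the measure-theoretic bookkeeping in the infinite-signal case: I must ensure that $\pi_\mu$ is a well-defined Borel measure on $\P(\Theta)$, that $F$ is $\pi_\mu$-integrable, and that the integral-of-convex-functions characterization of the Blackwell order applies to the (possibly continuous) experiments in $\mathcal{E}$. Integrability is controlled by the moment conditions defining $\mathcal{E}$ together with the finiteness of each $\dkl(\mu_i\Vert\mu_j)$, and the representation is exactly the one asserted in the introduction, so these are matters of care rather than of new ideas. As a cross-check, the conclusion also follows by a route that avoids the posterior representation entirely: Blackwell dominance is equivalent to $\nu$ being a garbling of $\mu$, and the data-processing inequality gives $\dkl(\nu_i\Vert\nu_j)\leq \dkl(\mu_i\Vert\mu_j)$ for each pair $(i,j)$; multiplying by $\beta_{ij}\geq 0$ and summing yields $C(\nu)\leq C(\mu)$.
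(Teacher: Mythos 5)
Your proposal is correct, but its primary route is genuinely different from the paper's: the paper proves this proposition exactly by the argument you relegate to a ``cross-check''. Namely, it invokes Blackwell's theorem to write each $\nu_i$ as a push-forward $G_*(\mu_i\times\lambda)$ of $\mu_i$ through a garbling, then combines the data-processing inequality with the additivity of $\dkl$ over products to get $\dkl(\nu_i\Vert\nu_j)\leq\dkl(\mu_i\Vert\mu_j)$ for every pair, and sums against $\beta_{ij}\geq 0$. Your main argument instead goes through the Bayesian representation of Proposition~\ref{thm.C_q} at the uniform prior, the convexity of $F$ (your perspective-function argument for $(x,y)\mapsto x\log(x/y)$ is right), and the paper's convex-function definition of the Blackwell order; this is non-circular, since the paper proves Proposition~\ref{thm.C_q} by direct computation, independently of the present proposition. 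What each buys: your route makes monotonicity transparent as ``convexity in the posterior'' and plugs directly into the posterior-separability machinery of \cite{caplin2013behavioral}, whereas the paper's route stays entirely at the level of experiments and sidesteps one wrinkle you pass over silently --- the Blackwell-order definition quantifies over \emph{real-valued} convex $f\colon\P(\Theta)\to\RR$, while your $F$ is extended-valued (it blows up as some coordinate $p_j\to 0$), so before taking $f=F$ you need an approximation step, e.g.\ writing $F$ as the increasing limit of the convex real-valued functions $F_n=\max\{f_1,\dots,f_n\}$ for affine $f_k$ and applying monotone convergence. That step is standard, and the integrability issues you flag are indeed controlled by membership in $\mathcal{E}$, so the gap is one of care rather than of substance; still, the paper's data-processing argument avoids it altogether.
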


\section{Learning about a One-Dimensional State}\label{sec:betas}

 Many information acquisition problems involve learning about a one-dimensional characteristic, so that each state $i$ is a real number. In macroeconomic applications, the state may represent the inflation rate. In perceptual experiments, the state can correspond to the number of red/blue dots on a screen. In a polling problem, the state may correspond to the number of voters voting for a given party.  Alternatively, $i$ might represent a physical quantity to be measured.

 In this section we propose a choice of parameters $(\beta_{ij})$ for one-dimensional information acquisition problems. Given a problem where each state $i\in\Theta \subset \RR$ is a real number, we propose to set each coefficient $\beta_{ij}$ to be equal to $\frac{\kappa}{ (i-j)^{2}}$ for some constant $\kappa \geq 0$. Each $\beta_{ij}$ is therefore inversely proportional to the squared distance between the corresponding states $i$ and $j$. Under this specification, two states that are closer to each other are harder to distinguish.

 The main result of this section shows that this choice of parameters captures two main hypotheses: (a) the difficulty of producing a signal that allows to distinguish between states $i$ and $j$ is a function only of the distance $|i - j|$ between the two, and (b) the cost of a noisy measurement of the state with standard normal error is the same across information acquisition problems. Both assumptions take as a working hypothesis that the cost of making a measurement depends only on its precision, and not on the other aspects of the model, such as the set of states $\Theta$. For example, the cost of measuring the height of a person with a given instrument does not depend on whether the person's height is known to be in  $\Theta = \{190,\ldots,210\}$ or $\Theta' = \{160,\ldots,180\}$.
 
 
 \medskip
 
 Let $W$ be a nonempty open interval of $\RR$; we think of this set as the range of reasonable values of the state, where our hypotheses apply. We denote by $\mathcal{T}$ the collection of finite subsets of $W$ with at least two elements. Each set $\Theta \in \mathcal{T}$ represents the set of states of nature in a different, one-dimensional, information acquisition problem. To simplify the language, we refer to each $\Theta$ as a \textit{problem}. For each $\Theta \in \mathcal{T}$ we are given an LLR cost function $C^\Theta$ with coefficients $(\beta^\Theta_{ij})$. The next two axioms formalize the two hypotheses described above by imposing restrictions, across problems, on the cost of information. 

 The first axiom states that $\beta_{ij}^\Theta$, the marginal cost of increasing the expected LLR between two states $i$ and $j$ is a function of the distance between the two, and is unaffected by changing the values of the other states.

\renewcommand*{\theaxb}{\alph{axb}}
\begin{axb}\label{axb1}
For all $\Theta,\Xi \in \mathcal{T}$ such that $|\Theta|=|\Xi|$, and for all $i,j \in \Theta$ and $k,l \in \Xi$, 
\[
\text{if~~} |i - j| = |k - l|  \text{~~then~~} \beta^\Theta_{ij} = \beta^{\Xi}_{kl}.
\]
\end{axb}


For each $i \in W$ we denote by $\zeta_i$ a normal probability measure on the real line with mean $i$ and variance 1. Given a problem $\Theta$, we denote by $\zeta^{\Theta}$ the experiment $(\RR, (\zeta_i)_{i\in\Theta})$. This is the canonical experiment consisting of a noisy measurement of the state plus standard normal error. Expressed differently, if $i \in \Theta$ is the true state, then the outcome of the experiment $\zeta^\Theta$ is distributed as $s = i + \varepsilon$, where $\varepsilon$ is normally distributed with mean zero and variance $1$ independent of the state. The next axiom states that the cost of such a measurement does not depend on the particular values that the state can take.

\begin{axb}\label{axb2}
For all $\Theta,\Xi \in \mathcal{T}$, $C^\Theta(\zeta^\Theta) = C^{\Xi}(\zeta^{\Xi})$.
\end{axb}

%

Axioms \ref{axb1} and \ref{axb2} lead to a simple parametrization for the coefficients of the LLR cost in one-dimensional information acquisition problems:

\begin{prop}\label{prop:onedimens}
The collection $C^\Theta, \Theta \in \mathcal{T},$ satisfies Axioms \ref{axb1} and \ref{axb2} if and only if there exists a constant $\kappa > 0$ such that for all $i,j \in \Theta \text{~and~} \Theta \in \mathcal{T}$,
\[
    \beta^\Theta_{ij} = \frac{\kappa}{n(n-1)}\,\,\frac{1}{(i-j)^{2}}
\]
where $n$ is the cardinality of $\Theta$.
\end{prop}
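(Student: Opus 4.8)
The plan is to prove both directions, with the substance lying in the ``only if'' part. For the easy ``if'' direction I would take the proposed coefficients $\beta^\Theta_{ij} = \frac{\kappa}{n(n-1)}\frac{1}{(i-j)^2}$ and check the two axioms directly. Axiom \ref{axb1} is immediate, since the coefficient depends on $i,j$ only through $(i-j)^2$ and on $\Theta$ only through $n = |\Theta|$. For Axiom \ref{axb2} I would compute $C^\Theta(\zeta^\Theta)$ from the closed form \eqref{eq:normal-signal} with $m_i = i$ and $\sigma = 1$: each ordered pair of distinct states contributes $\beta^\Theta_{ij}\frac{(i-j)^2}{2} = \frac{\kappa}{2 n(n-1)}$, and there are exactly $n(n-1)$ such pairs, so $C^\Theta(\zeta^\Theta) = \kappa/2$ regardless of $\Theta$. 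This confirms both axioms and also pins down the role of $\kappa$.

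For the ``only if'' direction I would first use Axiom \ref{axb1} to write $\beta^\Theta_{ij} = g_n(|i-j|)$ for some function $g_n : (0,\infty) \to \RR_+$ depending only on $n = |\Theta|$, and set $h_n(d) = g_n(d)\,d^2 \ge 0$. Writing $c$ for the common value $C^\Theta(\zeta^\Theta)$ supplied by Axiom \ref{axb2}, which is a \emph{single} constant across all problems of every cardinality, the normal-cost formula \eqref{eq:normal-signal} turns Axiom \ref{axb2} into the functional equation
\[
\sum_{1 \le p < q \le n} h_n(x_q - x_p) = c \qquad \text{for every } x_1 < \cdots < x_n,
\]
one for each $n \ge 2$. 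The goal is to show that the only solutions are the constants $h_n \equiv c/\binom{n}{2}$, which gives $g_n(d) = \frac{c}{\binom{n}{2}\,d^2} = \frac{2c}{n(n-1)}\frac{1}{d^2}$ and hence the claimed form with $\kappa = 2c$.

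Solving this functional equation simultaneously for all $n$ is the main obstacle. For $n = 2$ it is trivial, since the left-hand side is just $h_2(x_2 - x_1)$, forcing $h_2 \equiv c$. For $n \ge 3$ the key observation is that, after passing to the gap variables $d_k = x_{k+1} - x_k > 0$, a summand $h_n\big(\sum_{k=p}^{q-1} d_k\big)$ depends on $d_1$ iff $p=1$ and on $d_{n-1}$ iff $q=n$, so the only summand depending on both the first and the last gap is the longest-distance term $h_n(d_1 + \cdots + d_{n-1})$. Taking the mixed second difference of the (constant) left-hand side in $(d_1, d_{n-1})$ therefore annihilates every other term and yields
\[
h_n(a + s + t) - h_n(a+s) - h_n(a+t) + h_n(a) = 0
\]
for all $a,s,t > 0$, where $a = x_n - x_1$ ranges over all of $(0,\infty)$. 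This says that the increment $h_n(\cdot + t) - h_n(\cdot)$ is independent of the base point, so as a function of $t$ it is additive; since $h_n \ge 0$ is bounded below, this additive part must be linear, and thus $h_n(d) = \lambda d + b$ is affine. Substituting back into the displayed equation and noting that $\sum_{p<q}(x_q - x_p)$ is \emph{not} constant across configurations (it scales with the spread of the points) forces $\lambda = 0$ and then $b\binom{n}{2} = c$, i.e.\ $h_n \equiv c/\binom{n}{2}$. Finally I would record that $\kappa = 2c \ge 0$, strictly positive whenever the cost is not identically zero, which completes the characterization.
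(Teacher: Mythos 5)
Your proof is correct, and its core coincides with the paper's argument: the paper likewise reduces Axioms \ref{axb1} and \ref{axb2}, via the normal-signal formula \eqref{eq:normal-signal}, to the functional equation \eqref{eq:kappa} for $g(t)=f(t)t^{2}$, and likewise extracts its content by perturbing the two extreme points of a configuration; your mixed second difference in the first and last gaps produces an identity that is exactly the paper's \eqref{eq:double-diff} under the substitution $x=a+t$, $y=a$, $z=s$. The genuine difference is the endgame. The paper resolves \eqref{eq:double-diff} by a self-contained sup/inf argument: since $0\le g\le \kappa$ by non-negativity and \eqref{eq:kappa}, choosing $x,y$ nearly achieving $A=\sup g$ and $B=\inf g$ and translating both by the same $z$ forces $A=B$, so $g$ is constant in one stroke. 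You instead observe that the increment $\phi(t)=h_n(x+t)-h_n(x)$ is base-point independent, hence additive in $t$, invoke the classical fact that an additive function bounded below on a half-line is linear (your justification via $h_n\ge 0$ does suffice, since fixing one $x_0$ gives the uniform bound $\phi(t)\ge -h_n(x_0)$), conclude that $h_n$ is affine, and then need one extra step --- that $\sum_{p<q}(x_q-x_p)$ is not configuration-invariant --- to kill the slope. Both finishes are valid. The paper's buys self-containedness: it never appeals to automatic linearity for Cauchy's equation (note that the paper's own Theorem \ref{thm:cauchy} would not apply directly here, since your $\phi$ can take negative values, so you genuinely need the classical theorem). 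Yours buys a more mechanical, systematic reduction, and your write-up is more complete on two small points the paper glosses over: the trivial ``if'' direction, with the explicit bookkeeping $C^\Theta(\zeta^\Theta)=\kappa/2$ (i.e.\ $\kappa=2c$), and the caveat that the axioms only force $\kappa\ge 0$, with $\kappa>0$ precisely when the cost is not identically zero.
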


 Thus, under Axioms \ref{axb1} and \ref{axb2} each coefficient $\beta^\Theta_{ij}$ is decreasing in the distance between the states, so that distinguishing states that are closer to each other is more costly. Each coefficient is also divided by a factor $n(n-1)$ that normalizes the cost with respect to the number of states. This is an implication of Axiom b, which states that the cost of performing a noisy measurement does not depend on the particular values the state can take. As we show in the proof, the quadratic term $(i-j)^2$ in the expression of the coefficients is related to the assumption, in the same axiom, of normally distributed noise. In the Appendix we show how the results can be extended to different families of distributions.
 
 Applied to normal experiments, Proposition~\ref{prop:onedimens} implies that for any $\Theta \in \mathcal{T}$, a normal experiment with mean $i$ and variance $\sigma^2$ has cost $\kappa\sigma^{-2}$ proportional to its precision. Thus, this functional form generalizes a specification often found in the literature---where the cost of a normal experiment is assumed to be proportional to its precision \citep*{wilson1975informational}---to arbitrary information structures that are not necessarily normal.
 
 As we will see in \S\ref{sec:stochastic}, the specification of Proposition~\ref{prop:onedimens} allows to compute numerical solutions, and thus can be useful for deriving quantitative predictions. At the same time, this functional form may be too simple to capture certain intuitive comparative statics with respect to changes of the state space. For example, the precision of a measurement made using a measuring tape is quite different when measuring a person's height than when measuring the length of a field. More generally, any measurement instrument has a range of reliability, and as one moves toward the extremes it becomes noisier. We partially address this issue by allowing the state to only take value in some interval $W \subseteq \RR$.
 
 Axiom~\ref{axb1} assumes that only the  distance between states determines the cost of an experiment. But in many situations states with a given distance are harder to distinguish at larger scales. Consider, for instance, a subject in a laboratory experiment who is asked to guess the number of pennies in a jar. A problem where this state can take the values either 1 or 2 is easier than a problem where the state can take the values 101 or 102.
 
 Such examples form a well known empirical regularity in psychophysics, known as \textit{Weber's Law}, according to which the change in stimulus intensity that is necessary for subjects to exhibit a certain response is a constant fraction of the starting intensity of the stimulus. A way to model such situations is to change the units in which states are measured by applying a logarithmic transformation to the states.  This is equivalent to changing Axiom~\ref{axb1} to consider ratios between states instead of differences, and changing Axiom~\ref{axb2} to consider log-normal measurement errors instead of normal. The resulting coefficients are
 \[
    \beta^\Theta_{ij} = \frac{\kappa}{n(n-1)}\,\,\frac{1}{(\log i/j)^2}\,.
 \]
 This results in predictions in line with Weber's Law, making it easier to distinguish 1 from 2 than 101 from 102.

\section{Illustrative Examples}\label{sec:examples}

\subsection{LLR Cost for Normal and Binary Experiments}
Closed form solutions for the Kullback-Leibler divergence between standard distributions such as normal, exponential or binomial, are readily available. This makes it immediate to compute the cost  of common parametric families of experiments.

\paragraph{Normal Experiments.} Consider a normal experiment $\mu^{m, \sigma}$ according to which the signal $s$ is given by
\[
    s = m_i + \varepsilon
\]
where the mean $m_i \in \RR$ depends on the true state $i$, and $\varepsilon$ is state independent and normally distributed with standard deviation $\sigma$. In this example, each $m_i$ is a feature of the information structure: choosing an experiment where the distances between states $|m_i - m_j|$ are higher provides stronger information about the states.

By substituting \eqref{eq:thm_1} with the well-known expression for the Kullback-Leibler divergence between normal distributions, we obtain that the cost of such an experiment is given by
\begin{equation}\label{eq:normal-signal}
    C(\mu^{m, \sigma}) = \sum_{i,j}\beta_{ij}\frac{(m_{j}-m_{i})^{2}}{2\sigma^{2}}\,.
\end{equation}
The cost is decreasing in the variance $\sigma^2$, as one may expect. Increasing $\beta_{ij}$ increases the cost of an experiment $\mu^{m, \sigma}$ by a factor that is proportional to the squared distance between the means of the two experiments.

\paragraph{Binary Experiments.}
Another canonical example is the binary-binary setting in which the set of states is $\Theta = \{H,L\}$, and the experiment $\nu^p = (S,(\nu_i))$ is also binary: $S = \{0,1\}$, $\nu_H = B(p)$ and $\nu_L = B(1-p)$ for some $p > 1/2$, where $B(p)$ is the Bernoulli distribution on $\{0,1\}$ assigning probability $p$ to 1. In this case, the cost increases in $p$ and given by
\begin{equation}\label{eq:binary-signal}
    C(\nu^p) = (\beta_{HL}+\beta_{LH})\left[p\log\frac{p}{1-p}+(1-p)\log\frac{1-p}{p}\right]\,.
\end{equation}

\subsection{Hypothesis Testing}

 In this section, we apply the log-likelihood ratio cost to a standard hypothesis testing problem. We study a decision maker performing an experiment with the goal of learning about a hypothesis, i.e.\ whether the state is in a subset $H \subset \Theta$.

 We consider an experiment that reveals with some probability whether the hypothesis is true or not, and study how its cost depends on the structure of $H$. For a given hypothesis $H$ and a precision $\alpha$ let $\mu$ be the symmetric binary experiment with signal realizations $S = \{H,H^c\}$, where $H^c$ denotes the complement of $H$:
\begin{equation}
    \mu_i( s ) = \begin{cases}
    \alpha &\text{ for } i \in s\\
    1-\alpha &\text{ for } i \notin s
    \end{cases} 
\end{equation}
Conditional on any state, this experiment yields a correct signal with probability $\alpha$.
Under LLR cost, the cost of such an experiment is given by
\begin{equation}\label{eq:hypothesis}
    \left(\sum_{i \in H, j \in H^c}\beta_{ij} + \beta_{ji}\right)\left(\alpha \log\frac{\alpha}{1 - \alpha} + (1 - \alpha) \log\frac{1 - \alpha}{\alpha}\right)
\end{equation}
The first term captures the difficulty of discerning between $H$ and $H^c$. The harder the states in $H$ and $H^c$ are to distinguish, the larger the sum of the coefficients $\beta_{ij}$ and $\beta_{ji}$ will be, and the more costly it will thus be to learn whether the hypothesis $H$ is true.
The second term is monotone in the precision $\alpha$ and is independent of the hypothesis.
We illustrate with an example how this captures the fact that testing two different hypotheses can lead to very different costs even if they involve the same number of states.

\label{gdp}\paragraph{Learning about the GDP.} For concreteness, we take a state to be a natural number $i$ in the interval $\Theta = \{20000,\ldots,80000\}$, representing the current US GDP per capita. We consider the following two hypotheses:
\begin{enumerate}
\item[(H1)] The GDP is above 50000.
\item[(H2)] The GDP is an even number.
\end{enumerate}
Intuitively, producing enough information to answer with high accuracy whether H1 is true should be less expensive than producing enough information to answer whether H2 is true, a practically impossible task. 
Our model captures this intuition: As the state is one-dimensional, we set $\beta_{ij} = \kappa/(i - j)^2$ following \S\ref{sec:betas}; the same qualitative conclusion will hold as long as $\beta_{ij}$ is strictly decreasing in the distance $|i-j|$. Then
\begin{align*}
    \sum_{i \in \mathit{H1}, j \in \mathit{H1}^c}\beta_{ij} + \beta_{ji} \approx 22 \, \kappa\hspace{2cm}
    \sum_{i \in \mathit{H2}, j \in \mathit{H2}^c}\beta_{ij} + \beta_{ji} \approx 148033\, \kappa.
\end{align*}
That is, learning whether the GDP is even or odd is by several orders of magnitude more costly than learning whether the GDP is above or below $50000$.\footnote{\label{roland-example}Beyond the challenge of learning about the state, which is the focus of this paper, it might be computationally difficult to determine the set that corresponds to a given hypothesis. Consider, for example, the hypothesis (H1) The number of pages in this manuscript is an even number, vs the hypothesis (H2) The number of pages in this manuscript is greater than $\sqrt{4000}$. The relative ``distance'' properties of the states are in both cases exactly the same as in the GDP example, but the cost of telling states apart is considerably higher in the high-distance case than in the low distance one. We thank the editor for suggesting this example.}

It is useful to compare these observations with the results that would be obtained under mutual information and a uniform prior on $\Theta$. In such a model, the cost of a symmetric binary experiment with precision $\alpha$ is determined solely by the cardinality of $H$. In particular, under mutual information learning whether the GDP is above or below $50000$ is equally costly as learning whether it is even or odd. This is a well known property of cost functions that are invariant with respect to a relabelling of the states.

 \section{Information Acquisition in Decision Problems}
\label{sec:stochastic}

 In this section we study the implications of the log-likelihood ratio cost function for decision problems. We consider a decision maker choosing an action $a$ from a finite set $A$. The payoff from $a$ depends on the state $i$ and is given by $u(a,i)$. The agent is endowed with a full-support prior $q$ over the set of states. Before making her choice, the agent can acquire an experiment $\mu \in \mathcal{E}$ at cost $C(\mu)$, where $C$ is an LLR cost function where the coefficients $(\beta_{ij})$ are assumed to be positive. 
 
 As is well known, for a cost function that is monotone with respect to the Blackwell order, it is without loss of generality to restrict attention to experiments where the set of realizations $S$ equals the set of actions $A$, and to assume that upon observing a signal $s = a$ the decision maker will choose the action recommended by the signal. Throughout this section, we will therefore identify an experiment $\mu$ with a vector of probability measures over actions $\mu \in \P(A)^{n}$ where $n=|\Theta|$.

 An optimal experiment $\mu^\star = (\mu^\star_i)$ solves
 \begin{equation} \label{eq:choice-probabilities}
	\mu^\star \in \argmax_{\mu \in \P(A)^n} \, \sum_{i \in \Theta} q_i \left( \sum_{a \in A} \mu_i (a) u(a,i) \right) - C(\mu)  \,.
 \end{equation}
 Hence, the optimal action $a$ is chosen in state $i$ with probability $\mu^\star_i(a)$. The maximization problem  \eqref{eq:choice-probabilities} is well behaved: the maximand is upper-semicontinuous and concave (see Proposition~\ref{prop:llr-convex} in the Appendix), and there always exists an optimal solution.\footnote{To establish existence of an optimal solution, recall that the Kullback-Leibler divergence $\dkl \colon \P(A) \times \P(A) \to [0,\infty]$ is a lower-semicontinuous function \citep[][Lemma 1.4.3]{dupuis2011weak}. The maximand in \eqref{eq:choice-probabilities}, being a sum of upper-semicontinuous functions, is upper-semicontinuous. Since $\P(A)^n$ is compact, the problem admits a solution.} Thus, an optimal experiment can be found by applying standard methods in concave optimization.%
 
 It is without loss of generality to restrict attention to choice probabilities where an action that is chosen with strictly positive probability in one state is chosen with strictly positive probability in every state, since otherwise the experiment is not in the domain $\mathcal{E}$.


\subsection{Implications for Optimal Choice Probabilities}

 We obtain a characterization of the decision maker's optimal choice probabilities. The characterization is based on the study of first-order conditions, and is therefore analogous to that obtained by \cite*{matvejka2015rational} for the case of mutual information cost.

 The result is based on a standard economic intuition. For choice probabilities to be optimal, the marginal benefit of choosing an action $a$ marginally more often then a different action $b$ must exactly offset its marginal cost. Formally, given a vector $\mu$ of choice probabilities, we denote by $\mathrm{supp}(\mu)$ the support of $\mu$, i.e.\ the set of actions which are played with strictly positive probability under $\mu$.\footnote{That is, $\mathrm{supp}(\mu) = \{a \in A\colon \mu_i(a) > 0 \text{ for some } i \in \Theta \}\,.$ } Given two actions $a$ and $b$ in the support of $\mu$, consider perturbing $\mu$ by increasing  $\mu_i(a)$  while decreasing $\mu_i(b)$ by the same amount. The marginal benefit of this perturbation is denoted by $\mathrm{MB}_i(a,b)$ and is equal to
 \[
    \mathrm{MB}_i(a,b) = q_i \left[ u(a,i) - u(b,i) \right].
 \]
 Such a transfer of probabilities has an effect on the information cost of the experiment $\mu$. This is given by the expression:
 \begin{equation}\label{eq:mc}
    \mathrm{MC}_i(a,b) = \sum_{j \in \Theta} \beta_{ij}\left(\log \frac{\mu_i(a)}{\mu_j(a)} - \log \frac{\mu_i(b)}{\mu_j(b)} \right) - \sum_{j \in \Theta} \beta_{j i}\left( \frac{\mu_j(a)}{\mu_i(a)}  - \frac{\mu_j(b)}{\mu_i(b)} \right).
 \end{equation}
 It measures the change in information acquisition cost necessary to choose action $a$ marginally more often and action $b$ marginally less often. For the choice probabilities $\mu$ to be chosen optimally, this change in information cost must equal the difference $q_i \left[ u(i,a) - u(i,b) \right]$ in expected benefits. This is the content of the next proposition.
 
%
 \begin{prop}\label{prop:foc}
Let $\mu = (\mu_i)_{i\in\Theta}$ be the vector of choice probabilities that solves the optimization problem \eqref{eq:choice-probabilities}. Then, for every state $i \in \Theta$  it holds that 
 \begin{equation}\label{eq:foc-stochastic-chocice}
    \mathrm{MB}_i(a,b) = \mathrm{MC}_i(a,b) \, \text{~~~for all~} a,b \in \mathrm{supp}(\mu)\,.
 \end{equation}

\end{prop}

 Figure~\ref{fig:focs} illustrates this result in a simple decision example with two states and two actions where the decision maker's goal is to match the state. 
 Proposition \ref{prop:foc} characterizes the optimal choice probabilities in terms of necessary first-order conditions. These conditions are in general not sufficient, because they do not verify that the support of $\mu$ is optimal. In the case of mutual information, \cite*{caplin2016rational} and \cite*{denti2020note} give a characterization of the set of actions that are taken with positive probability, and arrive at first-order conditions that are both sufficient and necessary. We do not know whether analogous first-order conditions can be obtained for the LLR cost function.
 
 
 \begin{figure}[t!]
    \centering
    \includegraphics[scale=0.8]{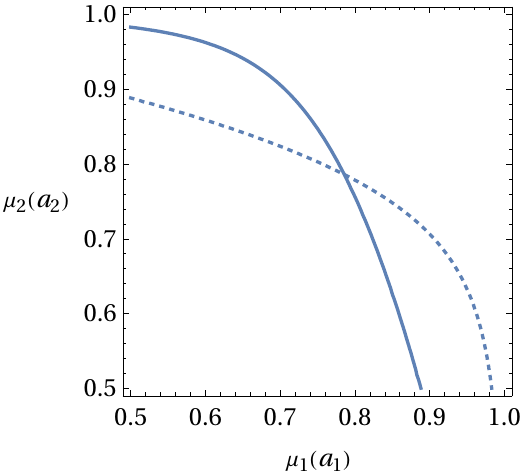}
    \caption{A decision problem where $\Theta = \{1,2\}$ and $A = \{a_1,a_2\}$, the prior $q$ is uniform, $\beta_{12} = \beta_{21} = 1$, and payoffs are $u_1(a_1) = u_2(a_2) = 3$ and $u_1(a_2) = u_2(a_1) = 0$. The solid line is the locus of choice probabilities such that $\mathrm{MB}_1(a_1,a_2) = \mathrm{MC}_1(a_1,a_2)$. The dotted line is the locus where $\mathrm{MB}_2(a_2,a_1) = \mathrm{MC}_2(a_2,a_1)$. The optimal vector of choice probabilities is given by the intersection of the two curves.}
    \label{fig:focs}
\end{figure}
 
 

\subsection{Continuity of Choice Probabilities}



 A feature of the LLR cost is its ability to model the fact that closer states are harder to distinguish, in the sense that acquiring information that finely discriminates between them is more costly. This, in turn, suggests that choice probabilities cannot vary abruptly across nearby states. 

 To formalize this intuition, we assume that the state space $\Theta$ is endowed with a distance $d\colon \Theta \times \Theta \to \RR$. 
 We say that \emph{nearby states are hard to distinguish} if for all $i,j\in \Theta$
\begin{equation}
    \label{eq:hard}
\beta_{ij} \geq \frac{1}{d(i,j)^2} \,.
\end{equation}

Under this assumption the cost of acquiring information that discriminates between states $i$ and $j$ is high for states that are close to each other. 
Our next result shows that when nearby states are hard to distinguish, the optimal choice probabilities are Lipschitz continuous in the state: the agent will choose actions with similar probabilities in similar states. For this  result, we denote by $\Vert u \Vert = \max_{i,a}|u(a,i)|$ the norm of the decision maker's utility function.

%
%
 \begin{prop}[Continuity of Choice]\label{prop:choice-continuity} Suppose that nearby states are hard to distinguish. Then the optimal choice probabilities $\mu^\star$ solving \eqref{eq:choice-probabilities} are uniformly Lipschitz continuous with constant $\sqrt{\Vert u \Vert}$, i.e.\ satisfy
 \begin{equation}\label{eq:Lipschitz}
 	\sum_{a \in A} \left| \mu^\star_i(a) - \mu^\star_j(a) \right| \leq \sqrt{ \Vert u \Vert} \, d(i,j) \, \text{~~~for all~} i,j \in \Theta.
 \end{equation}
 \end{prop}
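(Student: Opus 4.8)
The plan is to exploit the fact that the LLR cost directly penalizes the divergence between the state-contingent action distributions: the term $\beta_{ij}\,\dkl(\mu_i\Vert\mu_j)$ enters $C(\mu)$ additively, and the hypothesis that nearby states are hard to distinguish forces its weight $\beta_{ij}$ to be large precisely when $d(i,j)$ is small. The argument then rests on three ingredients: (i) a uniform a priori bound on the total cost $C(\mu^\star)$ paid at the optimum; (ii) the nonnegativity of each summand of $C$, which lets me pass from a bound on the total cost to a bound on each individual divergence $\dkl(\mu^\star_i\Vert\mu^\star_j)$; and (iii) Pinsker's inequality, which converts this information-theoretic bound into the desired pointwise bound on $|\mu^\star_i(a)-\mu^\star_j(a)|$.

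First I would bound $C(\mu^\star)$. Since an uninformative experiment is Blackwell equivalent to $\mu\otimes\mu$ and hence has zero cost (as observed after Axiom~\ref{axm:additivity}), the decision maker can always secure the no-information payoff $V_0=\max_{a}\sum_i q_i u(a,i)$ at zero cost. Optimality of $\mu^\star$ therefore gives
\[
\sum_{i\in\Theta} q_i \sum_{a\in A}\mu^\star_i(a)\,u(a,i) - C(\mu^\star) \;\geq\; V_0 .
\]
Because the gross payoff on the left is at most $\Vert u\Vert$ while $V_0\geq-\Vert u\Vert$, rearranging yields the crude but sufficient bound $C(\mu^\star)\leq 2\Vert u\Vert$.

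Next, since every summand of $C(\mu^\star)=\sum_{k,l}\beta_{kl}\,\dkl(\mu^\star_k\Vert\mu^\star_l)$ is nonnegative, each single term is dominated by the total: $\beta_{ij}\,\dkl(\mu^\star_i\Vert\mu^\star_j)\leq C(\mu^\star)\leq 2\Vert u\Vert$. The hardness hypothesis $\beta_{ij}\geq 1/d(i,j)^2$ then gives $\dkl(\mu^\star_i\Vert\mu^\star_j)\leq 2\Vert u\Vert\,d(i,j)^2$. Finally, for each fixed action $a$ the singleton $\{a\}$ is a single event, so $|\mu^\star_i(a)-\mu^\star_j(a)|\leq d_{tv}(\mu^\star_i,\mu^\star_j)$, and Pinsker's inequality $d_{tv}(\mu^\star_i,\mu^\star_j)\leq\sqrt{\tfrac12\,\dkl(\mu^\star_i\Vert\mu^\star_j)}$ yields
\[
\bigl|\mu^\star_i(a)-\mu^\star_j(a)\bigr| \;\leq\; \sqrt{\tfrac12\cdot 2\Vert u\Vert\,d(i,j)^2} \;=\; \sqrt{\Vert u\Vert}\,d(i,j),
\]
which is exactly \eqref{eq:Lipschitz}. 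Note that the unique optimizer $\mu^\star$ exists and has mutually absolutely continuous components by the strict concavity established in Proposition~\ref{prop:llr-convex}, so every divergence above is finite.

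I expect no serious obstacle: the argument is essentially a bookkeeping combination of optimality and Pinsker, and the crude cost bound $C(\mu^\star)\leq 2\Vert u\Vert$ is already tight enough to deliver exactly the constant $\sqrt{\Vert u\Vert}$. The only points needing care are invoking the probability-normalized form of Pinsker's inequality and checking the direction of the KL divergence — but the hypothesis $\min\{\beta_{ij},\beta_{ji}\}\geq 1/d(i,j)^2$ controls both $\beta_{ij}$ and $\beta_{ji}$, so either ordering works. For the variant noted in the footnote, replacing the exponent $2$ by an arbitrary $\gamma>0$ in the hardness condition gives $\dkl(\mu^\star_i\Vert\mu^\star_j)\leq 2\Vert u\Vert\,d(i,j)^\gamma$ and hence the Hölder bound $|\mu^\star_i(a)-\mu^\star_j(a)|\leq\sqrt{\Vert u\Vert}\,d(i,j)^{\gamma/2}$, which reduces to Lipschitz continuity precisely when $\gamma=2$.
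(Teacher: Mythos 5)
Your proof is correct and delivers exactly the stated constant, but the two steps that do the real work differ from the paper's. The paper follows the same skeleton---bound the cost at the optimum, isolate one pairwise term by nonnegativity, convert a divergence bound into a bound on choice probabilities---but it executes the conversion by keeping the \emph{symmetrized} divergence for a pair $\{k,m\}$, writing $\dkl(\mu_k\Vert\mu_m)+\dkl(\mu_m\Vert\mu_k)=\sum_a|\mu_k(a)-\mu_m(a)|\,\bigl|\log\mu_k(a)-\log\mu_m(a)\bigr|$, dropping all actions but $\hat a$, and then invoking the elementary inequality $|\log x-\log y|\ge|x-y|$ for $x,y\in(0,1]$; you instead bound a single-direction divergence and apply Pinsker's inequality. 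Both conversions are valid, but they interact differently with the first step: the paper asserts without argument that $C(\mu^\star)\le\Vert u\Vert$, and that is precisely what its route needs in order to end at $\sqrt{\Vert u\Vert}$; when payoffs can be negative, the comparison against the zero-cost uninformative experiment only justifies $C(\mu^\star)\le\Vert u\Vert-\max_a\sum_i q_iu(a,i)\le 2\Vert u\Vert$ (your bound), under which the paper's conversion would yield the weaker constant $\sqrt{2\Vert u\Vert}$. The factor $\tfrac12$ in Pinsker's inequality exactly absorbs this factor of $2$, so your argument reaches $\sqrt{\Vert u\Vert}$ from the honestly justified cost bound---a tidier piece of bookkeeping than the original, whose claimed bound does hold under, e.g., a normalization making the no-information value nonnegative. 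Your treatment of the general exponent $\gamma$ reproduces the ``slightly stronger result'' that the paper actually proves, and your remark that either direction of the divergence suffices mirrors the paper's use of $\min\{\beta_{km},\beta_{mk}\}$; the side appeal to Proposition~\ref{prop:llr-convex} for existence and interiority is not load-bearing, since finiteness of every divergence at the optimum already follows from your cost bound.
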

 

  Lipschitz continuity is a standard notion of continuity in discrete settings, such as the one of this paper, where the relevant variable $i$ takes finitely many values. A crucial feature of the bound \eqref{eq:Lipschitz} is that the Lipschitz constant depends only on the norm $\Vert u \Vert$ of the utility function, independently of the exact form of the coefficients $(\beta_{ij})$, and of the number of states.\footnote{
  Proposition \ref{prop:choice-continuity} suggests that the analysis of choice probabilities might be extended to the case where the set of states $\Theta$ is an interval in $\RR$, or, more generally, a metric space. Given a (possibly infinite) state space $\Theta$ endowed with a metric, and a sequence of finite discretizations $(\Theta_n)$ converging to $\Theta$, the bound \eqref{eq:Lipschitz} implies that if the corresponding sequence of choice probabilities converges, then it must converge to a collection of choice probabilities that are continuous, and moreover Lipschitz.
  }
  In addition, assumption \eqref{eq:hard} can be generalized to arbitrary ordinal transformations of the distance $d$. The proof of Proposition~\ref{prop:choice-continuity} shows that if the coefficients satisfy $\beta_{ij} \geq 1/f(d(i,j))^2$ for a monotone increasing function $f$, then the conclusion of the proposition holds with the right hand side of \eqref{eq:Lipschitz} replaced with $\sqrt{ \Vert u \Vert} \, f(d(i,j))$.
 
  

  This result highlights a contrast between the predictions of mutual information cost and LLR cost. Mutual information predicts behavior that displays a discontinuity with respect to the state (see \S\ref{sec:perception-tasks} for an example). Under LLR cost, when nearby states are harder to distinguish, the change in choice probabilities across states can be bounded by the distance between them.
 
  
  This difference has stark implications in coordination games.
  \cite*{morris2016coordination} study information acquisition in coordination problems. In their model, continuity of the choice probabilities with respect to the state leads to a unique equilibrium; if continuity fails, then there are multiple equilibria. This suggests that different choices of information cost can lead to different predictions in coordination games and their economic applications. 


 
\subsection{Comparative Statics with Respect to the Coefficients $\beta_{ij}$}

 While so far we have focused on the effect that the coefficients $\beta_{ij}$ have on the cost of a given experiment, we now address the question of their effect on behavior. The next proposition is a comparative statics result describing how choice probabilities vary with the parameters $\beta_{ij}$.
 
 \begin{prop}\label{prop:comparative-statics-in-beta}
 Consider a decision problem, and let $\mu$ and $\mu'$ be the optimal choice probabilities obtained under an LLR cost function with coefficients $(\beta_{ij})$ and $(\beta'_{ij})$, respectively. Then
 \[
    \sum_{i \neq j} \big(\beta'_{ij} - \beta_{ij}\big)\big(\dkl(\mu'_i \Vert \mu'_j\big) - \dkl\big(\mu_i \Vert \mu_j)\big) \leq 0. 
 \]
 \end{prop}
 
 All other things equal, an increase of the coefficient $\beta_{ij}$ decreases the Kullback-Leibler divergence $D(\mu_i \Vert \mu_j)$ between the corresponding optimal choice probabilities, and thus makes the decision maker's behavior more similar in the two states.
 
 Proposition~\ref{prop:comparative-statics-in-beta} follows from the same logic underlying the law of supply in standard microeconomic models of production. Under the LLR cost function, the decision maker solves an optimization problem that is mathematically equivalent to a profit maximization problem. Each expected log-likelihood ratio $\dkl(\mu_i \Vert \mu_j)$ is an intermediate ``input'' which accrues to the decision maker's expected payoff. Each such input is ``priced'' according to a linear price $\beta_{ij}$. The comparative statics described by the result follows from such a linearity property, together with a standard revealed-preference argument.

\subsection{Identifying the Cost from Observed Choices}\label{sec:identifying-the-cost}

 Proposition~\ref{prop:foc} can be applied to the problem of identifying and testing the LLR model from observed choices. We illustrate this in the context of a simple example.
 We consider a binary choice problem where we are given two a priori equally likely states $\Theta = \{1,2\}$. The agent can take one of two actions, $a_1$ and $a_2$, and receives a payoff $v > 0$ if the action matches the state and $0$ otherwise.

 An analyst observes the agent's choice probabilities $(\mu_i(a))_{i \in \Theta, a \in A}$, and is interested in testing if such probabilities are consistent with LLR cost. This is true if there exist coefficients $(\beta_{12},\beta_{21})$ that satisfy equation \eqref{eq:foc-stochastic-chocice}. The equation simplifies to
\begin{equation}\label{eq:foc-example}
\begin{aligned}
    \frac{v}{2} &= -\left[\beta_{12} \left( \log l_1 - \log l_2 \right) +\beta_{21}\left( l_1 - l_2 \right)\right]\\
    -\frac{v}{2} &= -\left[\beta_{21} \left( -\log l_1 + \log l_2 \right) +\beta_{12}\left( 1/l_1-1/l_2\right)\right] \,.
    \end{aligned}
\end{equation}
where $l_1= \frac{\mu_2(1)}{\mu_1(1)}$ and $l_2 = \frac{\mu_2(2)}{\mu_1(2)}$.
Rearranging the above conditions yields that one can infer the information cost parameters $(\beta_{ij})$ from her choice probabilities $\mu$ as
\begin{equation}
\label{eq:identified-beta-example}
\begin{aligned}
    \beta_{12} &= \frac{v}{2} \frac{l_2-l_1+\log \frac{l_1}{l_2}}{\frac{(l_1-l_2)^2}{l_1 l_2}- (\log \frac{l_1}{l_2})^2} \hspace{2cm}
    \beta_{21} = \frac{v}{2} \frac{ \frac{l_2-l_1}{l_1 l_2} +  \log \frac{l_1}{l_2}}{ \frac{(l_1-l_2)^2}{l_1 l_2}-(\log \frac{l_1}{l_2})^2} \,.
\end{aligned}
\end{equation}

 For example, if the agent takes the correct action $80\%$ of the time in state $1$ and $60\%$ of the time in state $2$, we have that $(\mu_1(1),\mu_1(2),\mu_2(1),\mu_2(2))=(0.8,0.2,0.4,0.6)$ and the above formula yields that $(\beta_{12},\beta_{21}) \approx (0.37 v,  -0.07 v)$.
As the implied $\beta_{21}$ is negative these choice probabilities are inconsistent with any LLR cost function and this type of choice behavior would reject our model.
In contrast, if the agent takes the correct action $80\%$ of the time in state $1$ and $70\%$ of the time in state $2$, we have that $(\mu_1(1),\mu_1(2),\mu_2(1),\mu_2(2))=(0.8,0.2,0.3,0.7)$ which implies that $(\beta_{12},\beta_{21})\approx(0.18 v,0.03 v)$, and thus that this choice behavior can be explained by an LLR cost.
\begin{figure}[t!]
    \centering
    \includegraphics[scale=0.8]{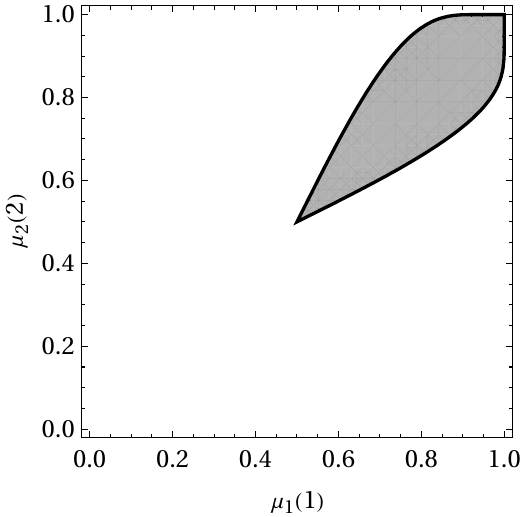}
    \caption{The probabilities of choosing correctly in state 1 and state 2 that are consistent with LLR cost.}
    \label{fig:consistent-prob}
\end{figure}
Figure~\ref{fig:consistent-prob} more generally depicts all probabilities of choosing correctly in state 1 and state 2 that are consistent with LLR cost.

 This example illustrates how an analyst could use choice data to either reject LLR cost or to identify the information cost parameters $(\beta_{12},\beta_{21})$. In Appendix~\ref{sec:identification} we formally show that choice probabilities are consistent with LLR cost if and only if a solution of the form \eqref{eq:identified-beta-example} exists.
 
 In general, when there are more than two states and actions the analyst might need data from multiple decision problems to point identify $\beta$. For a general decision problem  the model admits $|\Theta|\Big(|\Theta|-1\Big)$ degrees of freedom and \eqref{eq:foc-stochastic-chocice} imposes $|\Theta| \times \frac{1}{2} |A| \times (|A|-1)$ linear equations on $\beta$ which suggests that to identify the analyst needs to observe behavior in
\[
    \frac{|\Theta|-1}{\frac{1}{2} |A| \times (|A|-1)}
\]
decision problems. Given the data, solving numerically from the coefficients $\beta$ is easy as the corresponding system of equations is linear.\footnote{Due to the linear structure of the implied restrictions, one could also construct finite sample tests for the LLR model using standard econometric methods, but this is beyond the scope of this paper.}


\subsection{Perception Tasks}\label{sec:perception-tasks}
 
 In this section we study the implications of the LLR cost function for perception tasks, a well known and long studied family of decision problems. In a perception task an agent is shown an even number of dots, with each dot  either red or blue. The agent guesses whether there are more blue or red dots, and get rewarded if they guess correctly.
 
 
 First, the total number $n$ of dots is fixed. Then, subjects are told the value of $n$, and that the number $i$ of red dots will be drawn uniformly from the set $\Theta = \{0, \ldots, n/2 -1, n/2 + 1,\ldots, n\}$. The state where the number of blue and red dots is exactly equal to $n/2$ is ruled out to simplify the exposition. The set of actions is $A=\{R,B\}$ and the utility function is
 \[
    u(a,i) = \begin{cases} 1 &\text{ if } a=B \text{ and } i > n/2\\
    1 &\text{ if } a = R \text{ and } i < n/2\\
    0 &\text{ otherwise. }
    \end{cases}
 \]
 
 \label{r1:1}Such perception tasks can be used to model many applied learning problems.
 For example, each dot could correspond to a voter whose color indicates whether they vote for the red or blue party and the agent is an analyst trying to predict which party will obtain the majority of votes in the election.\footnote{Polling provides an interesting example of flexible information acquisition. Even if the only basic experiment available to a pollster is to call a voter and ask for her opinion, practically any experiment can be constructed as a compound experiment by deciding when to stop polling. I.e., the pollster with prior $p$ can choose (perhaps at random) thresholds $p_1 < p < p_2$ and keep polling until her posterior reaches either $p_1$ or $p_2$. See \cite{morrisstrack2018} for a formalization of this idea.}
 In a typical experiment subjects observe 100 dots each of which is either red or blue on a screen \citep[see, e.g.][]{caplin2013behavioral,dean2017experimental} and are asked whether there are more red or blue dots. 
 
 As in the case of binary decision problems, it is without loss of generality to assume that $\mu_i(B)$ is strictly between $0$ and $1$ in every state. For a vector of distributions over actions $(\mu_i)$, the decision maker guesses correctly in state $i$  with probability
 \[
    m_i = \begin{cases} \mu_i(B) &\text{ if } i > n/2 \\
    \mu_i(R) &\text{ if } i < n/2.
    \end{cases}
\]
 Intuitively, it should be harder to guess correctly when the difference in the number of dots of different colors is small, i.e.\ when $i$ is close to $n/2$. For example, it should be harder to predict the winner in a close election than in an election where one of the candidates has a large lead. 
 Also, it is a well established fact in the psychology\footnote{See, e.g., Chapter 7 in \cite{green1966signal} or Chapter 4 in \cite{gescheider2013psychophysics}.}, neuroscience\footnote{E.g., \cite{krajbich2010visual, tavares2017attentional}.} and economics\footnote{See, e.g., \cite{mosteller1951experimental}.} literatures that so called {\em psychometric functions}---the relation between the strength of a stimulus offered to a subject and the probability that the subject identifies this stimulus---are sigmoidal (i.e.\ S-shaped), so that the probability that a subject chooses $B$ transitions smoothly from values close to $0$ to values close to $1$ when the number of blue dots increases.

 As \cite*{dean2017experimental} note, under mutual information cost (and a uniform prior, as in the experimental setup described above), the optimal experiment $\mu^\star$ must induce a probability of guessing correctly that is state-independent.\footnote{It is well known that under mutual information costs the physical features of the states (such as  distance or similarity) do not affect the cost of information acquisition \citep*[see, e.g.,][]{mackowiak2018rational}.}
%
%
 As shown by \cite*{matvejka2015rational}, conditional on a state $i$, the log-likelihood ratio $\log(\mu_i(B)/\mu_i(R))$ between the two actions must equal the difference in payoffs $u(B,i) - u(R,i)$, up to a constant. Hence, the probability of a correct choice must be the same for any two states that lead to the same utility function over actions, such as the state in which there are 51 blue dots out of 100 and the state in which there are 99 blue dots.


\begin{figure}[thp]
\centering
\includegraphics[scale=0.8]{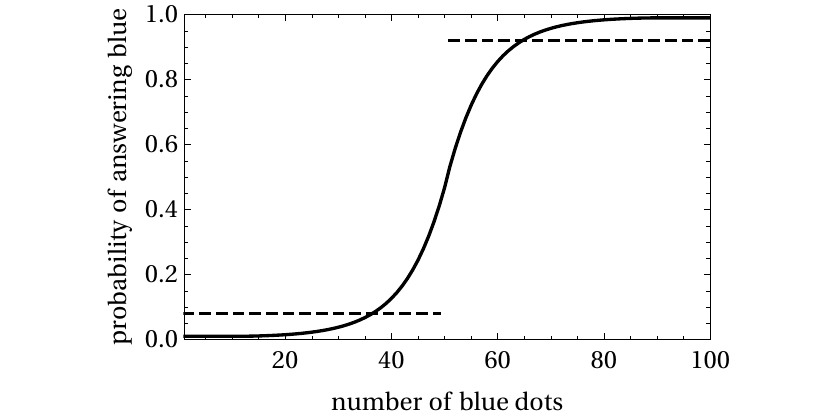}
\caption{Predicted probability of guessing that there are more blue dots as a function of the state, for the LLR cost with $\beta_{ij}=1/(i-j)^2$ (solid line) and for mutual information cost (dashed line).}
\label{fig:jump}
\end{figure}





 As this is a one-dimensional information acquisition problem, we can apply the specification $\beta_{ij} = \kappa/(i-j)^2$ of the LLR cost. As can be seen in Figure \ref{fig:jump}, this LLR cost predicts a sigmoidal relation between the state and the choice probability.
 Thus, the model matches the qualitative features of choice probabilities commonly observed in practice. Of course, this could be similarly achieved using other cost functions that take into account the difficulty of distinguishing between similar states, such as the neighborhood-cost function introduced by \cite{hebert2020neighborhood}.
 
 To gain additional insight, we now consider a more basic assumption on the cost function. Rather than assuming a particular specification, we assume that the coefficients $(\beta_{ij})$ are \textit{strictly decreasing in the distance between states}: There exists a positive and strictly decreasing function $f$ such that $\beta_{ij} = f(\vert i - j \vert)$ for all pairs of states. The condition captures the idea that states that are closer to each other are harder to distinguish. Even under this general non-parametric assumption, the LLR cost function leads to the intuitive prediction that the decision maker will guess correctly with strictly higher probability when the difference in the number of dots of different colors is smaller:
 
 \begin{prop}\label{prop:perception-monotonicity}
 Consider the above perception task. Let $C$ be an LLR cost function where the parameters $(\beta_{ij})$ are strictly increasing in the distance between states. Then, the resulting optimal probabilities $(m_i)$ of guessing correctly satisfy $m_i > m_j$ whenever $\vert i - \frac{n}{2} \vert > \vert j - \frac{n}{2} \vert$.
 \end{prop}

 \subsection{The Effect of Greater Incentives}
\label{sec:effect-of-incentives}

 We now apply the characterization of Proposition~\ref{prop:foc} to study in more detail the classic problem of predicting the probability of choosing between two options as a function of their relative values. In its simplest implementation, it consists of a task where there are two equally likely states, a subject must choose between two actions $a_1$ and $a_2$, and each action yields a reward with payoff $v \in \mathbb{R}$ when chosen in the corresponding state, and $0$ otherwise. Compared to \S\ref{sec:identifying-the-cost}, we focus here on the question of how the decision maker's behavior varies as a function of $v$.
 
 \label{r1:2} In order to interpret changes in the parameter $v$ it is necessary to fix a cardinal representation of payoffs and to define an interval of possible values for $v$. If the decision maker is risk neutral and rewards are monetary, then $v$ can represent the amount paid to the subject. If the decision maker is risk averse or her risk attitudes are unknown, then subjects can be paid using probabilistic prizes.\footnote{To illustrate, let $x$ and $y$ be two monetary prizes, with $x > y$. We continue to assume that the decision maker's preferences are consistent with expected utility, and normalize, without loss of generality, their utility function to assign utility $1$ to $x$ and utility $-1$ to $y$. A lottery that delivers $x$ with probability $p$ and $y$ with probability $1-p$ has expected utility $2p - 1$. We define each payoff $v$ in the interval $[-1,1]$ as the expected utility of such a lottery. This approach is well known in the implementation of scoring rules, where it allows to reward a decision maker using a linear payoff, and circumvents the need of eliciting the decision maker's degree of risk aversion \citep[see, for example,][and the references therein]{lambert2011probability,sandroni2013eliciting}. The same approach has been more recently applied in rational inattention by \cite*{caplin2020rational}.}
 
 
 The next result derives the optimal choice probabilities in a binary choice problem under a symmetric LLR cost function. Without loss of generality we restrict our attention to choice probabilities where both actions are chosen with strictly positive probability in every state. The result follows by rearranging the optimality conditions of Proposition~\ref{prop:foc}.
 
 \begin{prop}\label{prop:binary-choice}
 In a binary choice problem, let $\mu_i[v]$ denote the optimal choice probability of choosing action $a_i$, in state $i$, as a function of the reward $v$, under an LLR cost function. Assume the cost function satisfies $\beta_{12} = \beta_{21} = \beta$. Then $\mu_1[v] = \mu_2[v] = m[v]$, where
 \[
    m[v] = \frac{\ee^{\eta \left(\frac{v}{2 \beta}\right)}}{1+\ee^{\eta \left(\frac{v}{2 \beta}\right)}}
 \]
 and $\eta \colon \RR \to \RR$ is the inverse of the function $x \mapsto 2 x + e^x - e^{-x}$.
 \end{prop}
 
 As shown in Figure~\ref{fig:precision-incetives}, and as can be easily proved analytically, the optimal choice probabilities $\mu[v]$ are a sigmoidal function of the payoff $v$. \label{r4:3} The prediction is in line with other standard models that involve noise or unobserved heterogeneity, including mutual information cost. Indeed, as shown by \cite{matvejka2015rational}, under mutual information the optimal choice probabilities follow a logistic relation, where the probability of matching the state, as a function of $v$, is given by
 \[
    \frac{e^{\frac{v}{2 \lambda}}}{1+e^{\frac{v}{2 \lambda}}} \,,
 \]
 and $\lambda > 0$ is the parameter controlling the cost of information acquisition. The two functional forms are similar, with the only difference being the transformation $\eta$. The function is strictly increasing and S-shaped, onto, and satisfies $\eta(x) = \eta(-x)$ (and hence $\eta(0)=0$).\footnote{\label{minor:12}For the two models to be distinguished empirically, it is necessary to isolate the nonlinearity described by $\eta$ from other confounding effects. This can be difficult when $v$ represents dollar amounts, as the same choice probabilities obtained under LLR and risk neutrality would obtain under mutual information and a utility function $\eta$ over money. This is however not an issue if payoffs are defined using probabilistic prizes and preferences are consistent with expected utility. Allowing for more general preferences can lead to new difficulties. For example, the same choice probabilities we obtain with LLR cost can be obtained under mutual information when the decision maker has non-trivial attitudes towards how lotteries resolve over time, captured by the curvature of $\eta$. For preferences beyond expected utility, \cite{caradonna2021preference} provides a methodology for obtaining quasi-linear representations which could be used to extend our approach.}
 
 While both the LLR and the mutual information models lead to choice probabilities that are sigmoidal, the two theories lead to  different predictions on how the probabilities of errors scale with the payoff $v$. Figure~\ref{fig:precision-incetives} displays the implied probabilities with which a decision maker takes a correct choice as a function of $v$, under the two theories. To make the comparison meaningful, the parameters $\beta$ and $\lambda$ are chosen so that in both models the agent chooses incorrectly with probability $20\%$ when the payoff is $v=1$. 
 \begin{figure}[t!]
    \centering
    \includegraphics[scale=1.1]{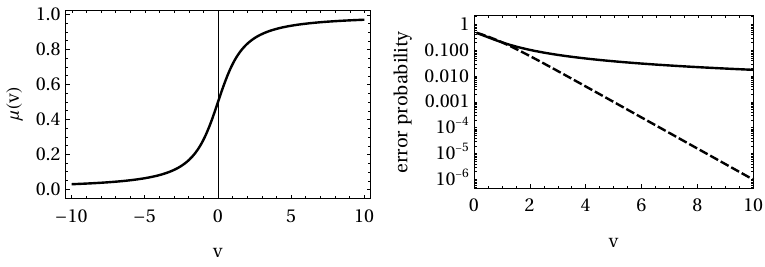}
    \caption{On the left: The optimal choice probabilities in a binary decision problem for an LLR cost function. On the right: The implied probabilities of choosing incorrectly at different levels of incentives $v$ if the agent chooses correctly with $80\%$ probability for $v=1$ for the LLR cost (solid line) and mutual information cost (dashed line) on a log-scale.}
    \label{fig:precision-incetives}
 \end{figure}
 
 As one can see in the figure, the probability of choosing correctly reacts more strongly to incentives under mutual information cost. For example, suppose that the payoff $v$ is measured in dollars. A simple calculation shows that under mutual information cost, if the decision maker chooses incorrectly with probability $20\%$ when $v=\$1$, then she must choose incorrectly with probability less than one in million if $v=\$10$. LLR costs imply that this probability is about $1/60$.\footnote{For an alternative interpretation, suppose the decision maker is paid in chance rather than money, so that the payoff $v$ denotes the probability of receiving a prize conditional on making a correct choice. Suppose that when $v$ is $0.05\%$, the decision maker makes a mistake with probability $20\%$. Then, if the probability $v$ is increased to $0.5\%$, the prediction under mutual information is that the decision maker must make a mistake with a probability that is less than one in a million. Under LLR the probability is about $1/60$.} These are starkly different predictions about behavior which can be tested experimentally.
 
 The finding is not special to this example. Under logistic choice (e.g., as in \cite{matvejka2015rational}), the probability of making a mistake decays quickly, as $v$ grows, at the exponential rate $e^{-v}$.  Under the LLR cost function the same probability decreases at the much slower rate $1/v$. This follows from Proposition~\ref{prop:binary-choice}, together with the fact that as $v$ increases, the transformation $\eta$ approximates the logarithm.

\section{Bayesian LLR Cost}
\label{sec:bayesian}
 Given a prior $q$ and an LLR cost function $C$, one can express the cost of an experiment $\mu$ in terms of the  distribution $\pi_\mu$ of the posterior belief $p \in \Delta(\Theta)$ that it induces, via
\begin{equation}\label{eq:bayesian-llr-1}
    C(\mu) = \int F(p) - F(q) \dd \pi_\mu(p) 
\end{equation}
where
\[
    F(p) = \sum_{i,j}\beta_{ij}\frac{p_{i}}{q_{i}}\log\left(\frac{p_{i}}{p_{j}}\right).
\]
%
%
%
 This follows from the definition of the LLR cost, together with Bayes' law, which states that given a prior $q$ and a signal $s$, the posterior $p$ is given by $\log \frac{p_i}{p_j} = \log\frac{q_i}{q_j}+\log\frac{\dd \mu_{i}}{\dd \mu_{j}}(s)$. This reformulation shows that the LLR cost is posterior separable \citep{caplin2013behavioral}.

 A stronger property studied in the literature is \textit{uniform} posterior separability, where the function $F$ is independent of the prior $q$. In addition to being standard, this assumption ensures, for instance, that in a dynamic environment an agent is indifferent between performing two experiments---with the choice of the second one perhaps depending on the outcome of the first---and carrying out the Blackwell equivalent one-shot experiment.

 As we now show, this assumption can be accommodated in our framework by allowing the cost $C(\mu,q)$ of an experiment $\mu$ to be a function the prior, where for each prior the cost function $C(\cdot,q)$ belongs to the LLR family, and the resulting coefficients $(\beta_{ij}(q))$ depend on the prior. While any functional relation between the prior and the coefficients is consistent with LLR cost, there is a unique choice that makes the Bayesian LLR cost function uniformly posterior-separable, as the next proposition shows. An analogous result was derived independently by \cite{bloedel2020cost}.
\begin{prop}
\label{prop:uniform-posterior}
A Bayesian LLR cost function $C$ given by 
\[
    C(\mu,q) = \sum_{i,j}\beta_{ij}(q)\dkl(\mu_i \Vert \mu_j),
\]
is uniform posterior separable if and only if there exist positive constants $(b_{ij})_{i,j\in \Theta, i \neq j}$ such that for all priors $q \in \P(\Theta)$ with full support, $\beta_{ij}(q) = b_{ij}q_i$.
\end{prop} 

 Both prior independence and constant marginal costs are reasonable assumptions when modeling common actions of information acquisitions, such as performing a measurement or drawing samples. A first implication of Proposition~\ref{prop:uniform-posterior} is that the two assumptions are incompatible with uniform posterior separability, a desirable property in a dynamic setting. This is discussed in depth by \cite{bloedel2020cost}.
 
 \medskip
 
 Proposition~\ref{prop:uniform-posterior} also shows that uniform posterior separability is possible if the coefficients $\beta_{ij}$ are allowed to change with the prior. Letting
\[
  F(p) = \sum_{i,j}b_{ij} p_{i}\log\left(\frac{p_{i}}{p_{j}}\right),
\] 
and substituting this into \eqref{eq:bayesian-llr-1}, we see that  Bayesian LLR cost of an experiment can be represented as the expected change of $F$
from the prior $q$ to the posterior $p$ induced by the signal, for a fixed choice of $(b_{ij})$. That is, the cost of the experiment equals
\begin{align}
\label{eq:bayesian-llr-2}
 C(\mu,q) = \int\left[F(p)-F(q)\right]\,\dd\pi(p),
\end{align}
and in particular it is uniformly  posterior-separable. For a given prior, this cost is the LLR cost with $\beta_{ij}=b_{ij}q_i$, so that, in terms of the distributions $(\mu_i)$, this cost is
\begin{align}
\label{eq:bayesian-llr-3}
C(\mu,q)= \sum_{i,j}b_{ij}q_i\dkl(\mu_i\Vert\mu_j).
\end{align}
%

Proposition~\ref{prop:uniform-posterior} implies that the only uniformly posterior separable LLR cost potentially assigns different cost to the same experiment at different prior beliefs. Nevertheless, some experiments may be assigned a cost that does not depend on the prior. In \S\ref{app:uniform} of the Appendix we explore which experiments have prior dependent cost and which do not.

\section{Verification and Falsification}
\label{sec:verification}

 All the specifications of the LLR cost we have discussed in the previous sections have the property that the coefficients are symmetric across states, so that $\beta_{ij}=\beta_{ji}$. In this section we explain why some information costs are best modeled by specifications that break this symmetry.

 It is well understood that verification and falsification are fundamentally different forms of empirical research. This can be seen most clearly through Karl Popper's famous example of the statement ``all swans are white.'' Regardless of how many white swans are observed, no amount of evidence can imply that the next one will be white. However, observing a single black swan is enough to prove the statement false. Popper's argument highlights a crucial asymmetry between verification and falsification. A given experiment, such as the observation of swans, can make it feasible to reject a hypothesis, yet have no power to prove that the same hypothesis is true.

 This principle extends from science to everyday life. In a legal case, the type of evidence necessary to prove that a person is guilty can be quite different from the type of evidence necessary to demonstrate that a person is innocent. In a similar way, corroborating the claim ``Ann has a sibling'' might require empirical evidence (such as the outcome of a DNA test) that is distinct from the sort of evidence necessary to prove that she has no siblings. 
 
 In this section we show that the asymmetry between verification and falsification can be captured by the LLR cost. As an example, we consider a state space $\Theta = \{a,e\}$ that consists of two hypotheses. For simplicity, let $\{a\}$ corresponds to the hypothesis ``all swans are white'' and $\{e\}$ the complementary hypothesis ``there exists a nonwhite swan.'' Imagine a decision maker who attaches equal probability to the each state, and consider the experiments described in Table 1:%
 \footnote{\cite{popper1959logic} intended verification and falsifications as deterministic procedures, which exclude even small probabilities of error. In our informal discussion we do not distinguish between events that are deemed extremely unlikely (such as thinking of having observed a black swan in world where all swans are white) and events that have zero probability. In their work on falsifiability, \cite{olszewski2011falsifiability} ascribe to \cite{cournot1843exposition} the idea that unlikely events must be treated as impossible.}

\begin{itemize}

\item In experiment I, regardless of the state, an uninformative signal realization $s_1$ occurs with probability greater than $1-\varepsilon$, where $\varepsilon$ is positive and small. If a nonwhite swan exists, then one is observed with probability $\varepsilon$. Formally, this corresponds to observing the signal realization $s_2$. If all swans are white, then signal $s_1$ is observed, up to a minuscule probability of error $\varepsilon^2$. Hence, conditional on observing $s_2$, the decision maker's belief in state $a$ approaches zero, while conditional on observing $s_1$ the decision maker's belief remains close to the prior. So, the experiment can reject the hypothesis that the state is $a$, but cannot verify it. We set the probability of observing a nonwhite swan in state $a$ equal to $\varepsilon^2$ rather than zero, to ensure that log-likelihood ratios are finite for each observation, and hence that the experiment has finite cost.

\item In experiment II the roles of the two states are reversed: if all swans are white, then this fact is revealed to the decision maker with probability $\varepsilon$. If there is a non-white swan, then the uninformative signal $s_1$ is observed (up to the small probability of error $\varepsilon^2$). Conditional on observing $s_2$, the decision maker's belief in state $a$ approaches one, and conditional on observing $s_1$ the decision maker's belief is essentially unchanged. Thus, the experiment can verify the hypothesis that the state is $a$, but cannot reject it.

\end{itemize}

\renewcommand{\arraystretch}{1.5}

\begin{table}[]
\centering
\subfloat[][Experiment I]
{
\begin{tabular}{cc|c|c|}
      & \multicolumn{1}{c}{} & \multicolumn{2}{c}{}\\
      & \multicolumn{1}{c}{} & \multicolumn{1}{c}{$s_1$}  & \multicolumn{1}{c}{$s_2$} \\\cline{3-4}
      \multirow{2}*{}  & $a$ & $1 - \varepsilon^2$ & $\varepsilon^2$ \\\cline{3-4}
      & $e$ & $1 - \varepsilon$ & $\varepsilon$ \\\cline{3-4}
\end{tabular}
}
\qquad
\subfloat[][Experiment II]
{
\begin{tabular}{cc|c|c|}
      & \multicolumn{1}{c}{} & \multicolumn{2}{c}{}\\
      & \multicolumn{1}{c}{} & \multicolumn{1}{c}{$s_1$}  & \multicolumn{1}{c}{$s_2$} \\\cline{3-4}
      \multirow{2}*{}  & $a$ & $1 - \varepsilon$ & $\varepsilon$ \\\cline{3-4}
      & $e$ & $1 - \varepsilon^2$ & $\varepsilon^2$ \\\cline{3-4}
\end{tabular}
}
\qquad
\caption{The set of states is $\Theta = \{a,e\}$. In both experiments $S = \{s_1,s_2\}$. Under experiment I, observing the signal realization $s_2$ rejects the hypothesis that the state is $a$ (up to a small probability of error $\varepsilon^2$). Under experiment II, observing $s_2$ verifies the same hypothesis.}
\end{table}

As shown by the example, permuting the state-dependent distributions of an experiment may affect its power to verify or falsify an hypothesis. However, permuting the role of the states may, in reality, correspond to a completely different type of empirical investigation. For instance, experiment I can be easily implemented in practice: as an extreme example, the decision maker may look up in the sky. There is a small chance a nonwhite swan will be observed; if not, the decision maker's belief will not change by much. It is not obvious exactly what tests or samples would be necessary to implement experiment II, which must be able to reveal that all swans are white, let alone to conclude that the two experiments should be equally costly.

We conclude that in order for a model of information acquisition to capture the difference between verification and falsification, the cost of an experiment should not necessarily be invariant with respect to a permutation of the states. In our model, this can be captured by assuming that the coefficients $(\beta_{ij})$ are non-symmetric, i.e. that $\beta_{ij}$ and $\beta_{ji}$ are are not necessarily equal. For instance, the cost of experiments I and II in Table 1 will differ whenever the coefficients of the LLR cost satisfy $\beta_{ae} \neq \beta_{ea}$. For example, set $\beta_{ae}=\kappa$ and $\beta_{ea}=0$, and  consider small $\varepsilon$. Then, to first order in $\varepsilon$, the cost of experiment I is $\kappa \varepsilon$, while the cost of experiment II is a factor of $\log(1/\varepsilon)$ higher. Hence the ratio between the costs of these experiments is arbitrarily high for small $\varepsilon$.

\section{Related Literature}

The question of how to quantify the amount of information provided by an experiment is the subject of a long-standing and interdisciplinary literature. \cite*{kullback1951information} introduced the notion of Kullback-Leibler divergence as a measure of distance between statistical populations. \cite*{kelly1956new}, \cite*{lindley1956measure}, \cite*{marschak1959remarks} and \cite*{arrow1971value} apply mutual information to the problem of ordering information structures.

More recently, \cite*{hansen2001robust} and \cite*{strzalecki2011axiomatic} adopted KL-divergence  as a tool to model robust decision criteria under uncertainty. \cite*{cabrales2013entropy} derive Shannon entropy as an index of informativeness for experiments in the context of portfolio choice problems \citep*[see also][]{cabrales2017normalized}. \cite*{frankel2018quantifying} put forward an axiomatic framework for quantifying the value and the amount of information in an experiment.

\paragraph{Rational Inattention.} As discussed in the introduction, our work is also motivated by the recent literature on rational inattention. A complete survey of this area is beyond the scope of this paper; we instead refer the interested reader to \cite*{caplin2016measuring} and \cite*{mackowiak2018rational} for perspectives on this growing literature.

\paragraph{Decision Theory.} Our axiomatic approach differs both in terms of motivation and techniques from other results in the literature. \cite*{caplin2015revealed} study the revealed preference implications of rational inattention models, taking as a primitive state-dependent random choice data. Within the same framework, \cite*{caplindeanleahy2017} characterize mutual information cost, \cite*{chambers2017nonseparable} study non-separable models of costly information acquisition, and \cite*{denti2022posterior} provides a revealed preference of posterior separability. Decision theoretic foundations for models of information acquisition have been studied by \cite*{de2014axiomatic}, \cite*{de2017rationally}, and \cite*{ellis2018foundations}. \cite*{mensch2018cardinal} provides an axiomatic characterization of posterior-separable cost functions.

%
%

\paragraph{The Wald Model of Sequential Sampling.}\label{par:wald} The notion of constant marginal costs over independent experiments goes back to Wald's \citeyearpar{wald1945sequential} classic sequential sampling model; our axioms extend some of Wald's ideas to a model of flexible information acquisition. In its most general form, Wald's model considers a decision maker who acquires information by collecting multiple independent copies of a fixed experiment, and incurs a cost equal to the number of repetitions. In this model, every stopping strategy corresponds to an experiment, and so every such model defines a cost over some family of experiments. It is easy to see that such a cost satisfies our axioms.

\cite{morrisstrack2018} consider a continuous-time version where the decision maker observes a one-dimensional diffusion process whose drift depends on the state, and incurs a cost proportional to the expected time spent observing. This cost is again easily seen to satisfy our axioms, and indeed, for the experiments that can be generated using this sampling process, they show that the expected cost of a given distribution over posteriors is of the form obtained in Proposition \ref{prop:onedimens}. 
 One may view the result in \citeauthor{morrisstrack2018} as complementary evidence that the cost function obtained in Proposition \ref{prop:onedimens} is a natural choice for one-dimensional information acquisition problems.


 

\paragraph{Dynamic Information Acquisition Models.}
\cite*{hebert2019rational, hebert2020neighborhood}, \cite{zhong2017lininfo,zhong2019dyninfo}, \cite*{morrisstrack2018}, and \cite*{bloedel2020cost} relate cost functions over experiments and sequential models of costly information acquisition. In these papers, the cost $C(\mu)$ is the minimum expected cost of generating the experiment $\mu$ by means of a dynamic sequential sampling strategy. 

\cite*{hebert2020neighborhood} propose and characterize a family of ``neighborhood-based'' cost functions that generalize mutual information, and allow for the cost of learning about states to be affected by their distance. In a perception task, these costs are flexible enough to accommodate optimal response probabilities that are S-shaped, similarly to our analysis in \S\ref{sec:stochastic}. The LLR cost does not generalize mutual information, but has a structure similar to a neighborhood-based cost where the neighboring structure consists of all pairs of states.

\cite{zhong2017lininfo} and \cite*{bloedel2020cost} provide general conditions for a cost function over experiments to be induced by some dynamic model of information acquisition. \cite{zhong2019dyninfo} studies a dynamic model of non-parametric information acquisition, where a decision maker can choose any dynamic signal process as an information source, and pays a flow cost that is a function of the informativeness of the process. A key assumption is discounting of delayed payoffs. The paper shows that the optimal strategy corresponds to a Poisson experiment.

\paragraph{Information Theory.} This paper is also related to the axiomatic literature in information theory characterizing different notions of entropy and information measures. \cite*{ebanks1998characterizations} and \cite*{csiszar2008axiomatic} survey and summarize the literature in the field. In the special case where $|\Theta| = 2$ and the coefficients $(\beta_{ij})$ are set to $1$, the function \eqref{eq:cost} is also known as \emph{J-divergence}. \cite*{kannappan1988axiomatic}  provide an axiomatization of J-divergence, under axioms very different from the ones in this paper. A more general representation appears in \cite*{zanardo2017measure}.

\cite*{ebanks1998characterizations} characterize functions over tuples of measures with finite support. They show that a condition equivalent to our additivity axiom leads to a functional form similar to \eqref{eq:cost}. Their analysis is however quite different from ours: their starting point is an assumption which, in the notation of this paper, states the existence of a map $F \colon \RR^\Theta \to \RR$ such that the cost of an experiment $(S,(\mu_i))$ with finite support takes the form  $C(\mu) = \sum_{s \in S} F((\mu_i(s))_{i \in \Theta})$. This assumption of additive separability does not seem to have an obvious economic interpretation, nor to be related to our motivation of capturing constant marginal costs in information production.

\paragraph{Probability Theory.} The results in \cite*{mattner1999dedicated, mattner2004cumulants} have, perhaps, the closest connection with this paper. Mattner studies functionals over the space probability measures over $\RR$ that are additive with respect to convolution. As we explain in the next section, additivity with respect to convolution is a property that is closely related to Axiom \ref{axm:additivity}. We draw inspiration from \cite{mattner1999dedicated} in applying the study of cumulants to the proof of Theorem \ref{thm:repr-1}. However, the difference in domain makes the techniques in \cite*{mattner1999dedicated, mattner2004cumulants} not applicable to this paper.

\section{Proof Sketch}

In this section we informally describe some of the ideas involved in the proof of Theorem \ref{thm:repr-1}. We consider the binary case where $\Theta=\left\{ 0,1\right\}$ and so there is only one relevant log-likelihood ratio $\ell=\ell_{10}$. The proof of the general case is more involved, but conceptually similar. 

\smallskip

\noindent \textbf{Step 1.} Let $C$ satisfy Axioms 1-4. Conditional on each state $i$, an experiment $\mu$ induces a distribution $\sigma_i$ for $\ell$. Two experiments that induce the same pair of distributions $(\sigma_0,\sigma_1)$ are equivalent in the Blackwell order. Thus, by Axiom \ref{axm:info-content}, $C$ can be identified with a map $c(\sigma_0,\sigma_1)$ defined over all pairs of distributions induced by some experiment $\mu$.

\smallskip
\noindent \textbf{Step 2.} Axioms \ref{axm:additivity} and \ref{axm:affinity-1} translate into the following properties of $c$. The product $\mu \otimes \nu$ of two experiments induces, conditional on $i$, a distribution for $\ell$ that is  the convolution of the distributions induced by the two experiments.\footnote{Recall that given two distributions $\sigma$ and $\nu$ over $\RR$, their convolution is the distribution of the random variable $X + Y$, where $X$ is a random variable distributed according to $\sigma$, $Y$ according to $\nu$, and the two random variables are independent. When two experiments are independent (in the sense described in \S\ref{sec:model}), their log-likelihood ratios are independent random variables conditional on the state. The crucial observation is that the log-likelihood ratio of the product experiment is the sum of the individual log-likelihood ratios, and thus its distribution conditional on the state is the convolution of theirs.} Axiom \ref{axm:additivity} is equivalent to $c$ being additive with respect to convolution, i.e. $$c(\sigma_0 * \tau_0,\sigma_1 * \tau_1) = c(\sigma_0,\sigma_1) + c(\tau_0,\tau_1)$$
Axiom \ref{axm:affinity-1} is equivalent to $c$ satisfying for all $\alpha\in [0,1]$,
\[
c(\alpha \sigma_0 + (1-\alpha)\delta_0,\alpha \sigma_1 + (1-\alpha)\delta_0) = \alpha c(\sigma_0,\sigma_1)
\]
where $\delta_0$ is the degenerate measure at $0$. Axiom \ref{axm:continuity} translates into continuity of $c$ with respect to total variation and the first $N$ moments of $\sigma_0$ and $\sigma_1$.

\smallskip
\noindent \textbf{Step 3.} As is well known, many properties of a probability distribution can be analyzed by studying its moments. We apply this idea to the study of experiments, and show that under our axioms the cost $c(\sigma_0,\sigma_1)$ is a function of the first $N$ moments of the two measures, for some (arbitrarily large) $N$. Given an experiment $\mu$, we consider the experiment
\[
    \frac{1}{n} \cdot (\mu \otimes \cdots \otimes \mu)
\]
in which with probability $(n-1)/n$ no information is produced, and with the remaining probability the experiment $\mu$ is carried out $n$ times. By Axioms \ref{axm:additivity}
and \ref{axm:affinity-1}, the cost of this experiment is equal to the cost of $\mu$.\footnote{For $n$ large, this experiment has a very simple structure: With high probability it is uninformative, and with probability $1/n$ is highly revealing about the states.} We show that these properties, together with the continuity axiom, imply that the cost of an experiment is a function $G$ of the moments of $(\sigma_0,\sigma_1)$:
\begin{equation}\label{eq:sketch1}
    c(\sigma_0,\sigma_1) = G\left[m_{\sigma_0}(1),\ldots,m_{\sigma_0}(N),m_{\sigma_{1}}(1),\ldots,m_{\sigma_1}(N)\right]
\end{equation}
where $m_{\sigma_i}(n)$ is the $n$-th moment of $\sigma_i$. Each $m_{\sigma_i}(n)$ is affine in $\sigma_i$, hence Step 2 implies that $G$ is affine with respect to mixtures with the zero vector.

\smallskip
\noindent \textbf{Step 4.} It will be useful to analyze a distribution not only through its moments but also through its cumulants. The \emph{$n$}-th\emph{ cumulant }$\kappa_{\sigma}(n)$ of a probability measure $\sigma$ is the $n$-th derivative at $0$ of the logarithm of its characteristic function. By a combinatorial characterization due to \cite{leonov1959method}, $\kappa_{\sigma}(n)$ is a polynomial function of the first $n$ moments $m_{\sigma}(1),\ldots,m_{\sigma}(n)$. For example, the first cumulant is the expectation $\kappa_{\sigma}(1)=m_{\sigma}(1)$, the second is the variance, and the third is $\kappa_{\sigma}(3)=m_{\sigma}(3)-2m_{\sigma}(2)m_{\sigma}(1)+2m_{\sigma}(1)^{3}$. Step 3 and the result by \cite{leonov1959method} imply that the cost of an experiment is a function $H$ of the cumulants of $(\sigma_0,\sigma_1)$:
\begin{equation}\label{eq:sketch2}
c(\sigma_0,\sigma_1) = H\left[\kappa_{\sigma_{0}}(1),\ldots,\kappa_{\sigma_{0}}(N),\kappa_{\sigma_{1}}(1),\ldots,\kappa_{\sigma_{1}}(N)\right]
\end{equation}
where $\kappa_{\sigma_i}(n)$ is the $n$-th cumulant of $\sigma_i$.

\smallskip
\noindent \textbf{Step 5.} Cumulants satisfy a crucial property: the cumulant of a sum of two independent random variables is the sum of their cumulants. So, they are additive with respect to convolution. By Step 2, this implies that $H$ is additive. We show that $H$ is in fact a linear function. This step is reminiscent of the classic Cauchy equation problem. That is, understanding under what conditions a function $\phi\text{\ensuremath{\colon\mathbb{R}}}\to\mathbb{R}$ that satisfies $\phi(x+y)=\phi(x)+\phi(y)$ must be linear.
In Theorem \ref{thm:cauchy} we show, very generally, that any additive function from a subset $\mathcal{K}\subset\mathbb{R}^{d}$ to $\mathbb{R}_{+}$ is linear, provided $\mathcal{K}$ is closed under addition and has a non-empty interior. We then proceed to show that both of these conditions are satisfied if $\mathcal{K}$ is taken to be the domain of $H$, and thus deduce that $H$ is linear.

\smallskip
\noindent \textbf{Step 6.} In the last step we study the implications of \eqref{eq:sketch1} and \eqref{eq:sketch2}. We apply the characterization by \cite{leonov1959method} and show that the affinity with respect to the origin of the map $G$, and the linearity of $H$, imply that $H$ must be a function solely of the first cumulants $\kappa_{\sigma_0}(1)$ and $\kappa_{\sigma_1}(1)$. That is, $C$ must be a weighted sum of the expectations of the log-likelihood ratio $\ell$ conditional on each state.

\section{Conclusions}

We put forward an axiomatic approach to modeling the cost of information acquisition, characterizing a family of cost functions that capture a notion of constant marginal costs in the production of information.
We propose a number of possible avenues for future research, all of which would require the solution of some non-trivial technical challenges: The first is an extension of our framework beyond the setting of a finite set of states to a continuum of states. This is natural in the context of one-dimensional problems.  Second, one could consider multidimensional problems in which $\Theta$ is a finite subset of $\mathbb{R}^d$, and study a generalization of the one-dimensional functional form we obtain in \S\ref{sec:betas}. Third, there are a number of settings which have been modeled using mutual information cost, where it may be of interest to understand the sensitivity of the conclusions to this assumption \citep[see, e.g.,][]{van2010information}.
Finally, a possible definition for convex cost functions over experiments is given by the supremum over a family of LLR costs. It may be interesting to understand if such costs are characterized by simple axioms.

\bibliography{lit}

\begin{thebibliography}{}

\bibitem[\protect\citeauthoryear{Arrow}{Arrow}{1971}]{arrow1971value}
Arrow, K.~J. (1971).
\newblock The value of and demand for information.
\newblock {\em Decision and organization\/}~{\em 2}, 131--139.

\bibitem[\protect\citeauthoryear{Arrow}{Arrow}{1985}]{arrow1985informational}
Arrow, K.~J. (1985).
\newblock Informational structure of the firm.
\newblock {\em The American Economic Review\/}~{\em 75\/}(2), 303--307.

\bibitem[\protect\citeauthoryear{Arrow, Blackwell, and Girshick}{Arrow
  et~al.}{1949}]{arrow1949bayes}
Arrow, K.~J., D.~Blackwell, and M.~A. Girshick (1949).
\newblock Bayes and minimax solutions of sequential decision problems.
\newblock {\em Econometrica, Journal of the Econometric Society\/}~{\em
  17\/}(3/4), 213--244.

\bibitem[\protect\citeauthoryear{Austin}{Austin}{2006}]{austinentropy}
Austin, T.~D. (2006).
\newblock Entropy and {S}inai theorem.
\newblock Online lecture notes.

\bibitem[\protect\citeauthoryear{Blackwell}{Blackwell}{1951}]{blackwell1951comparison}
Blackwell, D. (1951).
\newblock Comparison of experiments.
\newblock In {\em Proceedings of the Second Berkeley Symposium on Mathematical
  Statistics and Probability}. The Regents of the University of California.

\bibitem[\protect\citeauthoryear{Bloedel and Zhong}{Bloedel and
  Zhong}{2020}]{bloedel2020cost}
Bloedel, A.~W. and W.~Zhong (2020).
\newblock The cost of optimally-acquired information.
\newblock Technical report, Working paper, Stanford University.

\bibitem[\protect\citeauthoryear{Bohnenblust, Shapley, and Sherman}{Bohnenblust
  et~al.}{1949}]{bohnenblust1949reconnaissance}
Bohnenblust, H.~F., L.~S. Shapley, and S.~Sherman (1949).
\newblock Reconnaissance in game theory.
\newblock Technical report, Rand Corporation.

\bibitem[\protect\citeauthoryear{Borwein and Vanderwerff}{Borwein and
  Vanderwerff}{2010}]{borwein2010convex}
Borwein, J.~M. and J.~D. Vanderwerff (2010).
\newblock {\em Convex functions: constructions, characterizations and
  counterexamples}, Volume 109.
\newblock Cambridge University Press Cambridge.

\bibitem[\protect\citeauthoryear{Brouwer}{Brouwer}{1911}]{brouwer1911beweis}
Brouwer, L. (1911).
\newblock Beweis der invarianz des n-dimensionalen gebiets.
\newblock {\em Mathematische Annalen\/}~{\em 71\/}(3), 305--313.

\bibitem[\protect\citeauthoryear{Cabrales, Gossner, and Serrano}{Cabrales
  et~al.}{2013}]{cabrales2013entropy}
Cabrales, A., O.~Gossner, and R.~Serrano (2013).
\newblock Entropy and the value of information for investors.
\newblock {\em American Economic Review\/}~{\em 103\/}(1), 360--77.

\bibitem[\protect\citeauthoryear{Cabrales, Gossner, and Serrano}{Cabrales
  et~al.}{2017}]{cabrales2017normalized}
Cabrales, A., O.~Gossner, and R.~Serrano (2017).
\newblock A normalized value for information purchases.
\newblock {\em Journal of Economic Theory\/}~{\em 170}, 266--288.

\bibitem[\protect\citeauthoryear{Caplin}{Caplin}{2016}]{caplin2016measuring}
Caplin, A. (2016).
\newblock Measuring and modeling attention.
\newblock {\em Annual Review of Economics\/}~{\em 8}, 379--403.

\bibitem[\protect\citeauthoryear{Caplin, Csaba, Leahy, and Nov}{Caplin
  et~al.}{2020}]{caplin2020rational}
Caplin, A., D.~Csaba, J.~Leahy, and O.~Nov (2020).
\newblock Rational inattention, competitive supply, and psychometrics.
\newblock {\em The Quarterly Journal of Economics\/}~{\em 135\/}(3),
  1681--1724.

\bibitem[\protect\citeauthoryear{Caplin and Dean}{Caplin and
  Dean}{2013}]{caplin2013behavioral}
Caplin, A. and M.~Dean (2013).
\newblock Behavioral implications of rational inattention with shannon entropy.
\newblock Technical report, National Bureau of Economic Research.

\bibitem[\protect\citeauthoryear{Caplin and Dean}{Caplin and
  Dean}{2015}]{caplin2015revealed}
Caplin, A. and M.~Dean (2015).
\newblock Revealed preference, rational inattention, and costly information
  acquisition.
\newblock {\em American Economic Review\/}~{\em 105\/}(7), 2183--2203.

\bibitem[\protect\citeauthoryear{Caplin, Dean, and Leahy}{Caplin
  et~al.}{2016}]{caplin2016rational}
Caplin, A., M.~Dean, and J.~Leahy (2016).
\newblock Rational inattention, optimal consideration sets and stochastic
  choice.
\newblock Technical report, Working paper.

\bibitem[\protect\citeauthoryear{Caplin, Dean, and Leahy}{Caplin
  et~al.}{2018}]{caplindeanleahy2017}
Caplin, A., M.~Dean, and J.~Leahy (2018).
\newblock Rational inattentive behavior: Characterizing and generalizing
  shannon entropy.
\newblock Technical report, National Bureau of Economic Research.

\bibitem[\protect\citeauthoryear{Caradonna}{Caradonna}{2021}]{caradonna2021preference}
Caradonna, P.~P. (2021).
\newblock Preference regression.

\bibitem[\protect\citeauthoryear{Chambers, Liu, and Rehbeck}{Chambers
  et~al.}{2017}]{chambers2017nonseparable}
Chambers, C.~P., C.~Liu, and J.~Rehbeck (2017).
\newblock Nonseparable costly information acquisition and revealed preference.

\bibitem[\protect\citeauthoryear{Chan, Lizzeri, Suen, and Yariv}{Chan
  et~al.}{2017}]{chan2017deliberating}
Chan, J., A.~Lizzeri, W.~Suen, and L.~Yariv (2017).
\newblock Deliberating collective decisions.
\newblock {\em The Review of Economic Studies\/}~{\em 85\/}(2), 929--963.

\bibitem[\protect\citeauthoryear{Christie}{Christie}{1934}]{christie}
Christie, A. (1934).
\newblock {\em Murder on the orient express}.
\newblock Collins Crime Club.

\bibitem[\protect\citeauthoryear{Cournot}{Cournot}{1843}]{cournot1843exposition}
Cournot, A.~A. (1843).
\newblock {\em Exposition de la th{\'e}orie des chances et des
  probabilit{\'e}s}.
\newblock L. Hachette.

\bibitem[\protect\citeauthoryear{Cover and Thomas}{Cover and
  Thomas}{2012}]{cover2012elements}
Cover, T.~M. and J.~A. Thomas (2012).
\newblock {\em Elements of information theory}.
\newblock John Wiley \& Sons.

\bibitem[\protect\citeauthoryear{Csisz{\'a}r}{Csisz{\'a}r}{2008}]{csiszar2008axiomatic}
Csisz{\'a}r, I. (2008).
\newblock Axiomatic characterizations of information measures.
\newblock {\em Entropy\/}~{\em 10\/}(3), 261--273.

\bibitem[\protect\citeauthoryear{de~Oliveira}{de~Oliveira}{2014}]{de2014axiomatic}
de~Oliveira, H. (2014).
\newblock Axiomatic foundations for entropic costs of attention.
\newblock Technical report, Mimeo.

\bibitem[\protect\citeauthoryear{De~Oliveira, Denti, Mihm, and
  Ozbek}{De~Oliveira et~al.}{2017}]{de2017rationally}
De~Oliveira, H., T.~Denti, M.~Mihm, and K.~Ozbek (2017).
\newblock Rationally inattentive preferences and hidden information costs.
\newblock {\em Theoretical Economics\/}~{\em 12\/}(2), 621--654.

\bibitem[\protect\citeauthoryear{Dean and Neligh}{Dean and
  Neligh}{2017}]{dean2017experimental}
Dean, M. and N.~Neligh (2017).
\newblock Experimental tests of rational inattention.

\bibitem[\protect\citeauthoryear{Denti}{Denti}{2022}]{denti2022posterior}
Denti, T. (2022).
\newblock Posterior separable cost of information.
\newblock {\em American Economic Review\/}~{\em 112\/}(10), 3215--59.

\bibitem[\protect\citeauthoryear{Denti, Marinacci, and Montrucchio}{Denti
  et~al.}{2020}]{denti2020note}
Denti, T., M.~Marinacci, and L.~Montrucchio (2020).
\newblock A note on rational inattention and rate distortion theory.
\newblock {\em Decisions in Economics and Finance\/}~{\em 43\/}(1), 75--89.

\bibitem[\protect\citeauthoryear{Dupuis and Ellis}{Dupuis and
  Ellis}{2011}]{dupuis2011weak}
Dupuis, P. and R.~S. Ellis (2011).
\newblock {\em A weak convergence approach to the theory of large deviations},
  Volume 902.
\newblock John Wiley \& Sons.

\bibitem[\protect\citeauthoryear{Dvoretzky, Kiefer, and Wolfowitz}{Dvoretzky
  et~al.}{1953}]{dvoretzky1953sequential}
Dvoretzky, A., J.~Kiefer, and J.~Wolfowitz (1953).
\newblock Sequential decision problems for processes with continuous time
  parameter. testing hypotheses.
\newblock {\em The Annals of Mathematical Statistics\/}~{\em 24\/}(2),
  254--264.

\bibitem[\protect\citeauthoryear{Ebanks, Sahoo, and Sander}{Ebanks
  et~al.}{1998}]{ebanks1998characterizations}
Ebanks, B., P.~Sahoo, and W.~Sander (1998).
\newblock {\em Characterizations of information measures}.
\newblock World Scientific.

\bibitem[\protect\citeauthoryear{Ellis}{Ellis}{2018}]{ellis2018foundations}
Ellis, A. (2018).
\newblock Foundations for optimal inattention.
\newblock {\em Journal of Economic Theory\/}~{\em 173}, 56--94.

\bibitem[\protect\citeauthoryear{Frankel and Kamenica}{Frankel and
  Kamenica}{2018}]{frankel2018quantifying}
Frankel, A. and E.~Kamenica (2018).
\newblock Quantifying information and uncertainty.
\newblock Technical report, Working paper.

\bibitem[\protect\citeauthoryear{Gescheider}{Gescheider}{1997}]{gescheider2013psychophysics}
Gescheider, G.~A. (1997).
\newblock {\em Psychophysics: the fundamentals\/} (3 ed.).
\newblock Psychology Press.

\bibitem[\protect\citeauthoryear{Green and Swets}{Green and
  Swets}{1966}]{green1966signal}
Green, D.~M. and J.~A. Swets (1966).
\newblock {\em Signal detection theory and psychophysics}.
\newblock New York : Wiley.
\newblock Includes indexes. Bibliography: p. 437-486.

\bibitem[\protect\citeauthoryear{Hansen and Sargent}{Hansen and
  Sargent}{2001}]{hansen2001robust}
Hansen, L. and T.~J. Sargent (2001).
\newblock Robust control and model uncertainty.
\newblock {\em American Economic Review\/}~{\em 91\/}(2), 60--66.

\bibitem[\protect\citeauthoryear{H{\'e}bert and Woodford}{H{\'e}bert and
  Woodford}{2019}]{hebert2019rational}
H{\'e}bert, B.~M. and M.~Woodford (2019).
\newblock Rational inattention when decisions take time.
\newblock Technical report, National Bureau of Economic Research.

\bibitem[\protect\citeauthoryear{H{\'e}bert and Woodford}{H{\'e}bert and
  Woodford}{2020}]{hebert2020neighborhood}
H{\'e}bert, B.~M. and M.~Woodford (2020).
\newblock Neighborhood-based information costs.
\newblock Technical report, National Bureau of Economic Research.

\bibitem[\protect\citeauthoryear{Jech}{Jech}{2013}]{jech2013set}
Jech, T. (2013).
\newblock {\em Set theory}.
\newblock Springer Science \& Business Media.

\bibitem[\protect\citeauthoryear{Kannappan and Rathie}{Kannappan and
  Rathie}{1988}]{kannappan1988axiomatic}
Kannappan, P. and P.~Rathie (1988).
\newblock An axiomatic characterization of j-divergence.
\newblock In {\em Transactions of the Tenth Prague Conference on Information
  Theory, Statistical Decision Functions, Random Processes}, pp.\  29--36.
  Springer.

\bibitem[\protect\citeauthoryear{Kelly}{Kelly}{1956}]{kelly1956new}
Kelly, J. (1956).
\newblock A new interpretation of information rate.
\newblock {\em The Bell System Technical Journal\/}~{\em 35}, 917--926.

\bibitem[\protect\citeauthoryear{Krajbich, Armel, and Rangel}{Krajbich
  et~al.}{2010}]{krajbich2010visual}
Krajbich, I., C.~Armel, and A.~Rangel (2010).
\newblock Visual fixations and the computation and comparison of value in
  simple choice.
\newblock {\em Nature neuroscience\/}~{\em 13\/}(10), 1292.

\bibitem[\protect\citeauthoryear{Kullback and Leibler}{Kullback and
  Leibler}{1951}]{kullback1951information}
Kullback, S. and R.~A. Leibler (1951).
\newblock On information and sufficiency.
\newblock {\em The annals of mathematical statistics\/}~{\em 22\/}(1), 79--86.

\bibitem[\protect\citeauthoryear{Lambert}{Lambert}{2018}]{lambert2011probability}
Lambert, N. (2018).
\newblock Probability elicitation for agents with arbitrary risk preferences.
\newblock Technical report, mimeo.

\bibitem[\protect\citeauthoryear{Le~Cam}{Le~Cam}{1996}]{le1996comparison}
Le~Cam, L. (1996).
\newblock Comparison of experiments: A short review.
\newblock {\em Lecture Notes-Monograph Series\/}~{\em 30}, 127--138.

\bibitem[\protect\citeauthoryear{Leonov and Shiryaev}{Leonov and
  Shiryaev}{1959}]{leonov1959method}
Leonov, V. and A.~N. Shiryaev (1959).
\newblock On a method of calculation of semi-invariants.
\newblock {\em Theory of Probability \& its applications\/}~{\em 4\/}(3),
  319--329.

\bibitem[\protect\citeauthoryear{Liese and Vajda}{Liese and
  Vajda}{1987}]{liese1987convex}
Liese, F. and I.~Vajda (1987).
\newblock {\em Convex statistical distances}, Volume~95.
\newblock Teubner.

\bibitem[\protect\citeauthoryear{Lindley}{Lindley}{1956}]{lindley1956measure}
Lindley, D.~V. (1956).
\newblock On a measure of the information provided by an experiment.
\newblock {\em The Annals of Mathematical Statistics\/}~{\em 27\/}(4),
  986--1005.

\bibitem[\protect\citeauthoryear{Mackowiak, Mat{\v{e}}jka, and
  Wiederholt}{Mackowiak et~al.}{2018}]{mackowiak2018rational}
Mackowiak, B., F.~Mat{\v{e}}jka, and M.~Wiederholt (2018).
\newblock Rational inattention: A disciplined behavioral model.

\bibitem[\protect\citeauthoryear{Marschak}{Marschak}{1959}]{marschak1959remarks}
Marschak, J. (1959).
\newblock Remarks on the economics of information.
\newblock Technical report, Cowles Foundation for Research in Economics, Yale
  University.

\bibitem[\protect\citeauthoryear{Mat{\v{e}}jka and McKay}{Mat{\v{e}}jka and
  McKay}{2015}]{matvejka2015rational}
Mat{\v{e}}jka, F. and A.~McKay (2015).
\newblock Rational inattention to discrete choices: A new foundation for the
  multinomial logit model.
\newblock {\em American Economic Review\/}~{\em 105\/}(1), 272--98.

\bibitem[\protect\citeauthoryear{Mattner}{Mattner}{1999}]{mattner1999dedicated}
Mattner, L. (1999).
\newblock What are cumulants?
\newblock {\em Documenta Mathematica\/}~{\em 4}, 601--622.

\bibitem[\protect\citeauthoryear{Mattner}{Mattner}{2004}]{mattner2004cumulants}
Mattner, L. (2004).
\newblock Cumulants are universal homomorphisms into {H}ausdorff groups.
\newblock {\em Probability theory and related fields\/}~{\em 130\/}(2),
  151--166.

\bibitem[\protect\citeauthoryear{Mensch}{Mensch}{2018}]{mensch2018cardinal}
Mensch, J. (2018).
\newblock Cardinal representations of information.

\bibitem[\protect\citeauthoryear{Morris and Strack}{Morris and
  Strack}{2018}]{morrisstrack2018}
Morris, S. and P.~Strack (2018).
\newblock The wald problem and the relation of sequential sampling and static
  information costs.

\bibitem[\protect\citeauthoryear{Morris and Yang}{Morris and
  Yang}{2016}]{morris2016coordination}
Morris, S. and M.~Yang (2016).
\newblock Coordination and continuous choice.

\bibitem[\protect\citeauthoryear{Mosteller and Nogee}{Mosteller and
  Nogee}{1951}]{mosteller1951experimental}
Mosteller, F. and P.~Nogee (1951).
\newblock An experimental measurement of utility.
\newblock {\em Journal of Political Economy\/}~{\em 59\/}(5), 371--404.

\bibitem[\protect\citeauthoryear{Mu, Pomatto, Strack, and Tamuz}{Mu
  et~al.}{2020}]{mu2020blackwell}
Mu, X., L.~Pomatto, P.~Strack, and O.~Tamuz (2020).
\newblock From blackwell dominance in large samples to r{\'e}nyi divergences
  and back again.
\newblock Forthcoming in Econometrica.

\bibitem[\protect\citeauthoryear{Olszewski and Sandroni}{Olszewski and
  Sandroni}{2011}]{olszewski2011falsifiability}
Olszewski, W. and A.~Sandroni (2011).
\newblock Falsifiability.
\newblock {\em American Economic Review\/}~{\em 101\/}(2), 788--818.

\bibitem[\protect\citeauthoryear{Popper}{Popper}{1959}]{popper1959logic}
Popper, K. (1959).
\newblock {\em The logic of scientific discovery}.
\newblock Routledge.

\bibitem[\protect\citeauthoryear{Sandroni and Shmaya}{Sandroni and
  Shmaya}{2013}]{sandroni2013eliciting}
Sandroni, A. and E.~Shmaya (2013).
\newblock Eliciting beliefs by paying in chance.
\newblock {\em Economic Theory Bulletin\/}~{\em 1\/}(1), 33--37.

\bibitem[\protect\citeauthoryear{Shiryaev}{Shiryaev}{1996}]{shiryaev}
Shiryaev, A.~N. (1996).
\newblock {\em Probability}.
\newblock Springer.

\bibitem[\protect\citeauthoryear{Sims}{Sims}{2010}]{sims2010rational}
Sims, C. (2010).
\newblock Rational inattention and monetary economics.
\newblock {\em Handbook of monetary Economics\/}~{\em 3}, 155--181.

\bibitem[\protect\citeauthoryear{Sims}{Sims}{2003}]{sims2003implications}
Sims, C.~A. (2003).
\newblock Implications of rational inattention.
\newblock {\em Journal of monetary Economics\/}~{\em 50\/}(3), 665--690.

\bibitem[\protect\citeauthoryear{Strzalecki}{Strzalecki}{2011}]{strzalecki2011axiomatic}
Strzalecki, T. (2011).
\newblock Axiomatic foundations of multiplier preferences.
\newblock {\em Econometrica\/}~{\em 79\/}(1), 47--73.

\bibitem[\protect\citeauthoryear{Tao}{Tao}{2011}]{tao2011}
Tao, T. (2011).
\newblock Brouwer's fixed point and invariance of domain theorems, and
  {H}ilbert's fifth problem.
\newblock
  \url{https://terrytao.wordpress.com/2011/06/13/brouwers-fixed-point-and-invariance-of-domain-theorems-and-hilberts-fifth-problem}.

\bibitem[\protect\citeauthoryear{Tavares, Perona, and Rangel}{Tavares
  et~al.}{2017}]{tavares2017attentional}
Tavares, G., P.~Perona, and A.~Rangel (2017).
\newblock The attentional drift diffusion model of simple perceptual
  decision-making.
\newblock {\em Frontiers in neuroscience\/}~{\em 11}, 468.

\bibitem[\protect\citeauthoryear{Van~Nieuwerburgh and
  Veldkamp}{Van~Nieuwerburgh and Veldkamp}{2010}]{van2010information}
Van~Nieuwerburgh, S. and L.~Veldkamp (2010).
\newblock Information acquisition and under-diversification.
\newblock {\em The Review of Economic Studies\/}~{\em 77\/}(2), 779--805.

\bibitem[\protect\citeauthoryear{Wald}{Wald}{1945}]{wald1945sequential}
Wald, A. (1945).
\newblock Sequential tests of statistical hypotheses.
\newblock {\em The Annals of Mathematical Statistics\/}~{\em 16\/}(2),
  117--186.

\bibitem[\protect\citeauthoryear{Wilson}{Wilson}{1975}]{wilson1975informational}
Wilson, R. (1975).
\newblock Informational economies of scale.
\newblock {\em The Bell Journal of Economics\/}~{\em 6\/}(1), 184--195.

\bibitem[\protect\citeauthoryear{Zanardo}{Zanardo}{2017}]{zanardo2017measure}
Zanardo, E. (2017).
\newblock How to measure disagreement.
\newblock Technical report.

\bibitem[\protect\citeauthoryear{Zhong}{Zhong}{2017}]{zhong2017lininfo}
Zhong, W. (2017).
\newblock Indirect information measure and dynamic learning.

\bibitem[\protect\citeauthoryear{Zhong}{Zhong}{2019}]{zhong2019dyninfo}
Zhong, W. (2019).
\newblock Optimal dynamic information acquisition.

\end{thebibliography}

\newpage


\begin{appendices}


\section{Discussion of the Continuity Axiom}
\label{sec:topology}

Our continuity axiom may seem technical,
and in a sense it is. However, there are some interesting technical
subtleties involved with its choice. Indeed, it seems that a more
natural choice of topology would be the topology of \emph{weak convergence}
of likelihood ratios. Under that topology,
two experiments would be close if they had close expected utilities
for decision problems with continuous bounded utilities. The disadvantage
of this topology is that \emph{no cost }that satisfies the rest of
the axioms is continuous in this topology. To see this, consider the
sequence of experiments in which a coin (whose bias depends on the
state) is tossed $n$ times with probability $1/n$, and otherwise
is not tossed at all. Under our axioms these experiments all have
the same cost\textemdash the cost of tossing the coin once. However,
in the weak topology these experiments converge to the trivial experiment
that yields no information and therefore has zero cost. 

In fact, even the stronger \emph{total variation} topology suffers
from the same problem, which is demonstrated using the same sequence
of experiments. Therefore, one must consider a \emph{finer }topology
(which makes for a weaker continuity assumption), which we do by
also requiring the first $N$ moments to converge. Note that increasing
$N$ makes for a finer topology and therefore a weaker continuity
assumption, and that our results hold for all $N>0$. An even stronger
topology (which requires the convergence of all moments) is used by
\citet*{mattner1999dedicated,mattner2004cumulants} to characterize all continuous additive
linear functionals on the space of all random variables on $\mathbb{R}$.

Nevertheless, the continuity axiom is technical. As we show in Theorem~\ref{thm:repr-bayes} it is not required when there are only two states, and we conjecture that it is not required in general.

\section{Preliminaries}
\label{sec:pre}
To simplify the notation, throughout the appendix we set $\Theta = \{0,1,\ldots,n\}$.

\subsection{Properties of the Kullback-Leibler Divergence}\label{sec:dkl}
In this section we summarize some well known properties of the Kullback-Leibler divergence, and derive from them straightforward properties of the LLR cost.

Given a measurable space $(X,\Sigma)$ we denote by $\P(X,\Sigma)$ the space of probability measures on $(X,\Sigma)$. If $X=\mathbb{R}^{d}$ for some $d\in\mathbb{N}$ then $\Sigma$ is implicitly assumed to be the corresponding Borel $\sigma$-algebra and we simply
write $\mathcal{P}(\mathbb{R}^{d})$.

For the next result, given two measurable spaces $(\Omega,\Sigma)$ and $(\Omega',\Sigma')$, a measurable map $F \colon \Omega \to \Omega'$, and a measure $\eta \in \P(\Omega,\Sigma)$, we define the {\em push-forward} measure $F_*\eta \in \P(\Omega',\Sigma')$ by  $[F_*\eta](A) = \eta(F^{-1}(A))$ for all $A \in \Sigma'$.

\begin{prop}
\label{clm:dkl}
Let $\nu_1,\nu_2,\eta_1,\eta_2$ be measures in $\P(\Omega,\Sigma)$, and let $\mu_1,\mu_2$ be probability measures in $\P(\Omega',\Sigma')$. Assume that $\dkl(\nu_1\Vert\nu_2)$, $\dkl(\eta_1\Vert\eta_2)$ and $\dkl(\mu_1\Vert\mu_2)$ are all finite. Let $F \colon \Omega \to \Omega'$ be measurable. Then:
\begin{enumerate}
    \item $\dkl(\nu_1\Vert\nu_2) \geq 0$ with equality if and only if $\nu_1=\nu_2$.
    \item $\dkl(\nu_1\times\mu_1\Vert\nu_2\times\mu_2) =\dkl(\nu_1\Vert\nu_2)+\dkl(\mu_1\Vert\mu_2)$.
    \item For all $\alpha \in (0,1)$, 
    \[
    \dkl(\alpha\nu_1+(1-\alpha)\eta_1\Vert\alpha\nu_2+(1-\alpha)\eta_2) \leq \alpha \dkl(\nu_1\Vert\nu_2)+(1-\alpha)\dkl(\eta_1\Vert\eta_2).
    \]
    \item $\dkl(F_*\nu_1\Vert F_*\mu_1) \leq \dkl(\nu_1\Vert \mu_1)$.
\end{enumerate}
\end{prop}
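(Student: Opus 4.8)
The plan is to treat all four statements as consequences of a single fact: the strict convexity of $\varphi(t)=t\log t$ (with $\varphi(1)=0$) combined with Jensen's inequality, once each divergence is rewritten as an integral of $\varphi$ applied to a likelihood ratio. Throughout I would fix, for each part, a common dominating measure $\lambda$ (e.g.\ $\lambda=\nu_2$ in part~1, or $\lambda=\tfrac14(\nu_1+\nu_2+\eta_1+\eta_2)$ in part~3) and pass to densities, so that the algebra of Radon--Nikodym derivatives reduces to ordinary pointwise arithmetic. With this in place, parts~(1) and~(3) are Jensen applications, part~(2) is a Fubini computation, and part~(4) is a conditional Jensen argument.

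For part~(1) I would write $\dkl(\nu_1\Vert\nu_2)=\int \varphi\!\left(\tfrac{\dd\nu_1}{\dd\nu_2}\right)\dd\nu_2$ and apply Jensen: since $\int \tfrac{\dd\nu_1}{\dd\nu_2}\,\dd\nu_2=\nu_1(\Omega)=1$, the integral is at least $\varphi(1)=0$, and strict convexity of $\varphi$ forces equality exactly when $\tfrac{\dd\nu_1}{\dd\nu_2}$ is $\nu_2$-a.e.\ constant, hence equal to $1$, i.e.\ $\nu_1=\nu_2$. For part~(2) I would use that the Radon--Nikodym derivative of a product factorizes, $\tfrac{\dd(\nu_1\times\mu_1)}{\dd(\nu_2\times\mu_2)}(\omega,\omega')=\tfrac{\dd\nu_1}{\dd\nu_2}(\omega)\,\tfrac{\dd\mu_1}{\dd\mu_2}(\omega')$; taking logarithms splits the integrand additively, and Fubini together with the fact that $\nu_1,\mu_1$ are probability measures (so each spectator factor integrates to $1$) yields $\dkl(\nu_1\Vert\nu_2)+\dkl(\mu_1\Vert\mu_2)$.

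For part~(3) the natural route is the pointwise \emph{log-sum inequality}, which is precisely the statement that the perspective function $g(x,y)=x\log(x/y)$ is jointly convex. Writing densities $p_i,r_i$ of $\nu_i,\eta_i$ against $\lambda$ and applying log-sum with $a_1=\alpha p_1$, $a_2=(1-\alpha)r_1$, $b_1=\alpha p_2$, $b_2=(1-\alpha)r_2$ gives the integrand inequality pointwise, and integrating against $\lambda$ delivers the bound. Strictness comes from the strict convexity of $g$ transverse to rays through the origin: equality in log-sum holds pointwise iff $a_1/b_1=a_2/b_2$, i.e.\ iff the likelihood ratios $\tfrac{\dd\nu_1}{\dd\nu_2}$ and $\tfrac{\dd\eta_1}{\dd\eta_2}$ agree $\lambda$-a.e. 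Thus the inequality is strict whenever these ratios differ, which in particular covers the asserted case $(\nu_1,\nu_2)\neq(\eta_1,\eta_2)$.

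For part~(4) I would express the push-forward density as a conditional expectation. Setting $\mathcal{G}=F^{-1}(\Sigma')$ and $q=\tfrac{\dd\nu_1}{\dd\mu_1}$, the data-processing identity $\tfrac{\dd F_*\nu_1}{\dd F_*\mu_1}\circ F = E_{\mu_1}[q\mid\mathcal{G}]$ holds because both sides are $\mathcal{G}$-measurable and integrate identically over every $\mathcal{G}$-set (the defining property of conditional expectation). Then, by conditional Jensen for the convex $\varphi$ and the tower property, $\dkl(F_*\nu_1\Vert F_*\mu_1)=\int \varphi\!\big(E_{\mu_1}[q\mid\mathcal{G}]\big)\,\dd\mu_1 \le \int E_{\mu_1}[\varphi(q)\mid\mathcal{G}]\,\dd\mu_1 = \int \varphi(q)\,\dd\mu_1 = \dkl(\nu_1\Vert\mu_1)$. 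I expect the only genuine obstacle to be the rigorous identification of the push-forward density with this $\mathcal{G}$-conditional expectation, together with checking that all divergences and integrals are finite so that the manipulations and the conditional Jensen step are justified; the remaining parts are routine once a common dominating measure is fixed, and the only other point requiring care is invoking strict convexity of $\varphi$ (resp.\ of $g$ transverse to rays) to pin down the equality cases in~(1) and~(3).
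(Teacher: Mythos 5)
Your proposal is correct in substance but is a genuinely different contribution from what the paper offers: the paper gives no proof at all for parts (1)--(3), declaring them well known, and for part (4) simply cites \cite[Proposition 2.4]{austinentropy}. Your unified treatment---writing each divergence as an integral of $\varphi(t)=t\log t$ composed with a likelihood ratio, then Jensen for (1), Radon--Nikodym factorization plus Fubini for (2), the log-sum inequality for (3), and the identification $\frac{\dd F_*\nu_1}{\dd F_*\mu_1}\circ F = E_{\mu_1}\bigl[\tfrac{\dd\nu_1}{\dd\mu_1}\,\big|\,F^{-1}(\Sigma')\bigr]$ followed by conditional Jensen for (4)---is exactly the standard machinery behind those citations, and your part (4) is in essence the proof of the cited result. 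The integrability points you flag are handled correctly by your observations: since $t(\log t)^-\leq 1/e$, the negative part of $\log\frac{\dd\nu_1}{\dd\nu_2}$ always has $\nu_1$-integral at most $1/e$, so finiteness of the divergences gives the absolute integrability needed for Fubini in (2) and for conditional Jensen in (4).

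There is, however, one genuine gap, in the strictness claim of part (3). Your own equality analysis is the correct one: equality holds if and only if $\frac{\dd\nu_1}{\dd\nu_2}=\frac{\dd\eta_1}{\dd\eta_2}$ $\lambda$-a.e. But your final sentence, that this ``covers the asserted case $(\nu_1,\nu_2)\neq(\eta_1,\eta_2)$,'' is a non sequitur: the likelihood ratios can coincide even when the pairs differ. Concretely, take $\nu_1=\nu_2=P$ and $\eta_1=\eta_2=Q$ with $P\neq Q$ mutually absolutely continuous; then both ratios are identically $1$, both sides of the inequality equal $0$, and equality holds although $\nu_1\neq\eta_1$. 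So strictness cannot be deduced from the hypothesis $(\nu_1,\nu_2)\neq(\eta_1,\eta_2)$ alone---the proposition as literally stated is too strong, and the correct equality condition is precisely the one you derived. You should state it that way rather than paper over the discrepancy; note that this corrected form is also what the paper's downstream uses actually require (e.g., the strict-inequality remark after Proposition \ref{prop:llr-convex}: experiments that are not Blackwell equivalent must have likelihood ratios differing on a set of positive measure for some pair of states, which is exactly what triggers strictness in your argument).
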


It is well known that KL-divergence satisfies the first three properties in the statement of the proposition. We refer the reader to \cite[][Proposition 2.4]{austinentropy} for a proof of the last property.

\begin{lem}\label{lem:llr}
Two experiments $\mu = (S,(\mu_i))$ and $\nu = (T,(\nu_i))$ that satisfy $\bar{\mu}_i = \bar{\nu}_i$ for every $i \in \Theta$ are equivalent in the Blackwell order. 
\end{lem}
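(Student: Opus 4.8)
The plan is to reduce Blackwell equivalence to equality of the induced distributions over posteriors under the uniform prior, and then to show that this distribution over posteriors is entirely determined by the conditional laws $(\bar\mu_i)$ of the log-likelihood ratio vector. Since the paper defines $\mu$ to dominate $\nu$ whenever $\int f \,\dd\pi_\mu \geq \int f \,\dd\pi_\nu$ for every convex $f$, it suffices to establish the stronger statement $\pi_\mu = \pi_\nu$: this yields mutual dominance, and hence equivalence, immediately.

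First I would record how posteriors are computed under the uniform prior $q$. Fixing a base state $0 \in \Theta$ and using mutual absolute continuity (which guarantees the relevant derivatives are finite a.e.), Bayes' rule gives, for a realization $s$,
\[
p_i(s) = \frac{e^{\ell_{i0}(s)}}{\sum_{k \in \Theta} e^{\ell_{k0}(s)}}.
\]
Thus the posterior is the image of the log-likelihood ratio vector $L(s)$ under a fixed, state-independent measurable map $\Phi \colon \RR^{\Theta \times \Theta} \to \P(\Theta)$, namely $\Phi(L)_i = e^{\ell_{i0}}/\sum_{k} e^{\ell_{k0}}$. The coordinates $\ell_{i0}$ are recoverable from the full vector $L$, so $\Phi$ depends on the realization $s$ only through $L(s)$.

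Next I would express $\pi_\mu$ as a push-forward. Under the uniform prior the unconditional law of the realization is $\tfrac{1}{|\Theta|}\sum_i \mu_i$, so by the definition of $\bar\mu_i$ the unconditional law of $L$ is $\tfrac{1}{|\Theta|}\sum_i \bar\mu_i$. Because $\pi_\mu$ is by definition the distribution of the posterior $p = \Phi(L)$, it follows that
\[
\pi_\mu = \Phi_*\!\left(\frac{1}{|\Theta|}\sum_{i \in \Theta} \bar\mu_i\right),
\]
and the analogous identity holds for $\nu$. The hypothesis $\bar\mu_i = \bar\nu_i$ for every $i$ makes the two averaged measures coincide, and pushing forward by the same map $\Phi$ gives $\pi_\mu = \pi_\nu$. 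Mutual dominance, and hence Blackwell equivalence, follows at once.

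The main obstacle is the bookkeeping surrounding $\Phi$: verifying that under mutual absolute continuity the posterior is genuinely a deterministic function of $L$, well defined up to a null set of the relevant averaged measure, so that the push-forward representation of $\pi_\mu$ is unambiguous, and checking the measurability of $\Phi$. Everything else is routine once that representation is in place.
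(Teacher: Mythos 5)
Your proposal is correct and follows essentially the same route as the paper's own proof: both express the posterior under the uniform prior via Bayes' rule as a fixed measurable function of the log-likelihood ratio vector $L$, and conclude that equality of the conditional laws $(\bar\mu_i)$ forces equality of the induced distributions over posteriors, hence Blackwell equivalence. Your explicit push-forward bookkeeping with $\Phi$ just spells out the step the paper states more tersely.
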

\begin{proof}
The result is standard, but we include a proof for completeness. Suppose $\bar{\mu}_i = \bar{\nu}_i$ for every $i \in \Theta$. Given the experiment $\mu$ and a uniform prior on $\Theta$, the posterior probability of state $i$ conditional on $s$ is given almost surely by
\begin{equation}\label{eq:bayes}
    p_i(s) = \frac{\dd\mu_i}{\dd \sum_{j\in\Theta} \mu_j}(s) = \frac{1}{\sum_{j\in\Theta}\frac{\dd\mu_j}{\dd\mu_i}(s)} =  \frac{1}{\sum_{j\in\Theta}\ee^{\ell_{ji}(s)}}
\end{equation}
and the corresponding expression applies to experiment $\nu$. By assumption, conditional on each state the two experiments induce the same distribution of log-likelihood ratios $(\ell_{ij})$. Hence, by \eqref{eq:bayes} they must induce the same distribution over posteriors, hence be equivalent in the Blackwell order.
\end{proof}

A consequence of Proposition \ref{clm:dkl} is that the LLR cost is monotone with respect to the Blackwell order:
\begin{proof}[Proof of Proposition \ref{prop:llr-monotnoe}]
Let $C$ be an LLR cost. It is immediate that if $\bar{\mu}_i = \bar{\nu}_i$ for every $i$ then $C(\mu) = C(\nu)$. We can assume without loss of generality that $S = T = \mathcal{P}(\Theta)$, endowed with the Borel $\sigma$-algebra. This follows from the fact that we can define a new experiment $\rho = (\mathcal{P}(\Theta),(\rho_i))$ such that $\bar{\mu}_i = \bar{\rho}_i$ for every $i$ (see, e.g. \cite*{le1996comparison}), and apply the same result to $\nu$ . By Blackwell's Theorem there exists a probability space $\left(R,\lambda\right)$ and a ``garbling''
map $G\colon S\times R\to T$ such that for each $i\in\Theta$ it holds that $\nu_i = G_*(\mu_i \times \lambda)$. Hence, by the first, second and fourth statements in Proposition~\ref{clm:dkl},
\begin{align*}
\dkl(\nu_{i}\Vert\nu_{j})
&=\dkl(G_*(\mu_i \times \lambda)\Vert G_*(\mu_j \times \lambda))\\
&\leq \dkl(\mu_i \times \lambda\Vert\mu_j \times \lambda)\\
&= \dkl(\mu_i\Vert\mu_j)+\dkl(\lambda\Vert\lambda)\\
&=\dkl(\mu_i \Vert\mu_j ).
\end{align*}
Therefore, by Theorem~\ref{thm:repr-1}, we have
\[
C(\nu)=\sum_{i,j}\beta_{ij}\dkl(\nu_{i}\Vert\nu_{j})\leq\sum_{i,j}\beta_{ij}\dkl(\mu_{i}\Vert\mu_{j})=C\left(\mu\right).\qedhere
\]
\end{proof}
We note that a similar argument shows that if all the coefficients $\beta_{ij}$ are positive then $C(\mu) > C(\nu)$ whenever $\mu$ Blackwell dominates $\nu$ but $\nu$ does not dominate $\mu$.

An additional direct consequence of Proposition~\ref{clm:dkl} is that the LLR cost is convex:
\begin{prop}
\label{prop:llr-convex}
Let $\mu=(S,(\mu_i))$ and $\nu=(S,(\nu_i))$ be experiments in $\mathcal{E}$. Given $\alpha \in (0,1)$, define the experiment $\eta = (S, (\nu_i))$ as $\eta_i = \alpha\nu_i+(1-\alpha)\mu_i$ for each $i$. Then any LLR cost $C$ satisfies
\[
  C(\eta) \leq \alpha C(\nu)+(1-\alpha)C(\mu).
\]
\end{prop}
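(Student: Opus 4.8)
The plan is to derive the inequality directly from the \emph{joint convexity} of the Kullback--Leibler divergence, which is exactly the third statement of Proposition~\ref{clm:dkl}. Once $C$ is written in its divergence representation, the claim decouples into one convexity inequality per ordered pair of states, and the nonnegativity of the coefficients $(\beta_{ij})$ does the rest. The only thing requiring care is that the representation of Theorem~\ref{thm:repr-1} may legitimately be applied to the mixture $\eta$.

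First I would invoke Theorem~\ref{thm:repr-1}: since $C$ is an LLR cost it admits the representation $C(\rho)=\sum_{i,j}\beta_{ij}\,\dkl(\rho_i\Vert\rho_j)$ with $\beta_{ij}\ge 0$, for every experiment $\rho\in\mathcal{E}$. Applying this to $\mu$, $\nu$ and $\eta$ reduces the proposition to a termwise estimate. Next, for each fixed ordered pair $(i,j)$ I would apply Proposition~\ref{clm:dkl}(3) with $\nu_1=\nu_i$, $\eta_1=\mu_i$, $\nu_2=\nu_j$, $\eta_2=\mu_j$; since $\eta_i=\alpha\nu_i+(1-\alpha)\mu_i$ and $\eta_j=\alpha\nu_j+(1-\alpha)\mu_j$, this yields
\[
\dkl(\eta_i\Vert\eta_j)\le\alpha\,\dkl(\nu_i\Vert\nu_j)+(1-\alpha)\,\dkl(\mu_i\Vert\mu_j).
\]
Multiplying by $\beta_{ij}\ge 0$, summing over all $i,j\in\Theta$, and regrouping the two resulting sums then gives $C(\eta)\le\alpha\,C(\nu)+(1-\alpha)\,C(\mu)$, the desired conclusion. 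Notably, this is the only place the nonnegativity of the coefficients is used, and it is essential.

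The genuinely delicate point --- and hence the main obstacle --- is ensuring that $C(\eta)$ is well defined, i.e.\ that $\eta\in\mathcal{E}$, so that the divergence representation applies to it. Mutual absolute continuity of the family $(\eta_i)$ is immediate, since $\eta_j(A)=0$ forces $\mu_j(A)=\nu_j(A)=0$ and hence, by mutual absolute continuity within $\mu$ and within $\nu$, also $\eta_i(A)=0$. The finiteness of each $\dkl(\eta_i\Vert\eta_j)$ is already guaranteed by the displayed bound, as these are dominated by a convex combination of finite quantities. For the remaining moment conditions defining $\mathcal{E}$, I would use the mediant inequality: writing densities against a common dominating measure, $\tfrac{\dd\eta_i}{\dd\eta_j}$ lies pointwise between $\tfrac{\dd\mu_i}{\dd\mu_j}$ and $\tfrac{\dd\nu_i}{\dd\nu_j}$, so the log-likelihood ratios of $\eta$ satisfy $|\ell^{\eta}_{ij}|\le\max\{|\ell^{\mu}_{ij}|,|\ell^{\nu}_{ij}|\}$ pointwise. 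Integrating the relevant products against $\eta_i=\alpha\nu_i+(1-\alpha)\mu_i$ and bounding each factor by $|\ell^{\mu}_{ij}|+|\ell^{\nu}_{ij}|$ reduces the finiteness of the moments of $\eta$ to controlling \emph{cross-experiment} moments of $\mu$ and $\nu$; this bookkeeping (which may be streamlined by appealing to closure properties of the class $\mathcal{E}$) is the technical heart of the verification, whereas the convexity inequality itself is immediate from Proposition~\ref{clm:dkl}.
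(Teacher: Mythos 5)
Your core argument is precisely the paper's proof: the paper derives this proposition in one line from the joint convexity of the Kullback--Leibler divergence (part 3 of Proposition~\ref{clm:dkl}), applied to each term of the representation in Theorem~\ref{thm:repr-1} and combined with $\beta_{ij}\geq 0$. On the main inequality there is nothing to add.

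The gap is in the extra step you attempt --- the verification that $\eta\in\mathcal{E}$, a point the paper silently skips. Your pointwise bound $|\ell^{\eta}_{ij}|\le\max\{|\ell^{\mu}_{ij}|,|\ell^{\nu}_{ij}|\}$ via the mediant inequality is correct, but the reduction you then propose does not close: integrating it against $\eta_i=\alpha\nu_i+(1-\alpha)\mu_i$ produces cross-experiment moments such as $\int_S|\ell^{\mu}_{ij}|^m\,\dd\nu_i$, and membership of $\mu$ and $\nu$ in $\mathcal{E}$ places no restriction on these. Concretely, on a countable signal space take $\mu_i(s_n)\sim 2^{-n}$ and $\mu_j(s_n)\sim 2^{-n}e^{-n}$, so that $\ell^{\mu}_{ij}(s_n)\approx n$ and all moments of $\mu$ are finite, while $\nu_i(s_n)=\nu_j(s_n)\sim n^{-2}$; then $\int_S|\ell^{\mu}_{ij}|^m\,\dd\nu_i=\infty$ for $m\ge 2$ even though $\mu,\nu\in\mathcal{E}$, and the paper states no ``closure properties of $\mathcal{E}$'' that could be invoked instead. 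Note that in this example the mixture itself is harmless: $\ell^{\eta}_{ij}(s_n)\to 0$ because the uninformative $\nu$-part dominates both numerator and denominator of $\dd\eta_i/\dd\eta_j$ --- exactly the cancellation your crude bound throws away. A verification that does work: (i) negative parts are automatically fine for \emph{any} mutually absolutely continuous pair, since $\int_{\{\dd\eta_i/\dd\eta_j<1\}}\left(\log\frac{\dd\eta_j}{\dd\eta_i}\right)^m\dd\eta_i\le\sup_{t>0}t^m e^{-t}=(m/e)^m$; (ii) for positive parts, on the region where $\dd\eta_i/\dd\eta_j\ge 1$ and (say) $\alpha\,\dd\nu_i\ge(1-\alpha)\,\dd\mu_i$, one has both $\dd\eta_i\le 2\alpha\,\dd\nu_i$ and $\dd\eta_i/\dd\eta_j\le 2\,\dd\nu_i/\dd\nu_j$ (because $\dd\eta_j\ge\alpha\,\dd\nu_j$), so that region's contribution is controlled by $\nu$'s own moments, and symmetrically for $\mu$ on the complementary region; (iii) mixed moments then follow from single-pair moments by H\"older's inequality. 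With that replacement your write-up would be complete; as it stands, the step you yourself identify as the technical heart rests on a bound that is genuinely false.
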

The result follows immediately from the third statement in Proposition~\ref{clm:dkl}.  We now study the set
\[
    \mathcal{D} = \{(\dkl(\mu_i \Vert \mu_j))_{i\neq j} : \mu \in \mathcal{E}\} \subseteq \RR_{+}^{(n+1)n}
\]
of all possible pairs of expected log-likelihood ratios induced by some experiment $\mu$. The next result shows that $\mathcal{D}$ contains the strictly positive orthant.

\begin{lem}\label{lem:domain}
$\RR_{++}^{(n+1)n} \subseteq \mathcal{D}$
\end{lem}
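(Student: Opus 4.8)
The goal is to show that for every vector $(d_{ij})_{i \neq j}$ of strictly positive reals, there exists an experiment $\mu \in \mathcal{E}$ whose expected log-likelihood ratios realize exactly those values, i.e.\ $\dkl(\mu_i \Vert \mu_j) = d_{ij}$ for all $i \neq j$. My plan is to construct such an experiment explicitly from a flexible parametric family and then invoke a continuity/intermediate-value argument to hit the prescribed target vector.

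\textbf{Construction of building blocks.} First I would build, for each ordered pair of states, an elementary experiment that raises essentially one divergence while controlling the others. A natural choice is a family of Gaussian experiments: let $\mu_k = \mathcal{N}(m_k, 1)$ where the means $(m_k)_{k \in \Theta}$ are free parameters. By the closed-form expression \eqref{eq:normal-signal}, the pairwise divergences are $\dkl(\mu_i \Vert \mu_j) = \tfrac{1}{2}(m_i - m_j)^2$, which are manifestly finite together with all moments, so such experiments lie in $\mathcal{E}$. However, Gaussian experiments only produce \emph{symmetric} divergences ($\dkl(\mu_i\Vert\mu_j) = \dkl(\mu_j\Vert\mu_i)$) and the off-diagonal targets $d_{ij}$ need not be symmetric; moreover the $\tfrac{1}{2}(m_i-m_j)^2$ values are constrained by triangle-type relations and cannot be chosen independently. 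So Gaussians alone do not suffice, and I would instead build one experiment per ordered pair that perturbs a single coordinate.

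\textbf{Independent combination.} The key structural tool is the additivity of KL-divergence under products (statement 2 of Proposition~\ref{clm:dkl}): if $\mu = \bigotimes_{(i,j)} \mu^{(i,j)}$ is the product of a collection of experiments, then $\dkl(\mu_i \Vert \mu_j) = \sum_{(k,l)} \dkl(\mu^{(k,l)}_i \Vert \mu^{(k,l)}_j)$. Thus if I can construct, for each ordered pair $(i,j)$ with $i \neq j$, an experiment $\mu^{(i,j)}$ and a scalar dial $t \geq 0$ such that the map $t \mapsto \bigl(\dkl(\mu^{(i,j)}_k \Vert \mu^{(i,j)}_l)\bigr)_{k \neq l}$ is continuous, vanishes at $t = 0$, and is strictly increasing in the $(i,j)$ coordinate while remaining controllable in the others, then taking an appropriate product over all pairs and solving for the dials yields the target vector. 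The cleanest realization is to make each $\mu^{(i,j)}$ a dilution or a signal that is informative \emph{only} about separating $i$ from the rest: for instance a binary signal whose conditional distributions differ across states only through a single parameter. Using dilution (Axiom~\ref{axm:affinity-1}'s operation $\alpha \cdot \mu$) together with the fact that $\dkl$ scales continuously from $0$ to its full value as $\alpha$ ranges over $[0,1]$, I can tune each individual divergence continuously starting from zero.

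\textbf{Hitting the target by a fixed-point/continuity argument.} Once the building blocks give a continuous map $\Phi \colon \prod_{(i,j)} [0,\infty) \to \RR_{+}^{(n+1)n}$ sending the vector of dials to the resulting divergence vector, with $\Phi$ surjective onto (a neighborhood of) the positive orthant, the result follows. The cleanest way to guarantee surjectivity is to arrange the Jacobian of $\Phi$ at some point to be a dominant-diagonal (hence invertible) matrix, or more robustly to set up $\Phi$ so that increasing dial $(i,j)$ raises coordinate $(i,j)$ without bound while perturbing other coordinates in a controlled, continuous way, and then apply a Poincar\'e--Miranda / degree-theoretic surjectivity argument, or simply solve coordinate by coordinate if the building blocks are chosen to be ``triangular'' (dial $(i,j)$ affects only divergences indexed by pairs involving $i$).

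\textbf{Main obstacle.} The principal difficulty is the \emph{coupling} between coordinates: any single experiment that changes $\dkl(\mu_i\Vert\mu_j)$ will typically also change other divergences $\dkl(\mu_i\Vert\mu_l)$ and $\dkl(\mu_k\Vert\mu_j)$, because the conditional laws $\mu_i, \mu_j$ enter multiple divergences simultaneously. The asymmetry of the targets ($d_{ij} \neq d_{ji}$ in general) compounds this, since symmetric constructions like Gaussians cannot separate the two. I expect the bulk of the work to be exhibiting a sufficiently rich family of elementary experiments whose divergence contributions can be made effectively independent across the $(n+1)n$ ordered pairs---for example by placing the distinguishing ``mass'' of $\mu^{(i,j)}$ on a signal realization that is nearly diagnostic of $i$ versus $j$ alone---and then verifying that the resulting combined map is continuous and surjective onto the open positive orthant. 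Establishing invertibility of the relevant map (via a strictly dominant diagonal or a homotopy argument) is the technical crux; the membership in $\mathcal{E}$ and the additive bookkeeping are routine given Proposition~\ref{clm:dkl}.
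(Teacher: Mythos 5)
Your overall architecture is sound and, once unwound, is essentially equivalent to the paper's: the paper proves that $\mathcal{D}$ is a convex cone (mixing on disjoint supports gives convexity, products give additivity) and then kills any putative separating hyperplane, while you propose to hit the target vector directly by taking products of diluted building blocks and solving a linear system. Indeed, since dilution scales an experiment's entire divergence vector exactly linearly ($\dkl((\alpha\cdot\mu)_i \Vert (\alpha\cdot\mu)_j) = \alpha\,\dkl(\mu_i\Vert\mu_j)$) and products add divergence vectors, your map from ``dials'' to divergence vectors is exactly linear in the dials, so the degree-theoretic/fixed-point machinery you invoke is unnecessary: diagonal dominance and matrix inversion would suffice. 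But both your route and the paper's hinge on the same single ingredient, and it is precisely the one you leave unproven: for each ordered pair $(i_0,j_0)$, an experiment in $\mathcal{E}$ whose divergence vector is concentrated near the coordinate axis $e_{(i_0,j_0)}$, i.e.\ with $\dkl(\mu_{i_0}\Vert\mu_{j_0})$ arbitrarily large while \emph{all} other ordered divergences, including $\dkl(\mu_{j_0}\Vert\mu_{i_0})$, are arbitrarily small. You correctly identify that Gaussians cannot do this, flag the construction as ``the technical crux,'' and then assert you ``expect'' it can be done --- that assertion is the gap.

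The reason this is not routine is the bystander states, which your sketch never addresses: an experiment must assign a distribution $\mu_k$ to every $k\notin\{i_0,j_0\}$, and the naive choices fail --- setting $\mu_k=\mu_{i_0}$ makes $\dkl(\mu_k\Vert\mu_{j_0})$ huge for all such $k$, and setting $\mu_k=\mu_{j_0}$ makes $\dkl(\mu_{i_0}\Vert\mu_k)$ huge; a ``binary signal differing through a single parameter'' does not specify a way out. What is actually needed is a triple of mutually absolutely continuous distributions $\pi_1,\pi_2,\pi_3$ (with $\pi_2$ assigned to every bystander) such that exactly one of the six ordered divergences, $\dkl(\pi_3\Vert\pi_1)$, exceeds $N$ while the other five are below $1/N$. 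This is what the paper constructs, and it is the bulk of its proof: heavy-tailed distributions on $[2,\infty)$ with tails $1-2/x$, $1-\log^2 2/\log^2 x$, $1-\log 2/\log x$, chosen so that $\dkl(\pi_3\Vert\pi_1)=\infty$ while every other ordered divergence is finite; diluting toward the point mass $\delta_2$ then drives the finite divergences to zero without affecting the infinite one, and truncating to $[2,M]$ turns the infinite divergence into an arbitrarily large finite one (and keeps all log-likelihood ratios bounded, so the experiment lies in $\mathcal{E}$). Such triples can also be built with finite support, but only through a comparably delicate balancing of exponentially separated scales; either way, this construction is the actual content of the lemma, and without it your proposal is a plan rather than a proof.
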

\begin{proof}
  The set $\mathcal{D}$ is convex. To see this, let $\mu = (S,(\mu_i))$ and $\nu = (T,(\nu_i))$ be two experiments. Without loss of generality, we can suppose that $S = T$, and $S = S_1 \cup S_2$, where $S_1,S_2$ are disjoint, and $\mu_i(S_1) = \nu_i(S_2) = 1$ for every $i$.
  
  Fix $\alpha \in (0,1)$ and define the new experiment $\tau = (S, (\tau_i))$ where $\tau_i = \alpha \mu_i + (1-\alpha)\nu_i$ for every $i$. It can be verified that $\tau_i$-almost surely, $\frac{\dd\tau_i}{\dd\tau_j}$ satisfies $\frac{\dd\tau_i}{\dd\tau_j}(s) = \frac{\dd\mu_i}{\dd\mu_j}(s)$ if $s\in S_1$ and $\frac{\dd\tau_i}{\dd\tau_j}(s) = \frac{\dd\nu_i}{\dd\nu_j}(s)$ if $s \in S_2$. It then follows that
  \[
    \dkl(\tau_i \Vert \tau_j) = \alpha \dkl(\mu_i \Vert \mu_j) + (1-\alpha) \dkl(\nu_i \Vert \nu_j).
  \]
  Hence $\mathcal{D}$ is convex. We now show $\mathcal{D}$ is a convex cone. First notice that the zero vector belongs to $\mathcal{D}$, since it corresponds to the totally uninformative experiment. In addition (see \S\ref{sec:dkl}),
  \[
    \dkl((\mu \otimes \mu)_i \Vert (\mu \otimes \mu)_j) = \dkl(\mu_i \times \mu_i \Vert \mu_j \times \mu_j) = 2 \dkl(\mu_i \Vert \mu_j) 
  \]
  Hence $\mathcal{D}$ is closed under addition. Because $\mathcal{D}$ is also convex and contains the zero vector, it is a convex cone.
  
  Suppose, by way of contradiction, that the inclusion $\RR_{++}^{(n+1)n} \subseteq \mathcal{D}$ does not hold. This implies we can find a vector $z \in \RR_{+}^{(n+1)n}$ that does not belong to the closure of $\mathcal{D}$. Therefore, there exists a nonzero vector $w \in \RR^{(n+1)n}$ and $t \in \RR$ such that $w\cdot z > t \geq w\cdot y$ for all $y \in \mathcal{D}$. Because $\mathcal{D}$ is a cone, then $t \geq 0$ and $0 \geq w\cdot y$ for all $y \in \mathcal{D}$. Hence, there must exist a coordinate $i_oj_o$ such that $w_{i_oj_o} > 0$. We now show this leads to a contradiction.

   Consider the following three cumulative
distribution functions on $[2,\infty)$:
\begin{align*}
F_{1}(x) & =1-\frac{2}{x}\\
F_{2}(x) & =1-\frac{\log^{2}2}{\log^{2}x}\\
F_{3}(x) & =1-\frac{\log2}{\log x},
\end{align*}
and denote by $\pi_{1},\pi_{2},\pi_{3}$ the corresponding measures.
A simple calculation shows that $\dkl(\pi_{3}\Vert\pi_{1})=\infty$, whereas
$\dkl(\pi_{a}\Vert\pi_{b})<\infty$ for any other choice of $a,b\in\left\{ 1,2,3\right\} $. 

Let $\pi_{a}^{\varepsilon}=\left(1-\varepsilon\right)\delta_{2}+\varepsilon\pi_{a}$
for every $a\in\left\{ 1,2,3\right\} $, where $\delta_{2}$ is the
point mass at $2$. Then still $\dkl(\pi_{3}^{\varepsilon}\Vert\pi_{1}^{\varepsilon})=\infty$,
but, for any other choice of $a$ and $b$ in $\{1,2,3\}$, the divergence
$D(\pi_{a}^{\varepsilon}\Vert\pi_{b}^{\varepsilon})$ vanishes as $\varepsilon$
goes to zero. Let $\pi_{a}^{\varepsilon,M}$ be the measure $\pi_{a}^{\varepsilon}$
conditioned on the interval $[2,M]$. Then $\dkl(\pi_{a}^{\varepsilon,M}\Vert\pi_{b}^{\varepsilon,M})$
tends to $\dkl(\pi_{a}^{\varepsilon}\Vert\pi_{b}^{\varepsilon})$ as $M$
tends to infinity, for any $a,b$. It follows that for every $N\in\mathbb{N}$
there exist $\varepsilon$ small enough and $M$ large enough such
that $\dkl(\pi_{3}^{\varepsilon,M}\Vert\pi_{1}^{\varepsilon,M})>N$ and,
for any other choice of $a,b$, $\dkl(\pi_{a}^{\varepsilon,M}\Vert\pi_{b}^{\varepsilon,M})<1/N$.

Consider the experiment $\mu=\left(\mathbb{R},\left(\mu_{i}\right)\right)$ where $\mu_{i_{0}}=\pi_{3}^{\varepsilon,M}$, $\mu_{j_{0}}=\pi_{1}^{\varepsilon,M}$ and $\mu_{k}=\pi_{2}^{\varepsilon,M}$ for all $k\not\in\left\{ i_{0},j_{0}\right\}$ and with $\varepsilon$ and $M$ so that the inequalities above hold for $N$ large enough. Then $\mu \in \mathcal{E}$  since all measures have bounded support. It satisfies $\dkl(\mu_{i_o} \Vert \mu_{j_o}) > N$ and $\dkl(\mu_i \Vert \mu_j) < 1/N$ for every other pair $ij$.

Now let $y \in \mathcal{D}$ be the vector defined by $\mu$. Then $w \cdot y > 0$ for $N$ large enough. A contradiction.
\end{proof}

\subsection{Experiments and Log-likelihood Ratios}

It will be convenient to consider, for each experiment, the distribution over log-likelihood ratios with respect to the state $i=0$ conditional on a state $j$. Given an experiment, we define $\ell_{i}=\ell_{i0}$ for every $i\in\Theta$. We say that a vector $\sigma  = (\sigma_{0},\sigma_{1},\ldots,\sigma_{n})\in\P(\mathbb{R}^{n})^{n+1}$ of measures is \emph{derived from the experiment} $(S,(\mu_i))$ if for every $i=0,1,\ldots,n$,
\[
\sigma_{i}(E) = \mu_i\left(\left\{ s:(\ell_{1}(s),\ldots,\ell_{n}(s))\in E\right\} \right)\text{ for all measurable }E\subseteq\RR^{n}.
\]
That is, $\sigma_{i}$ is the distribution of the vector $(\ell_{1},\ldots,\ell_{n})$ of log-likelihood ratios (with respect to state $0$) conditional on state $i$. There is a one-to-one relation between the vector $\sigma$ and the collection $(\bar{\mu}_i)$ of distributions defined in the main text: notice that $\ell_{ij} = \ell_{i0} - \ell_{j0}$ almost surely, hence knowing the distribution of $(\ell_{0i})_{i\in\Theta}$ is enough to recover the distribution of $(\ell_{ij})_{i,j\in \Theta}$. Nevertheless, working directly with $\sigma$ (rather than $(\bar{\mu}_i)$) will simplify the notation considerably.

We call a vector $\sigma\in\P(\mathbb{R}^{n})^{n+1}$ \textit{admissible} if it is derived from some experiment. The next result provides a straightforward characterization of admissible vectors of measures.
\begin{lem}\label{lem:admis}
A vector of measures $\sigma = (\sigma_{0},\sigma_{1},\ldots,\sigma_{n}) \in\P(\mathbb{R}^{n})^{n+1}$
is admissible if and only if the measures are mutually absolutely
continuous and, for every $i$, satisfy $\frac{\dd\sigma_{i}}{\dd\sigma_{0}}(\xi)=e^{\xi_{i}}$ for $\sigma_i$-almost every $\xi\in\RR^{n}$.
\end{lem}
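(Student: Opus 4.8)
The plan is to prove the two implications directly. The forward direction rests on the change-of-variables formula for pushforward measures, together with the simple but crucial observation that the relevant Radon-Nikodym derivative is, by its very definition, a function of the log-likelihood ratio vector. The converse is then almost immediate once one guesses the right experiment.

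First I would suppose that $\sigma$ is derived from an experiment $(S,(\mu_i))$, and write $L=(\ell_1,\ldots,\ell_n)\colon S\to\RR^{n}$ for the map whose pushforward under $\mu_i$ is $\sigma_i$, so that $\sigma_i=L_*\mu_i$. The key point is that the density of $\mu_i$ with respect to $\mu_0$ factors through $L$: by the definition of the log-likelihood ratio, $\frac{\dd\mu_i}{\dd\mu_0}(s)=e^{\ell_i(s)}=g_i(L(s))$, where $g_i(\xi)=e^{\xi_i}$. I would then compute, for any Borel $A\subseteq\RR^{n}$,
\[
\sigma_i(A)=\mu_i(L^{-1}(A))=\int_{L^{-1}(A)}\frac{\dd\mu_i}{\dd\mu_0}\,\dd\mu_0=\int_{L^{-1}(A)}g_i\circ L\,\dd\mu_0=\int_A g_i\,\dd\sigma_0,
\]
where the last equality is the change-of-variables identity applied to the pushforward $\sigma_0=L_*\mu_0$. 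This shows $\sigma_i\ll\sigma_0$ with $\frac{\dd\sigma_i}{\dd\sigma_0}(\xi)=e^{\xi_i}$, as required. Since $e^{\xi_i}>0$ everywhere, one also has $\sigma_0\ll\sigma_i$, and then $\frac{\dd\sigma_i}{\dd\sigma_j}=e^{\xi_i-\xi_j}$ is strictly positive, so the whole family is mutually absolutely continuous.

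For the converse, suppose $\sigma=(\sigma_0,\ldots,\sigma_n)$ is mutually absolutely continuous and satisfies $\frac{\dd\sigma_i}{\dd\sigma_0}(\xi)=e^{\xi_i}$. I would simply exhibit the experiment $(\RR^{n},(\sigma_i))$, that is, take $S=\RR^{n}$ and $\mu_i=\sigma_i$. Its log-likelihood ratios are $\ell_i(\xi)=\log\frac{\dd\sigma_i}{\dd\sigma_0}(\xi)=\xi_i$, so the associated map $L$ is the identity on $\RR^{n}$; hence the vector of measures derived from this experiment is $L_*\sigma_i=\sigma_i$ for each $i$, and $\sigma$ is admissible.

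The only genuine obstacle is justifying the interchange of density and pushforward in the forward direction, namely $\int_{L^{-1}(A)}g_i\circ L\,\dd\mu_0=\int_A g_i\,\dd\sigma_0$. This is precisely the abstract change-of-variables identity $\int (h\circ L)\,\dd\mu_0=\int h\,\dd(L_*\mu_0)$ applied to $h=g_i\mathbf{1}_A$, and it hinges on the observation, used at the very start, that $\frac{\dd\mu_i}{\dd\mu_0}$ is manifestly a function of $L$. Everything else is bookkeeping.
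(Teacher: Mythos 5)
Your proof is correct and follows essentially the same route as the paper's: the forward direction is the same change-of-variables computation exploiting that $\frac{\dd\mu_i}{\dd\mu_0}=e^{\ell_i}$ factors through the log-likelihood map $L$, and the converse exhibits the very same experiment $(\RR^n,(\sigma_i))$ whose log-likelihood ratios are the coordinate functions. The only (welcome) addition is that you spell out mutual absolute continuity from the strict positivity of $e^{\xi_i}$, which the paper leaves implicit.
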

\begin{proof}
If $(\sigma_{0},\sigma_{1},\ldots,\sigma_{n})$ is admissible then there exists an experiment $\mu =(S,(\mu_i))$ such that for any measurable $E\subseteq\RR^{n}$
\begin{eqnarray*}
\int_{E}e^{\xi_{i}}\,\dd\sigma_{0}(\xi) & = & \int1_{E}\left(\left(\ell_{1}(s),\ldots\ell_{n}(s)\right)\right)e^{\ell_{i}(s)}\,\dd\mu_0(s)\\
 & = & \int1_{E}\left(\left(\ell_{1}(s),\ldots\ell_{n}(s)\right)\right)\,\dd\mu_i(s)
\end{eqnarray*}
where $1_{E}$ is the indicator function of $E$. So, $\int_{E}e^{\xi_{i}}\,\dd\sigma_{0}(\xi)=\sigma_{i}(E)$
for every $E\subseteq\RR^{n}$. Hence $e^{\xi_{i}}$ is a version of  $\frac{\dd\sigma_{i}}{\dd\sigma_{0}}$.

Conversely, assume $\frac{\dd\sigma_{i}}{\dd\sigma_{0}}(\xi)=e^{\xi_{i}}$ for almost every $\xi\in\RR^{n}$. Define an experiment $(\RR^{n+1},(\mu_i))$ where $\mu_i=\sigma_{i}$
for every $i$. The experiment $(\RR^{n+1},(\mu_i))$ is such that $\ell_{i}\left(\xi\right)=\xi_{i}$ for every $i>0$. Hence, for $i>0$, $\mu_i\left(\left\{ \xi:\left(\ell_{1}(\xi),\ldots,\ell_{n}(\xi)\right)\in E\right\} \right)$ is equal to
\begin{eqnarray*}
  \int1_{E}\left(\left(\ell_{1}(\xi),\ldots\ell_{n}(\xi)\right)\right)e^{\xi_i}\, \dd \sigma_{0}(\xi)
  =  \int1_{E}(\xi)e^{\xi_i}\, \dd\sigma_{0}(\xi) = \sigma_{i}(E)
\end{eqnarray*}
and similarly $\mu_0 \left(\left\{ \xi:\left(\ell_{1}\left(\xi\right),\ldots,\ell_{n}\left(\xi\right)\right)\in E\right\} \right)=\sigma_{0}\left(E\right)$. So $(\sigma_0,\sigma_1,\ldots,\sigma_n)$ is admissible.
\end{proof}

\subsection{Properties of Cumulants}\label{sec:cumulants}

The purpose of this section is to formally describe cumulants and their relation to moments. We follow \cite*{leonov1959method} and \citet*[][p. 289]{shiryaev}. Given a vector $\xi\in\RR^{n}$ and an integral vector $\alpha\in\mathbb{N}^{n}$ we write $\xi^{\alpha}=\xi_{1}^{\alpha_{1}}\xi_{2}^{\alpha_{2}}\cdots\xi_{n}^{\alpha_{n}}$ and use the notational conventions $\alpha!=\alpha_{1}!\alpha_{2}!\cdots\alpha_{n}!$ and $|\alpha|=\alpha_{1}+\cdots+\alpha_{n}$.

Let $A=\{ 0,\ldots,N\}^{n} \backslash\{0,\ldots,0\}$, for some constant $N \in \mathbb{N}$ greater or equal than $1$. For every probability measure $\sigma_1 \in\P (\RR^{n})$ and $\xi\in\mathbb{R}^{n}$, let $\varphi_{\sigma_1}(\xi)=\int _{\RR^n}e^{i\left\langle z,\xi\right\rangle }\,\dd\sigma_1(z)$
denote the characteristic function of $\sigma_1$ evaluated at $\xi$.
We denote by $\P_{A}\subseteq\P (\RR^{n})$ the subset of measures $\sigma_1$ such that $\int_{\RR^n}\left|\xi^{\alpha}\right|\,\dd\sigma_1(\xi)<\infty$ for every $\alpha \in A$. Every $\sigma_1\in\P_{A}$ is such that in a neighborhood of $\boldsymbol{0}\in\mathbb{R}^{n}$ the cumulant
generating function $\log\varphi_{\sigma_1}$ is well defined
and the partial derivatives 
\[
\frac{\partial^{|\alpha|}}{\partial \xi_{1}^{\alpha_{1}}\partial \xi_{2}^{\alpha_{2}}\cdots\partial \xi_{n}^{\alpha_{n}}}\log\varphi_{\sigma_1}(\xi)
\]
exist and are continuous for every $\alpha\in A$.

For every $\sigma_1\in\mathcal{P}_{A}$ and $\alpha\in A$ let $\kappa_{\sigma_1}(\alpha)$
be defined as 
\[
\kappa_{\sigma_1}(\alpha)=i^{-|\alpha|}\frac{\partial^{|\alpha|}}{\partial \xi_{1}^{\alpha_{1}}\partial \xi_{2}^{\alpha_{2}}\cdots\partial \xi_{n}^{\alpha_{n}}}\log\varphi_{\sigma_1}(\boldsymbol{0})
\]
With slight abuse of terminology, we refer to $\kappa_{\sigma_1}\in\mathbb{R}^{A}$ as the \emph{vector of cumulants} of $\sigma_1$. In addition, for every $\sigma_1\in\P_{A}$ and $\alpha\in A$ we denote by $m_{\sigma_1}(\alpha)=\int_{\RR^n}\xi^{\alpha}\,\dd\sigma_1(\xi)$ the mixed moment of $\sigma_1$ of order $\alpha$ and refer to $m_{\sigma_1}\in\RR^{A}$ as the \emph{vector of moments} of $\sigma_1$.

Given two measures $\sigma_1,\sigma_2\in\P(\mathbb{R}^{n})$ we denote by $\sigma_1\ast\sigma_2\in\P(\mathbb{R}^{n})$ the
corresponding convolution. 
\begin{lem}
\label{lem:facts-cmlts}For every $\sigma_1,\sigma_2\in\P_{A}$, and
$\alpha\in A$, $\kappa_{\sigma_1\ast\sigma_2}(\alpha)=\kappa_{\sigma_1}(\alpha)+\kappa_{\sigma_2}(\alpha)$.
\end{lem}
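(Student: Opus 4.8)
The plan is to exploit the one property that makes the characteristic function the natural object here, namely that it turns convolution into multiplication. Since $\kappa_\sigma(\alpha)$ is obtained, up to the factor $i^{-|\alpha|}$, by differentiating $\log\varphi_\sigma$ at the origin, and the logarithm converts the product $\varphi_{\sigma_1}\varphi_{\sigma_2}$ into the sum $\log\varphi_{\sigma_1}+\log\varphi_{\sigma_2}$, linearity of differentiation will immediately deliver the claimed additivity.

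Before carrying this out I would first check that the statement is well-posed, that is, that $\sigma_1\ast\sigma_2\in\P_A$ so that $\kappa_{\sigma_1\ast\sigma_2}(\alpha)$ is defined and $\log\varphi_{\sigma_1\ast\sigma_2}$ is differentiable to the required order near $\boldsymbol 0$. Realizing $\sigma_1\ast\sigma_2$ as the law of $X+Y$ for independent $X\sim\sigma_1$ and $Y\sim\sigma_2$, I would bound $|(X+Y)^{\alpha}|\le\prod_{k}(|X_{k}|+|Y_{k}|)^{\alpha_{k}}$ and expand the right-hand side by the binomial theorem into a finite sum of terms $\prod_{k}|X_{k}|^{j_{k}}|Y_{k}|^{\alpha_{k}-j_{k}}$ with $0\le j_{k}\le\alpha_{k}\le N$. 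Taking expectations and using independence factorizes each such term into a product of an $X$-moment and a $Y$-moment, both with exponent vectors in $\{0,\dots,N\}^{n}$ and hence finite because $\sigma_1,\sigma_2\in\P_A$ (the order-$\boldsymbol 0$ moment being simply the total mass $1$). This shows $\sigma_1\ast\sigma_2\in\P_A$.

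Next I would record the multiplicativity $\varphi_{\sigma_1\ast\sigma_2}(\xi)=\varphi_{\sigma_1}(\xi)\,\varphi_{\sigma_2}(\xi)$, which follows from Fubini's theorem applied to $\int\int \ee^{i\langle z+w,\xi\rangle}\,\dd\sigma_1(z)\,\dd\sigma_2(w)$. Since all three measures lie in $\P_A$, the preliminaries guarantee that $\log\varphi_{\sigma_1}$, $\log\varphi_{\sigma_2}$ and $\log\varphi_{\sigma_1\ast\sigma_2}$ are well defined with continuous partial derivatives on a common neighborhood of $\boldsymbol 0$ (the three characteristic functions being continuous and equal to $1$ at $\boldsymbol 0$, hence nonzero nearby). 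On that neighborhood, taking logarithms gives the identity $\log\varphi_{\sigma_1\ast\sigma_2}=\log\varphi_{\sigma_1}+\log\varphi_{\sigma_2}$.

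Finally, I would apply the operator $\partial^{|\alpha|}/\partial\xi_1^{\alpha_1}\cdots\partial\xi_n^{\alpha_n}$ to both sides, evaluate at $\boldsymbol 0$, and multiply by $i^{-|\alpha|}$; by linearity of differentiation this yields $\kappa_{\sigma_1\ast\sigma_2}(\alpha)=\kappa_{\sigma_1}(\alpha)+\kappa_{\sigma_2}(\alpha)$. The only genuine obstacle is the moment-closure step verifying $\sigma_1\ast\sigma_2\in\P_A$, which is what makes the left-hand cumulant meaningful and the logarithm smooth to order $|\alpha|$; once that is secured, the identity is a one-line consequence of the multiplicativity of characteristic functions and the linearity of the derivative.
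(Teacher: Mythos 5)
Your proof is correct and takes essentially the same approach as the paper: the paper's entire proof consists of citing the multiplicativity $\varphi_{\sigma_1\ast\sigma_2}(\xi)=\varphi_{\sigma_1}(\xi)\,\varphi_{\sigma_2}(\xi)$, from which additivity of cumulants follows by taking logarithms and differentiating at the origin, exactly as you do. Your verification that $\sigma_1\ast\sigma_2\in\P_{A}$ is a worthwhile detail that the paper leaves implicit.
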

\begin{proof}
The result follows from the well known fact that $\varphi_{\sigma_1\ast\sigma_2}(\xi)=\varphi_{\sigma_1}(\xi)\varphi_{\sigma_2}(\xi)$
for every $\xi\in\mathbb{R}^{n}$.
\end{proof}

The next result, due to \citet*{leonov1959method} \cite*[see also][p. 290]{shiryaev} establishes a one-to-one relation between the vector of moments $m_{\sigma_1}$ and vector of cumulants $\kappa_{\sigma_1}$ of a probability measure $\sigma_1\in \P_A$. Given $\alpha\in A$, let $\Lambda(\alpha)$ be the set of all ordered collections $\left(\lambda^{1},\ldots,\lambda^{q}\right)$ of non-zero vectors in $\mathbb{N}^{n}$ such that $\sum_{p=1}^{q}\lambda^{p}=\alpha$. 

\begin{thm}
\label{thm:Shiryaev}For every $\sigma_1\in\mathcal{P}_{A}$ and $\alpha\in A$,
\begin{enumerate}
\item $m_{\sigma_1}(\alpha)=\sum_{\left(\lambda^{1},\ldots,\lambda^{q}\right)\in\Lambda(\alpha)}\frac{1}{q!}\frac{\alpha!}{\lambda^{1}!\cdots\lambda^{q}!}\prod_{p=1}^{q}\kappa_{\sigma_1}(\lambda^{p})$
\item $\kappa_{\sigma_1}(\alpha)=\sum_{\left(\lambda^{1},\ldots,\lambda^{q}\right)\in\Lambda(\alpha)}\frac{\left(-1\right)^{q-1}}{q}\frac{\alpha!}{\lambda^{1}!\cdots\lambda^{q}!}\prod_{p=1}^{q}m_{\sigma_1}(\lambda^{p})$
\end{enumerate}
\end{thm}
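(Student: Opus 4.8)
The plan is to obtain both formulas as identities between the Taylor coefficients at the origin of the two functions $\varphi_{\sigma_1}$ and $\log\varphi_{\sigma_1}$, using the single algebraic relation $\varphi_{\sigma_1}=\exp(\log\varphi_{\sigma_1})$. The first observation is that, for $\sigma_1\in\P_A$, the moments and cumulants are exactly the normalized Taylor coefficients of these two functions. Expanding $e^{i\langle z,\xi\rangle}$ coordinate by coordinate and integrating against $\sigma_1$ gives
\[
\varphi_{\sigma_1}(\xi)=\sum_{\alpha}\frac{i^{|\alpha|}m_{\sigma_1}(\alpha)}{\alpha!}\,\xi^{\alpha},\qquad \log\varphi_{\sigma_1}(\xi)=\sum_{\alpha\neq 0}\frac{i^{|\alpha|}\kappa_{\sigma_1}(\alpha)}{\alpha!}\,\xi^{\alpha},
\]
where the second expansion has no constant term since $\varphi_{\sigma_1}(0)=1$, and both are valid as Taylor polynomials up to order $N$ in each coordinate, which is all the indexing set $A$ requires. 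Because only finitely many coefficients are involved, I would carry out the whole argument at the level of truncated (formal) power series, so that no question of convergence arises; the regularity of $\log\varphi_{\sigma_1}$ recorded in the preamble guarantees that these coefficients are well defined.

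For part~(1) I would set $\psi=\log\varphi_{\sigma_1}$ and substitute its expansion into $\varphi_{\sigma_1}=e^{\psi}=\sum_{q\geq 0}\psi^{q}/q!$. Expanding the $q$-th power multinomially gives $\psi^{q}=\sum_{(\lambda^{1},\ldots,\lambda^{q})}\bigl(\prod_{p}\tfrac{i^{|\lambda^{p}|}\kappa_{\sigma_1}(\lambda^{p})}{\lambda^{p}!}\bigr)\xi^{\lambda^{1}+\cdots+\lambda^{q}}$, the sum ranging over ordered $q$-tuples of nonzero vectors; imposing $\sum_{p}\lambda^{p}=\alpha$ this is precisely an enumeration over $\Lambda(\alpha)$. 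Collecting the coefficient of $\xi^{\alpha}$, pulling out the common factor $i^{|\alpha|}$ via $\sum_{p}|\lambda^{p}|=|\alpha|$, and matching against $i^{|\alpha|}m_{\sigma_1}(\alpha)/\alpha!$ yields part~(1) after multiplying by $\alpha!$, the prefactor $1/q!$ coming directly from the exponential series. Part~(2) follows from the mirror-image computation: writing $\log\varphi_{\sigma_1}=\log\bigl(1+(\varphi_{\sigma_1}-1)\bigr)=\sum_{q\geq 1}\tfrac{(-1)^{q-1}}{q}(\varphi_{\sigma_1}-1)^{q}$ and expanding $(\varphi_{\sigma_1}-1)^{q}$ multinomially (subtracting $1$ forces each factor to carry a nonzero $\lambda^{p}$), collecting the coefficient of $\xi^{\alpha}$ and matching against $i^{|\alpha|}\kappa_{\sigma_1}(\alpha)/\alpha!$ gives the stated formula, with $\tfrac{(-1)^{q-1}}{q}$ now playing the role that $1/q!$ played before.

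The multinomial expansions and the cancellation of the powers of $i$ are routine, and I would compress them to a single line each. The only point demanding care is the combinatorial bookkeeping: checking that summing the length-$q$ contributions over all $q$ reproduces the enumeration over all ordered tuples encoded by $\Lambda(\alpha)$, and that the multinomial coefficient $\alpha!/(\lambda^{1}!\cdots\lambda^{q}!)$ emerges correctly once $\alpha!$ is cleared. Since both identities are formal consequences of $\varphi_{\sigma_1}=e^{\psi}$ read off coefficientwise up to order $N$, this combinatorial statement is the essential content, and it is exactly the one established by \cite{leonov1959method}; I would either cite that reference for the bookkeeping or verify it directly by the power-series manipulation just described.
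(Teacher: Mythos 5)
Your proposal is correct, but it differs from the paper in a basic way: the paper does not prove this statement at all. Theorem \ref{thm:Shiryaev} is stated as a known result, attributed to \cite{leonov1959method} (with a further pointer to Shiryaev's textbook), and is used as a black box in the proofs of Lemmas \ref{lem:thm1_c(mu)}--\ref{lem:thm1_last}. What you have written is the standard self-contained derivation of the Leonov--Shiryaev moment/cumulant relations: moments and cumulants are the normalized Taylor coefficients at $\boldsymbol{0}$ of $\varphi_{\sigma_1}$ and $\log\varphi_{\sigma_1}$, and the two formulas are the coefficient-wise readings of $\varphi_{\sigma_1}=\exp\left(\log\varphi_{\sigma_1}\right)$ and $\log\varphi_{\sigma_1}=\log\left(1+(\varphi_{\sigma_1}-1)\right)$; your bookkeeping does reproduce exactly the stated weights $\frac{1}{q!}\frac{\alpha!}{\lambda^{1}!\cdots\lambda^{q}!}$ and $\frac{(-1)^{q-1}}{q}\frac{\alpha!}{\lambda^{1}!\cdots\lambda^{q}!}$ over ordered tuples in $\Lambda(\alpha)$. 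Two points would need to be made explicit in a full write-up. First, since $A=\{0,\ldots,N\}^{n}\setminus\{0\}$ is a rectangular index set, $\sigma_1\in\mathcal{P}_{A}$ guarantees only the mixed moments $m_{\sigma_1}(\beta)$ with $\beta\leq(N,\ldots,N)$ componentwise, not all moments of total order up to $nN$; your matching is nonetheless legitimate because every $(\lambda^{1},\ldots,\lambda^{q})\in\Lambda(\alpha)$ has each $\lambda^{p}\leq\alpha$ componentwise (so each factor in the formulas is well defined), and because the mixed partial $\partial^{\alpha}$ of a product or composition involves only derivatives $\partial^{\beta}$ with $\beta\leq\alpha$ componentwise---that is, the truncated power-series manipulation is rigorously an application of the multivariate Leibniz/Fa\`a di Bruno rules, which is precisely the combinatorial content you flag at the end. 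Second, the identity $\partial^{\alpha}\varphi_{\sigma_1}(\boldsymbol{0})=i^{|\alpha|}m_{\sigma_1}(\alpha)$ rests on iterated differentiation under the integral, which is justified here exactly because all the rectangular sub-moments lie in $A$. With these two points spelled out, your argument upgrades the paper's citation to a complete proof, and it buys self-containedness at the cost of a page of routine combinatorics that the paper avoids by deferring to the literature.
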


The result yields the following implication. Let $M_A = \{m_{\sigma_1} : \sigma_1 \in \P_A\} \subseteq \RR^A$ and $K_A = \{\kappa_{\sigma_1} : \sigma_1 \in \P_A\} \subseteq \RR^A$. Statement 2 in Theorem \ref{thm:Shiryaev} shows the existence of a continuous function $h \colon M_A \to K_A$ such that $\kappa_{\sigma_1} = h(m_{\sigma_1})$ for every $\sigma_1 \in \P_A$. Moreover, statement 1 implies $h$ is one-to-one.

\subsection{Cumulants and Admissible Measures}

We denote by $\mathcal{A}$ the set of vectors of measures $\sigma = (\sigma_0,\sigma_1,\ldots,\sigma_n)$ that are admissible and such that $\sigma_{i}\in\mathcal{P}_{A}$ for every $i$. To each $\sigma\in\mathcal{A}$ we associate the vector
\[
m_{\sigma}=(m_{\sigma_0},m_{\sigma_{1}},\ldots,m_{\sigma_{n}})\in\RR^{d}
\]
of dimension $d=\left(n+1\right)|A|$. Similarly, we define
\[
\kappa_{\sigma}=\left(\kappa_{\sigma_0},\kappa_{\sigma_{1}},\ldots,\kappa_{\sigma_{n}}\right)\in\RR^{d}.
\]
In this section we study properties of the sets $\mathcal{M} = \left\{m_{\sigma}:\sigma\in\mathcal{A}\right\} $ and $\mathcal{K} = \left\{\kappa_{\sigma}:\sigma\in\mathcal{A}\right\} $.
\begin{lem}\label{lem:interior}
Let $I$ and $J$ be disjoint finite sets and let $(\phi_{k})_{k\in I\cup J}$ be a collection of real valued functions defined on $\mathbb{R}^{n}$.
Assume $\left\{ \phi_{k}:k\in I\cup J\right\} \cup\left\{ 1_{\mathbb{R}^{n}}\right\} $
are linearly independent and the unit vector $\left(1,\ldots,1\right)\in\mathbb{R}^{J}$
belongs to the interior of $\left\{ \left(\phi_{k}\left(\xi\right)\right)_{k\in J}:\xi\in\mathbb{R}^{n}\right\}$. Then
\[
C=\text{\ensuremath{\left\{  \left(\int_{\RR^n}\phi_{k}\,\dd \sigma_1\right)_{k\in I}:\sigma_1\in\P(\mathbb{R}^{n})\text{ has finite support and }\int_{\RR^n}\phi_{k}\,\dd\sigma_1=1\text{ for all }k\in J\right\} } }
\]
is a convex subset of $\mathbb{R}^{I}$ with nonempty interior.
\end{lem}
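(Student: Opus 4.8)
The plan is to translate the statement into the convex geometry of the moment image and then reduce it to a standard slicing fact for convex bodies. Introduce the map $\Phi(\xi) = (\phi_k(\xi))_{k \in I \cup J} \in \mathbb{R}^{I \cup J}$, set $W = \Phi(\mathbb{R}^n)$, and let $V = \mathrm{conv}(W)$. Since the convex hull of a set is exactly the collection of its finite convex combinations, a point $\sum_m t_m \Phi(\xi_m) \in V$ corresponds to the finitely supported measure $\sigma_1 = \sum_m t_m \delta_{\xi_m}$, and conversely; thus $V$ is precisely the set of vectors $(\int \phi_k \, \mathrm{d}\sigma_1)_{k \in I \cup J}$ attained by finitely supported probability measures. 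Writing $\pi_I, \pi_J$ for the coordinate projections and $\mathbf{1} = (1,\ldots,1) \in \mathbb{R}^J$, the set under study is $C = \pi_I\bigl(V \cap \{y : \pi_J(y) = \mathbf{1}\}\bigr)$. Convexity of $C$ is then immediate, since $V$ is convex, intersection with the affine subspace $\{\pi_J(y) = \mathbf{1}\}$ preserves convexity, and $\pi_I$ is linear.

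For the nonempty interior, the goal is to exhibit a point $y^\star \in \mathrm{int}(V)$ with $\pi_J(y^\star) = \mathbf{1}$: intersecting a small ball around $y^\star$ with the affine subspace $\{\pi_J(y) = \mathbf{1}\}$ and applying $\pi_I$ (which restricts to an affine isomorphism of that subspace onto $\mathbb{R}^I$) then produces an open subset of $C$. Two inputs prepare the ground. First, I would use the linear-independence hypothesis to show $V$ is full-dimensional: if $W$ lay in a proper affine hyperplane $\{y : \sum_k c_k y_k = c_0\}$ with $(c_k) \neq 0$, this would say $\sum_k c_k \phi_k = c_0 \, 1_{\mathbb{R}^n}$ as functions, contradicting the independence of $\{\phi_k\}_{k \in I \cup J} \cup \{1_{\mathbb{R}^n}\}$; hence $\mathrm{aff}(W) = \mathbb{R}^{I \cup J}$ and $\mathrm{int}(V) \neq \emptyset$. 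Second, I would use the interiority hypothesis: since $\pi_J$ is linear, $\pi_J(V) = \mathrm{conv}(\pi_J(W)) = \mathrm{conv}(W_J)$ where $W_J = \{(\phi_k(\xi))_{k \in J} : \xi \in \mathbb{R}^n\}$, so $\mathbf{1} \in \mathrm{int}(W_J) \subseteq \mathrm{int}(\pi_J(V))$.

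The step combining these, which I expect to be the main obstacle, is the convex-analysis fact that $\pi_J(\mathrm{int}(V)) = \mathrm{int}(\pi_J(V))$ for a convex $V$ with nonempty interior; I only need the inclusion placing $\mathbf{1}$ in the left-hand side. To get it, fix any $y^0 \in \mathrm{int}(V)$ and put $b^0 = \pi_J(y^0)$. Because $\mathbf{1} \in \mathrm{int}(\pi_J(V))$, the point $b' = (1+\delta)\mathbf{1} - \delta b^0$ still lies in $\pi_J(V)$ for all small $\delta > 0$, so choose $y' \in V$ with $\pi_J(y') = b'$. Then $y^\star = \tfrac{1}{1+\delta} y' + \tfrac{\delta}{1+\delta} y^0$ satisfies $\pi_J(y^\star) = \mathbf{1}$, and being a convex combination that places positive weight $\tfrac{\delta}{1+\delta}$ on the interior point $y^0$, it lies in $\mathrm{int}(V)$ (a convex combination of an interior point of a convex set and any point of the set, with positive weight on the former, is interior). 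This $y^\star$ is the required point, and the slicing argument then yields an open subset of $C$, completing the proof. The remaining verifications — openness of the coordinate projection, and that $\pi_I$ is an affine isomorphism on $\{\pi_J(y) = \mathbf{1}\}$ — are routine.
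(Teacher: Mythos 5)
Your proof is correct, and its skeleton matches the paper's: establish that the joint moment set $V=\mathrm{conv}\{(\phi_k(\xi))_{k\in I\cup J}:\xi\in\mathbb{R}^n\}$ (equivalently, the set the paper calls $D$) is convex with nonempty interior, then produce an interior point of $V$ lying on the slice $\{y:\pi_J(y)=\mathbf{1}\}$ by taking a convex combination of an arbitrary interior point with a correcting point supplied by the interiority hypothesis, and finally project to $\mathbb{R}^I$. Where you genuinely depart from the paper is the full-dimensionality step. The paper proves it by a separation argument in the infinite-dimensional space $\mathbb{R}^{\mathbb{R}^n}$ with the topology of pointwise convergence, using that its topological dual is the space of finitely supported signed measures: for each $k$ they separate $\phi_k$ from the span of the remaining functions, producing two finitely supported probability measures that agree on all other $\phi_m$ but differ on $\phi_k$, whence $D$ spans $\mathbb{R}^{I\cup J}$. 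You instead observe directly that linear independence of $\{\phi_k\}\cup\{1_{\mathbb{R}^n}\}$ is exactly the statement that the image $W$ lies in no proper affine hyperplane, so $\mathrm{aff}(W)=\mathbb{R}^{I\cup J}$ and $\mathrm{conv}(W)$ is full-dimensional. This is more elementary and avoids the topological-vector-space machinery entirely; what the paper's duality argument buys is a coordinate-wise conclusion (perturbing one moment while freezing the others) that is stronger than needed here. In the slicing step the two arguments are the same idea with different bookkeeping: the paper draws its correcting point from $W$ itself, choosing $\xi$ with $(\phi_k(\xi))_{k\in J}=\frac{1}{1-\alpha}\mathbf{1}-\frac{\alpha}{1-\alpha}\pi_J(w)$, while you draw it from $V$ via $\mathbf{1}\in\mathrm{int}(\pi_J(V))$; both rest on the fact that a convex combination placing positive weight on an interior point remains interior.
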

\begin{proof}
To ease the notation, let $Y=\mathbb{R}^{n}$ and denote by $\mathcal{P}_{o}$ be the set
of probability measures on $Y$ with finite support. Consider $F=\left\{ \phi_{k}:k\in I\cup J\right\} \cup\left\{ 1_{\mathbb{R}^{d}}\right\} $
as a subset of the vector space $\mathbb{R}^{Y}$, where the latter
is endowed with the topology of pointwise convergence. The topological
dual of $\mathbb{R}^{Y}$ is the vector space of signed measures on
$Y$ with finite support. Let
\[
D=\left\{ \left(\int_{\RR^n}\phi_{k}\,\dd\sigma_1\right)_{k\in I\cup J}:\sigma_1\in\mathcal{P}_{o}\right\} \subseteq\mathbb{R}^{I\cup J}.
\]

Fix $k\in I\cup J$. Since $\phi_{k}$ does not belong to the linear space $V$ generated by $F \backslash\{\phi_k\}$, then a standard application of the hyperplane separation theorem implies the existence of a signed measure
\[
\rho=\alpha\sigma_1-\beta\sigma_2
\]
where $\alpha,\beta\geq0$, $\alpha+\beta>0$ and $\sigma_1,\sigma_2\in\mathcal{P}_{o}$, 
such that $\rho$ satisfies $\int\phi_{k}\,\dd\rho>0\geq\int\phi \,\dd\rho$ for every $\phi\in V$. This implies $\int\phi \,\dd\rho=0$ for every $\phi\in V$. By taking $\phi=1_{\mathbb{R}^{n}}$, we obtain $\rho(\mathbb{R}^{n})=0$. Hence, $\alpha=\beta$. Therefore, $\int\phi_{k}\,\dd\sigma_1>\int\phi_{k}\,\dd\sigma_2$ and $\int\phi_{l}\,\dd\sigma_1=\int\phi_{l}\,\dd\sigma_2$ for every $l \neq k$. 
To summarize, we have shown that for every $k \in I \cup J$ there exist vectors $w^k,z^k \in D$ such that $w^k_k > z^k_k$ and $w^k_l = z^k_l$ for $l \neq k$. 

Now let $\mathrm{aff}(D)$ be the affine hull of $D$. As is well known, for every $d \in D$ we have the identity $\mathrm{aff}(D) = d + \mathrm{span}(D-d)$, where $\mathrm{span}(D-d)$ is the vector space generated by $D-d$. Moreover, $\mathrm{span}(D-d)$ is independent of the choice of $d \in D$ \citep*[see, for example,][Lemma 2.4.5]{borwein2010convex}.

Let $k \in I \cup J$ and let $1_k \in \RR^{I \cup J}$ be the corresponding unit vector. By taking $d = z^k$ we obtain that $w^k - z^k \in \mathrm{span}(D-z^k)$. Thus, $1_k \in \mathrm{span}(D-d)$ for every $k$. Hence $\mathrm{span}(D-d) = \RR^{I \cup J}$. Therefore $\mathrm{aff}(D) = \RR^{I \cup J}$. Since $D$ is convex, it has nonempty relative interior as a subset of $\mathrm{aff}(D)$. We conclude that $D$ has nonempty interior.

Now consider the hyperplane $$H=\{ z\in\mathbb{R}^{I\cup J}:z_{k}=1\text{ for all }k\in J\}$$
Let $D^{o}$ be the interior of $D$. It remains to show that the
hyperplane $H$ satisfies $H\cap D^{o}\neq\emptyset$. This will imply that the projection of $H \cap D$ on $\RR^I$, which equals $C$, has nonempty interior.

Let $w \in D^{o}$. By assumption, $(1,\ldots,1) \in \RR^J$ is in the interior of $\left\{ (\phi_{k}(\xi))_{k\in J}:\xi\in Y\right\}$. Hence, there exists $\alpha \in (0,1)$ small enough and $\xi \in Y$ such that $\phi_k(\xi) = \frac{1}{1-\alpha} - \frac{\alpha}{1-\alpha}w_k$ for every $k \in J$. Define $z = \alpha w + (1 - \alpha) (\phi_k(\xi))_{k \in I \cup J} \in D$. Then $z_k = 1$ for every $k \in J$. In addition, because $w \in D^o$ then $z\in D^o$ as well. Hence $z \in H \cap D^o$.
\end{proof}

\begin{lem}\label{lem:M_int}The set $\mathcal{M} = \left\{ m_{\sigma}:\sigma\in\mathcal{A}\right\} $
has nonempty interior.
\end{lem}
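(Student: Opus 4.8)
The plan is to derive the result directly from Lemma~\ref{lem:interior}, by exhibiting a subset of $\mathcal{M}$ in exactly the form treated there. The crucial observation is that, by Lemma~\ref{lem:admis}, an admissible vector $\sigma = (\sigma_0,\ldots,\sigma_n)$ is completely determined by its base measure $\sigma_0 \in \P(\RR^n)$ through $\sigma_i = e^{\xi_i}\sigma_0$ (writing $\xi = (\xi_1,\ldots,\xi_n)$ and setting $\xi_0 := 0$), subject only to the normalization constraints $\int_{\RR^n} e^{\xi_i}\,\dd\sigma_0 = 1$ for $i = 1,\ldots,n$. Consequently every coordinate of the moment vector is an integral against the \emph{single} measure $\sigma_0$: indeed $m_{\sigma_0}(\alpha) = \int \xi^\alpha\,\dd\sigma_0$ and, for $i \geq 1$, $m_{\sigma_i}(\alpha) = \int \xi^\alpha e^{\xi_i}\,\dd\sigma_0$. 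This is precisely the structure of the sets appearing in Lemma~\ref{lem:interior}.

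Accordingly, I would take $J$ to index the $n$ constraint functions $\phi_k = e^{\xi_i}$ ($i = 1,\ldots,n$), and $I$ to index the $d = (n+1)|A|$ functions $\{\xi^\alpha : \alpha \in A\}$ together with $\{\xi^\alpha e^{\xi_i} : \alpha \in A,\ i = 1,\ldots,n\}$. The two hypotheses of Lemma~\ref{lem:interior} must then be checked. The interiority hypothesis is immediate: the map $\xi \mapsto (e^{\xi_1},\ldots,e^{\xi_n})$ carries $\RR^n$ onto the open orthant $(0,\infty)^n$, and the unit vector $(1,\ldots,1)$ is its value at $\xi = 0$, hence lies in the interior. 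For linear independence of $\{1_{\RR^n}\} \cup \{\phi_k : k \in I \cup J\}$, I would note that every function in this collection has the form $\xi^\alpha e^{\langle c,\xi\rangle}$ with $c \in \{0,e_1,\ldots,e_n\}$ (where $e_i$ is the $i$-th standard basis vector of $\RR^n$) and $\alpha \in \{0,\ldots,N\}^n$; grouping by the distinct exponential rates $c$, within each group the monomials $\xi^\alpha$ are linearly independent, so the standard fact that polynomial-times-exponential families with distinct rates are jointly linearly independent yields the claim.

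With both hypotheses verified, Lemma~\ref{lem:interior} gives that the set $C$ of achievable $(\int \phi_k\,\dd\sigma_0)_{k \in I}$, over finitely supported $\sigma_0$ meeting the $J$-constraints, is convex with nonempty interior in $\RR^I \cong \RR^d$. To conclude I would check $C \subseteq \mathcal{M}$: any such $\sigma_0$ has finite support, so setting $\sigma_i = e^{\xi_i}\sigma_0$ produces, by Lemma~\ref{lem:admis}, an admissible vector whose measures inherit the same finite support and therefore have finite moments of every order, i.e.\ $\sigma \in \mathcal{A}$; and by construction its moment vector $m_\sigma$ is exactly the corresponding point of $C$. Hence $\mathcal{M} \supseteq C$ has nonempty interior.

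The two hypothesis checks are routine but essential; the only one requiring genuine care is the linear independence, where one must invoke the independence of distinct exponential-polynomial families rather than attempt a coordinate-by-coordinate argument. The conceptual crux, and the step I would emphasize, is the reduction in the first paragraph: recognizing that admissibility forces all $n+1$ blocks of moments to be integrals against one measure $\sigma_0$, which is what makes Lemma~\ref{lem:interior} applicable in the first place.
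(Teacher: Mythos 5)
Your proposal is correct and follows essentially the same route as the paper's own proof: the same choice of functions $\xi^\alpha$, $\xi^\alpha e^{\xi_i}$, $e^{\xi_i}$, the same application of Lemma~\ref{lem:interior} with $J$ indexing the normalization constraints, and the same final verification that finitely supported $\sigma_0$ yields, via Lemma~\ref{lem:admis}, an element of $\mathcal{A}$ whose moment vector is the corresponding point of $C$. The only difference is that you spell out the linear-independence check (via independence of polynomial-times-exponential families with distinct rates), which the paper dismisses as ``immediate to verify''---a welcome addition, not a deviation.
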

\begin{proof}
For every $\alpha\in A$ define the functions $(\phi_{i,\alpha})_{i \in \Theta}$ on $\RR^n$ as
\[
\phi_{0,\alpha}\left(\xi\right)=\xi^{\alpha}\text{ and }\phi_{i,\alpha}\left(\xi\right)=\xi^{\alpha}\ee^{\xi_{i}}\text{ for all \ensuremath{i>0}.}
\]
Define $\psi_0=1_{\mathbb{R}^{n}}$ and $\psi_i(\xi)=e^{\xi_{i}}$ for all $i>0$. It is immediate to verify that 
\[
\left\{ \phi_{i,\alpha}:i\in\Theta,\alpha\in A\right\} \cup \{\psi_i: i \in \Theta \}
\]
is a linearly independent set of functions. In addition,  $(1,\ldots,1) \in \RR^n$ is in the interior of $\{(\ee^{\xi_1},\ldots,\ee^{\xi_n}) : \xi \in \RR^n\}$. Lemma \ref{lem:interior} implies that the set
\[
C=\left\{ \left(\int_{\RR^n}\phi_{i,\alpha}\,\dd\sigma_0\right)_{\substack{i \in \Theta\\\alpha\in A}}:\sigma_0\in\P(\mathbb{R}^{n})\text{ has finite support and }\int_{\RR^n} e^{\xi_{i}}\,\dd\sigma_0(\xi)=1 \text{ for all }i\right\} 
\]
has nonempty interior. Given $\sigma_0$ as in the definition of $C$, construct a vector $\sigma = (\sigma_0,\sigma_1,\ldots,\sigma_n)$ where for each $i>0$ the measure $\sigma_i$ is defined so that $(\dd\sigma_i/\dd\sigma_0) (\xi) = \ee^{\xi_i}$, $\sigma_0$-almost surely. Then, Lemma \ref{lem:admis} implies $\sigma$ is admissible. Because each $\sigma_i$ has finite support then $\sigma \in \mathcal{A}$. In addition, 
\[
    m_\sigma = \left(\int_{\RR^n}\phi_{i,\alpha}\,\dd\sigma_0\right)_{\substack{i \in \Theta\\\alpha\in A}}
\]
hence $C\subseteq\mathcal{M}$. Thus, $\mathcal{M}$ has nonempty interior.
\end{proof}

\begin{thm}
\label{thm:K_int}The set $\mathcal{K} = \left\{ \kappa_{\sigma}:\sigma\in\mathcal{A}\right\} $ has nonempty interior.
\end{thm}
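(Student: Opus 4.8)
The plan is to leverage Lemma \ref{lem:M_int}, which already establishes that $\mathcal{M}$ has nonempty interior, and to transfer this property to $\mathcal{K}$ through the change of variables between moments and cumulants furnished by Theorem \ref{thm:Shiryaev}. The guiding idea is that the passage from moments to cumulants is a polynomial homeomorphism of $\RR^d$, and homeomorphisms preserve the property of having nonempty interior.

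First I would define a map $\Phi \colon \RR^d \to \RR^d$ sending a moment vector to the corresponding cumulant vector. Since $d = (n+1)|A|$ splits into $n+1$ blocks, one per state, and the moment--cumulant relation in Theorem \ref{thm:Shiryaev} operates separately on each measure $\sigma_i$, the map $\Phi$ acts blockwise: on the block corresponding to state $i$ it applies the formula in part (2) of Theorem \ref{thm:Shiryaev}, expressing each cumulant $\kappa_{\sigma_i}(\alpha)$ as a polynomial in the moments $\{m_{\sigma_i}(\lambda) : \lambda \in A\}$. By construction $\kappa_\sigma = \Phi(m_\sigma)$ for every $\sigma \in \mathcal{A}$, so that $\mathcal{K} = \Phi(\mathcal{M})$.

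Next I would argue that $\Phi$ is a homeomorphism of $\RR^d$ onto itself. The two formulas in Theorem \ref{thm:Shiryaev} define mutually inverse polynomial maps: part (1) gives the cumulant-to-moment map $\Psi$, part (2) gives the moment-to-cumulant map $\Phi$, and the combinatorial identities underlying them are formal polynomial identities, so that $\Psi \circ \Phi = \Phi \circ \Psi = \mathrm{id}$ on all of $\RR^d$. Both maps are polynomial, hence continuous; being mutually inverse, $\Phi$ is a continuous bijection with continuous inverse, i.e.\ a homeomorphism. (Equivalently, the moment-to-cumulant transformation is triangular with respect to the partial order by total degree $|\alpha|$, with identity leading term, which makes invertibility transparent.) Finally, since homeomorphisms map open sets to open sets, and $\mathcal{M}$ has nonempty interior by Lemma \ref{lem:M_int}, the image $\Phi(\mathrm{int}\,\mathcal{M})$ is a nonempty open subset of $\Phi(\mathcal{M}) = \mathcal{K}$, so $\mathcal{K}$ has nonempty interior.

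The main obstacle I anticipate is the precise justification that $\Phi$ and $\Psi$ are inverse as maps defined on \emph{all} of $\RR^d$, rather than merely on the subset of vectors that arise from genuine probability measures; this is exactly where the status of the Leonov formulas as formal algebraic identities (valid for arbitrary real inputs, independent of whether they come from a measure in $\mathcal{P}_A$) is essential. This is the step I would state most carefully, after which the conclusion is a one-line application of the open-mapping property of homeomorphisms.
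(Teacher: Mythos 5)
Your proof is correct, but it takes a genuinely different route from the paper's. The paper never tries to extend the moment--cumulant correspondence beyond vectors that come from actual measures: it defines $H\colon\mathcal{M}\to\RR^{d}$ by $H(m_{\sigma})=\kappa_{\sigma}$, notes that Theorem \ref{thm:Shiryaev} makes this map continuous and one-to-one (since each of moments and cumulants determines the other \emph{for measures in} $\mathcal{P}_A$), restricts $H$ to the open set $U\subseteq\mathcal{M}$ furnished by Lemma \ref{lem:M_int}, and then invokes Brouwer's Invariance of Domain to conclude that $H(U)$ is open in $\RR^{d}$, hence $\mathcal{K}\supseteq H(U)$ has nonempty interior. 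You instead upgrade the correspondence to a global polynomial homeomorphism $\Phi$ of $\RR^{d}$, after which openness of $\Phi(\mathrm{int}\,\mathcal{M})$ is trivial. The trade-off is exactly the one you flag: Theorem \ref{thm:Shiryaev} as stated only asserts the two formulas for $\sigma_1\in\mathcal{P}_A$, so you owe an argument that the truncated transformation is a bijection of all of $\RR^{d}$. Your triangularity remark is the right way to discharge this debt, and it suffices on its own: since every decomposition in $\Lambda(\alpha)$ with $q\geq 2$ uses only indices of total degree strictly less than $|\alpha|$, the map $\Phi$ has the form $\kappa(\alpha)=m(\alpha)+P_{\alpha}\bigl(m(\lambda):|\lambda|<|\alpha|\bigr)$, so back-substitution by induction on $|\alpha|$ produces a polynomial two-sided inverse --- and note that you then do not even need to identify this inverse with the map $\Psi$ from part (1), so the ``mutually inverse as formal identities'' issue can be bypassed entirely. (Alternatively, $\Psi\circ\Phi$ is a polynomial map agreeing with the identity on $\mathcal{M}$, which has nonempty interior by Lemma \ref{lem:M_int}, hence equals the identity everywhere; but this is more than you need.) In short: the paper buys minimal algebra at the price of a deep topological theorem, while your argument buys elementary topology at the price of a modest algebraic lemma about the truncated moment--cumulant transformation; both are complete proofs.
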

\begin{proof}
Let $h : M_A \to K_A$ be the function defined in the discussion following Theorem \ref{thm:Shiryaev}, mapping vectors of moments to vectors of cumulants. Define $H:\mathcal{M}\to\mathcal{K}$ as
\[
H({m}_{\sigma})= (h(m_{\sigma_0}),h(m_{\sigma_1}),\ldots,h(m_{\sigma_n}))= {\kappa}_{\sigma}
\]
for every $\sigma=(\sigma_0,\sigma_1,\ldots,\sigma_n) \in \mathcal{A}$. Since $h$ is continuous and one-to-one then so is $H$. Lemma \ref{lem:M_int} shows there exists an open set $U\subseteq\mathbb{R}^{d}$ included in $\mathcal{M}$.
Let $H_{U}$ be the restriction of $H$ on $U$. Then $H_{U}$ satisfies
all the assumptions of Brouwer's Invariance of Domain Theorem,\footnote{\cite{brouwer1911beweis}. See also Theorem 2 in \cite{tao2011}.}
which implies that $H_{U}(U)$ is an open subset of $\mathbb{R}^{d}$.
Since $H(\mathcal{M})\subseteq\mathcal{K}$, it follows that $\mathcal{K}$ has nonempty interior.
\end{proof}

\section{Automatic Continuity in the Cauchy Problem for Subsemigroups of $\mathbb{R}^{d}$.}

A \emph{subsemigroup} of $\mathbb{R}^{d}$ is a subset $\mathcal{S}\subseteq\mathbb{R}^{d}$ that is closed under addition, so that $x+y\in\mathcal{S}$ for all
$x,y\in\mathcal{S}$. We say that a map $F\colon\mathcal{S}\to\mathbb{R}_{+}$
is \emph{additive }if $F(x+y)=F(x)+F(y)$ for all $x,y,x+y\in\mathcal{S}$.
We say that $F$ is \emph{linear} if there exists $(a_{1},\ldots,a_{d})\in\RR^{d}$
such that $F(x)=F(x_{1},\ldots,x_{d})=a_{1}x_{1}+\cdots+a_{d}x_{d}$
for all $x\in\mathcal{S}$. 

We can now state the main result of this section:
\begin{thm}
\label{thm:cauchy}Let $\mathcal{S}$ be a subsemigroup of $\mathbb{R}^{d}$
with a nonempty interior. Then every additive function $F\colon\mathcal{S}\to\mathbb{R}_{+}$
is linear.
\end{thm}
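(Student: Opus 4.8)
The plan is to reduce the problem to the classical Cauchy functional equation on all of $\RR^d$, where additivity together with boundedness on a set with nonempty interior is known to force linearity. The argument has three ingredients: (i) a \emph{local boundedness} property of $F$, which is where the hypothesis $F \geq 0$ does all the work; (ii) an extension of $F$ from the semigroup $\mathcal{S}$ to a genuine additive function $\tilde F$ on the whole group $\mathcal{S}-\mathcal{S}$, which I will show equals $\RR^d$; and (iii) the standard passage from local boundedness to continuity and then to linearity for additive functions on $\RR^d$. The conceptual heart, and the step I expect to be the least routine, is (i): turning mere nonnegativity into a two-sided bound on a ball.

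First I would fix an interior point. Since $\mathcal{S}$ has nonempty interior, choose $x_0$ and $r>0$ with $B(x_0,r)\subseteq\mathcal{S}$. The key observation is that for any $z\in B(x_0,r)$ the reflected point $2x_0-z$ also lies in $B(x_0,r)\subseteq\mathcal{S}$, and $z+(2x_0-z)=2x_0$. Additivity then gives $F(z)+F(2x_0-z)=F(2x_0)=2F(x_0)$. Because both summands are nonnegative, each is at most $2F(x_0)$; hence $F\le M:=2F(x_0)$ on the whole ball $B(x_0,r)$. Thus nonnegativity alone upgrades to a uniform upper bound on an open set, and $F$ is trivially bounded below by $0$, so $F$ is bounded on $B(x_0,r)$.

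Next I would extend $F$. Since $B(x_0,r)\subseteq\mathcal{S}$, the difference set $\mathcal{S}-\mathcal{S}$ contains $B(0,2r)$ (write $h=(x_0+h/2)-(x_0-h/2)$ for $\|h\|<2r$), and being a subgroup of $\RR^d$ containing a neighborhood of the origin it must equal $\RR^d$. Define $\tilde F(x-y)=F(x)-F(y)$ for $x,y\in\mathcal{S}$; this is well defined because $x-y=x'-y'$ forces $x+y'=x'+y$, whence $F(x)+F(y')=F(x')+F(y)$ by additivity, and a direct check shows $\tilde F$ is additive on $\RR^d$ and restricts to $F$ on $\mathcal{S}$. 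By the bound above, $|\tilde F|\le M$ on $B(0,2r)$: for $\|h\|<2r$ one has $\tilde F(h)=F(x_0+h/2)-F(x_0-h/2)$ with both values in $[0,M]$.

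Finally I would run the classical continuity argument. For $\|t\|<2r/n$ the point $nt$ lies in $B(0,2r)$, so $|\tilde F(t)|=\tfrac1n|\tilde F(nt)|\le M/n$; letting $n\to\infty$ shows $\tilde F$ is continuous at $0$, and additivity propagates this to continuity on all of $\RR^d$. A continuous additive function on $\RR^d$ is $\RR$-linear (additivity gives $\mathbb{Q}$-homogeneity, continuity upgrades it to $\RR$-homogeneity, and then $\tilde F(x)=\sum_i x_i\,\tilde F(e_i)$), so $\tilde F$ is linear and therefore so is $F=\tilde F|_{\mathcal{S}}$. The only delicate points are ensuring the reflection $2x_0-z$ genuinely stays inside $\mathcal{S}$ (guaranteed by choosing the ball centered at an interior point) and confirming $\mathcal{S}-\mathcal{S}=\RR^d$ so that the extension is defined on the full space; once these are in place the remainder is the textbook Cauchy argument.
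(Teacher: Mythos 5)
Your proof is correct, and it takes a genuinely different route from the paper's. The paper works entirely inside the semigroup: it first constructs an open ball $B$ with the absorption property $aB\subset\mathcal{S}$ for every real $a\geq 1$ (Claim \ref{claim:open-ball-semigroup}), proves linearity on the cone $\bigcup_{a\geq1}aB$ by exploiting monotonicity along rays---this is where nonnegativity enters for the paper, since $F(bx)\geq F(b'x)$ for $b>b'$ follows from $F\geq 0$ and additivity (Claim \ref{claim:open-linear})---and then propagates positive homogeneity to all of $\mathcal{S}$ via a basis argument (Claim \ref{claim:fax-afx}). You instead reduce to the classical Cauchy equation on the full space: your reflection trick $F(z)+F(2x_0-z)=2F(x_0)$ converts nonnegativity into a uniform bound on a ball (this is where nonnegativity does its work in your argument), the difference-group extension $\tilde F(x-y)=F(x)-F(y)$ is well defined and additive on $\mathcal{S}-\mathcal{S}=\mathbb{R}^d$, and local boundedness plus additivity yields continuity at the origin and hence linearity. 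All the delicate checks in your write-up are handled correctly: well-definedness of $\tilde F$ uses exactly the semigroup closure needed to apply additivity to $x+y'=x'+y$; the identity $\tilde F(s)=F(2s)-F(s)=F(s)$ shows the extension restricts to $F$; and $\mathcal{S}-\mathcal{S}$ is indeed a subgroup containing $B(0,2r)$, hence all of $\mathbb{R}^d$. What each approach buys: yours is shorter and leans on textbook machinery (an additive function on $\mathbb{R}^d$ bounded on a nonempty open set is linear), avoiding the paper's geometric constructions entirely; the paper's proof is self-contained within the semigroup, never extends $F$ beyond its original domain, and its intermediate claims (the absorbing ball, homogeneity along rays) do the work that your group extension does. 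One could argue your route makes the role of the hypothesis $F\geq 0$ more transparent---it is used exactly once, in the reflection bound---whereas in the paper it is threaded through the monotonicity argument of Claim \ref{claim:open-linear}.
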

Before proving the theorem we will establish a number of claims.
\begin{claim}
\label{claim:open-ball-semigroup}Let $\mathcal{S}$ be a subsemigroup
of $\mathbb{R}^{d}$ with a nonempty interior. Then there exists an
open ball $B\subset\mathbb{R}^{d}$ such that $aB\subset\mathcal{S}$
for all real $a\geq1$. 
\end{claim}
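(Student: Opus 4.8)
The plan is to exploit closure under addition to manufacture a family of balls sitting inside $\mathcal{S}$, and then to locate a single ball $B$ so deep inside the ``cone'' these balls sweep out that every dilation $aB$, $a\ge 1$, is trapped inside one of them.

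First I would use the nonempty interior to fix an open ball $U=B(x_{0},r)\subseteq\mathcal{S}$ with $r>0$, and write $d_{0}=\lVert x_{0}\rVert$. Closure under addition gives that the $m$-fold Minkowski sum of $U$ lies in $\mathcal{S}$; since the $m$-fold sum of $B(x_{0},r)$ is exactly $B(mx_{0},mr)$, I obtain $B(mx_{0},mr)\subseteq\mathcal{S}$ for every integer $m\ge 1$. The geometric content is that these balls grow linearly in both center and radius, so their union fills out an open cone with apex near the origin; a ball placed well inside that cone and far enough from the apex will dilate while staying inside the cone.

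Concretely, I would set $B=B(Nx_{0},\rho)$ for a large positive integer $N$ and radius $\rho=Nr/4$. Fix $a\ge 1$ and let $m$ be the nearest integer to $aN$, so that $\lvert aN-m\rvert\le\tfrac12$ and hence $m\ge N\ge 1$. Then $aB=B(aNx_{0},\,a\rho)$, and since for open balls $B(c_{1},\rho_{1})\subseteq B(c_{2},\rho_{2})$ whenever $\lVert c_{1}-c_{2}\rVert+\rho_{1}\le\rho_{2}$, it suffices to verify
\[
\lvert aN-m\rvert\,d_{0}+a\rho\le mr .
\]
Using $\lvert aN-m\rvert\le\tfrac12$ and $a\le(m+\tfrac12)/N$ together with $\rho=Nr/4$, the left-hand side is at most $\tfrac12 d_{0}+(m+1)r/4$, so the displayed inequality follows from $\tfrac12 d_{0}+r/4\le\tfrac34 mr$. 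As $m\ge N$, this holds once $N\ge(2d_{0}+r)/(3r)$. Choosing $N$ that large yields $aB\subseteq B(mx_{0},mr)\subseteq\mathcal{S}$ for every $a\ge 1$, which is the claim.

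I expect the main obstacle to be the small-$a$ regime. A ball centered at $x_{0}$ itself fails, because for $a$ slightly above $1$ the dilate $aB$ can already protrude outside $U=B(x_{0},r)$ when $d_{0}$ is large, and no single integer ball $B(mx_{0},mr)$ contains it. Pushing the center out to $Nx_{0}$ forces the relevant index to satisfy $m\ge N$, so the ``budget'' $mr$ on the right-hand side is always large; this is precisely what absorbs the fixed error terms $\tfrac12 d_{0}$ and the radius contribution, and is the crux that makes a uniform choice of $B$ possible.
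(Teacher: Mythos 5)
Your proof is correct and follows essentially the same strategy as the paper's: both use the semigroup property to obtain the balls $B(mx_0, mr)\subseteq\mathcal{S}$ for integers $m$, and both place $B$ at a far-out multiple $Nx_0$ of the original center (with $N$ large relative to $\lVert x_0\rVert/r$) so that every dilate $aB$ is absorbed into the integer ball indexed by a rounding of $aN$. The only differences are bookkeeping: you round to the nearest integer and verify containment by a direct triangle-inequality estimate, while the paper rounds down and sandwiches $aB$ between the two consecutive dilates $\tfrac{n}{M}B$ and $\tfrac{n+1}{M}B$.
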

\begin{proof}
Let $B_{0}$ be an open ball contained in $\mathcal{S}$, with center $x_{0}$ and radius $r$. Given a positive integer $k$, note that $kB_{0}$ is the ball of radius $kr$ centered at $kr_{0}$, and that it is contained in $\mathcal{S}$, since $\mathcal{S}$ is a semigroup.
Choose a positive integer $M\geq4$ such that $\frac{2}{3}Mr>\Vert x_{0}\Vert$, and let $B$ be the open ball with center at $Mx_{0}$ and radius $r$ (see Figure \ref{fig:open-balls}). Fix any $a\geq1$, and write
$a=\frac{1}{M}(n+\gamma)$ for some integer $n\geq M$ and $\gamma\in[0,1)$. Then $\frac{n}{M}B$ is the ball of radius $\frac{n}{M}r$ centered at $nx_{0}$, which is contained in $nB_{0}$, since $nB_{0}$ also has center $nx_{0}$, but has a larger radius $nr$. So $\frac{n}{M}B\subset nB_{0}$.
We claim that furthermore $\frac{n+1}{M}B$ is also contained in $nB_{0}$. To see this, observe that the center of $\frac{n+1}{M}B$ is  $(n+1)x_{0}$ and its radius is $\frac{n+1}{M}r$. Hence the center of $\frac{n+1}{M}B$ is at distance $\Vert x_{0}\Vert$ from the center of $nB_{0}$, and so the furthest point in $\frac{n+1}{M}B$ is at distance $\Vert x_{0}\Vert+\frac{n+1}{M}r$ from the center of $nB_{0}$. But the radius of $nB_{0}$ is
\[
nr=\frac{2}{3}nr+\frac{1}{3}nr\geq\frac{2}{3}Mr+\frac{1}{3}nr>\Vert x_{0} \Vert +\frac{n+1}{M}r,
\]
where the first inequality follows since $n\geq M$, and the second since $\frac{2}{3}Mr>\Vert x_{0}\Vert$ and $M\geq4$. So $nB_{0}$ indeed contains both $\frac{n}{M}B$ and $\frac{n+1}{M}B$. Thus it also contains $aB$, and so $\mathcal{S}$ contains $aB$.
\end{proof}
\begin{figure}[H]
\centering{}\includegraphics[scale=0.45]{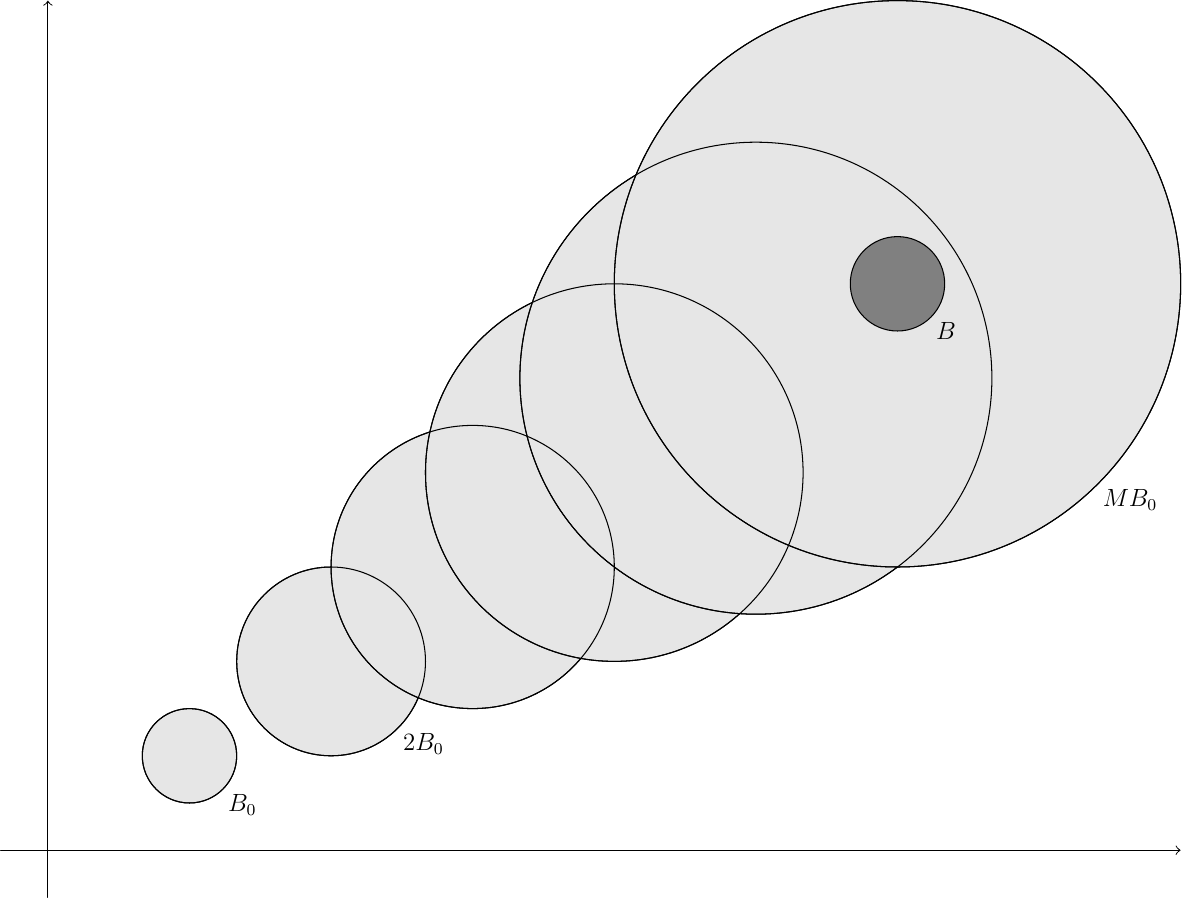}\caption{\label{fig:open-balls}Illustration of the proof of Claim \ref{claim:open-ball-semigroup}.
The dark ball $B$ is contained in the light ones, and it is apparent from this image that so is any multiple of $B$ by $a\protect\geq1$. }
\end{figure}

\begin{claim}
\label{claim:fax-afx}Let $\mathcal{S}$ be a subsemigroup of $\mathbb{R}^{d}$ with a nonempty interior. Let $F\colon\mathcal{S}\to\mathbb{R_{+}}$ be additive and satisfy $F(ay)=aF(y)$ for every $y\in\mathcal{S}$ and $a\in\mathbb{R}_{+}$ such that $ay\in\mathcal{S}$. Then $F$
is linear.
\end{claim}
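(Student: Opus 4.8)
The plan is to use non-negativity of $F$ to convert additivity into local boundedness, and then invoke the classical fact that a locally bounded additive function is linear. The geometry supplied by Claim \ref{claim:open-ball-semigroup} is exactly what makes both the boundedness estimate and the passage to all of $\mathbb{R}^d$ possible, so I would invoke that claim at the outset.

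First I would fix the open ball $B$ from Claim \ref{claim:open-ball-semigroup}, with center $c$ and radius $r$, so that $aB\subseteq\mathcal{S}$ for every $a\geq 1$; in particular $c\in B\subseteq\mathcal{S}$ and $2c\in 2B\subseteq\mathcal{S}$. For any $h$ with $\|h\|<r$ the three points $c+h$, $c-h$, and their sum $2c$ all lie in $\mathcal{S}$, so additivity gives $F(c+h)+F(c-h)=F(2c)=2F(c)$, where the last equality is additivity applied to $c+c$ (equivalently, the homogeneity hypothesis with $a=2$). Since $F\geq 0$, this forces $0\leq F(c+h)\leq 2F(c)$, i.e. $F$ is bounded on the open ball $B$. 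Turning the two-sided symmetry relation plus non-negativity into a genuine bound is the conceptual heart of the argument, and is where the ``automatic continuity'' phenomenon enters; the rest is bookkeeping.

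Next I would extend $F$ to an additive function $\hat{F}$ on all of $\mathbb{R}^d$. For arbitrary $x\in\mathbb{R}^d$, choosing $a>\max\{1,\|x\|/r\}$ gives $ac\in aB\subseteq\mathcal{S}$ and $x+ac=a(c+x/a)\in aB\subseteq\mathcal{S}$, so the set of $w\in\mathcal{S}$ with $x+w\in\mathcal{S}$ is nonempty. Defining $\hat{F}(x)=F(x+w)-F(w)$, additivity of $F$ on $\mathcal{S}$ shows this value is independent of the admissible choice of $w$, that $\hat{F}$ is additive on $\mathbb{R}^d$, and that $\hat{F}$ agrees with $F$ on $\mathcal{S}$. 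Since $\hat{F}=F$ on $B\subseteq\mathcal{S}$, the previous step shows $\hat{F}$ is bounded on $B$, hence bounded (two-sided) on the origin-neighborhood $B-B$, which is the open ball of radius $2r$ centered at $0$.

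Finally I would apply the standard Cauchy-equation fact: an additive $\hat{F}\colon\mathbb{R}^d\to\mathbb{R}$ that is bounded on a neighborhood of $0$ is continuous---from $|\hat{F}(x)|=\tfrac{1}{n}|\hat{F}(nx)|$ one gets continuity at $0$, and additivity then gives continuity everywhere---and a continuous additive function is linear, $\hat{F}(x)=\langle a,x\rangle$. Restricting back to $\mathcal{S}$ yields $F(x)=\langle a,x\rangle$ for all $x\in\mathcal{S}$, which is the desired linearity. The step requiring the most care is the extension: one must use Claim \ref{claim:open-ball-semigroup}, and not merely the nonempty interior of $\mathcal{S}$, to guarantee that every $x\in\mathbb{R}^d$ arises as a difference of two points of $\mathcal{S}$; the well-definedness and additivity of $\hat{F}$ are then short computations from additivity of $F$, and the homogeneity hypothesis is used only to make the scalar relations ($F(2c)=2F(c)$ and the like) immediate.
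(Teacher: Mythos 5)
Your proof is correct, but it takes a genuinely different route from the paper's. The paper's argument for Claim \ref{claim:fax-afx} is purely algebraic: it picks a basis $\{b^1,\ldots,b^d\}$ of $\mathbb{R}^d$ inside the ball $B$ supplied by Claim \ref{claim:open-ball-semigroup}, writes an arbitrary $x\in\mathcal{S}$ as $\sum_i\beta_i b^i$, rescales by a factor $a$ large enough that every $|a\beta_i|\geq 1$ (so each term $\pm a\beta_i b^i$ lies in $\mathcal{S}$), moves the negative-coefficient terms to the other side of the additivity identity, and then applies the homogeneity hypothesis $F(ay)=aF(y)$ to extract $F(x)=\beta_1F(b^1)+\cdots+\beta_dF(b^d)$; non-negativity of $F$ is never used. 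You do the opposite: you never use the homogeneity hypothesis (as you note, $F(2c)=2F(c)$ already follows from additivity alone), and instead exploit non-negativity to get the bound $0\leq F(c+h)\leq 2F(c)$ on $B$, extend $F$ by differences to an additive $\hat F$ on all of $\mathbb{R}^d$, and invoke the classical fact that an additive function bounded on a neighborhood of the origin is linear. Your computations for well-definedness, additivity, and agreement of $\hat F$ with $F$ on $\mathcal{S}$ all go through, since every sum involved stays in $\mathcal{S}$ by the semigroup property, and the passage from boundedness on $B$ to two-sided boundedness on $B-B$ is correct. The interesting consequence is that your argument proves Theorem \ref{thm:cauchy} itself---additivity plus non-negativity on a semigroup with nonempty interior already implies linearity---so it would let the paper bypass Claim \ref{claim:fax-afx} and Claim \ref{claim:open-linear} entirely, whose role in the paper is precisely to manufacture the homogeneity that your argument does not need. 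What the paper's version buys in exchange is that its Claim \ref{claim:fax-afx} holds for real-valued $F$ with no sign restriction, provided homogeneity is assumed; what yours buys is a shorter path to the main theorem via standard automatic-continuity machinery for the Cauchy equation. One small quibble with your closing remark: the full strength of Claim \ref{claim:open-ball-semigroup} (dilations $aB$ for all real $a\geq 1$) is not actually needed for the extension step, since integer dilations $nB_0$ of any ball $B_0$ in the interior of $\mathcal{S}$ already lie in $\mathcal{S}$ by the semigroup property, and these suffice to write every $x\in\mathbb{R}^d$ as a difference of two points of $\mathcal{S}$.
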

\begin{proof}
If $\mathcal{S}$ does not include zero, then without loss of generality
we add zero to it and set $F(0)=0$. Let $B$ be an open ball such
that $aB\subset\mathcal{S}$ for all $a\geq1$; the existence of such a ball is guaranteed by Claim \ref{claim:open-ball-semigroup}. Choose a basis $\{ b^{1},\ldots,b^{d}\} $ of $\mathbb{R}^{d}$ that is a subset of $B$, and let $x=\beta_{1}b^{1}+\cdots+\beta_{d}b^{d}$ be an arbitrary element of $\mathcal{S}$. Let $b=\max\left\{ 1/|\beta_{i}|\,:\,\beta_{i}\neq0\right\} $,
and let $a=\max\left\{ 1,b\right\} $. Then
\[
F(ax)=F(a\beta_{1}b^{1}+\cdots+a\beta_{d}b^{d}).
\]
Assume without loss of generality that for some $0\leq k\leq d$ it
holds that the first $k$ coefficients $\beta_{i}$ are non-negative, and the rest are negative. Then for $i\leq k$ it holds that $a\beta_{i}b^{i}\in\mathcal{S}$ and for $i>k$ it holds that $-a\beta_{i}b^{i}\in\mathcal{S}$; this follows from the defining property of the ball $B$, since each $b^{i}$ is in $B$, and since $|a\beta_{i}|\geq1$. Hence we can add $F(-a\beta_{k+1}b^{k+1}-\cdots-a\beta_{d}b^{d})$ to both sides of the above displayed equation, and then by additivity,
\begin{align*}
 & F(ax)+F(-a\beta_{k+1}b^{k+1}-\cdots-a\beta_{d}b^{d})\\
 & =F(a\beta_{1}b^{1}+\cdots+a\beta_{d}b^{d})+F(-a\beta_{k+1}b^{k+1}-\cdots-a\beta_{d}b^{d})\\
 & =F(a\beta_{1}b^{1}+\cdots+a\beta_{k}b^{k}).
\end{align*}
Using additivity again yields
\[
F(ax)+F(-a\beta_{k+1}b^{k+1})+\cdots+F(-a\beta_{d}b^{d})=F(a\beta_{1}b^{1})+\cdots+F(a\beta_{k}b^{k}).
\]
Applying now the claim hypothesis that $F(ay)=aF(y)$ whenever $y,ay\in\mathcal{S}$
yields
\[
aF(x)+(-a\beta_{k+1})F(b^{k+1})+\cdots+(-a\beta_{d})F(b^{d})=a\beta_{1}F(b^{1})+\cdots+a\beta_{k}F(b^{k}).
\]
Rearranging and dividing by $a$, we arrive at
\[
F(x)=\beta_{1}F(b^{1})+\cdots+\beta_{d}F(b^{d}).
\]
We can therefore extend $F$ to a function that satisfies this on
all of $\mathbb{R}^{d}$, which is then clearly linear.
\end{proof}
\begin{claim}
\label{claim:open-linear}Let $B$ be an open ball in $\mathbb{R}^{d}$,
and let $\mathcal{B}$ be the semigroup given by $\cup_{a\geq1}aB$.
Then every additive $F\colon\mathcal{B}\to\mathbb{R}_{+}$ is linear.
\end{claim}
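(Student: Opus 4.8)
The plan is to deduce the claim from Claim \ref{claim:fax-afx}. That result reduces linearity of an additive $F$ to the homogeneity property $F(ay)=aF(y)$ for all $y\in\mathcal{B}$ and all $a>0$ with $ay\in\mathcal{B}$. Note first that $\mathcal{B}=\bigcup_{a\geq1}aB$ is indeed a subsemigroup of $\mathbb{R}^d$ with nonempty interior: it contains the open ball $B$, and if $u=a_1z_1$, $v=a_2z_2$ with $z_1,z_2\in B$ and $a_1,a_2\geq1$, then by convexity of $B$ we have $u+v=(a_1+a_2)w$ with $w=(a_1z_1+a_2z_2)/(a_1+a_2)\in B$, so $u+v\in(a_1+a_2)B\subseteq\mathcal{B}$. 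Thus Claim \ref{claim:fax-afx} applies with $\mathcal{S}=\mathcal{B}$, and the entire task is to establish homogeneity of $F$ along rays.

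First I would fix $y\in\mathcal{B}$ and reduce homogeneity to a one-dimensional statement. Let $T_y=\{t>0:ty\in\mathcal{B}\}$ and set $g(t)=F(ty)$ for $t\in T_y$. Writing $ty=az$ with $z\in B$ and $a\geq1$, one sees that $sy=(sa/t)z\in\mathcal{B}$ for every $s\geq t$, so $T_y$ is an interval unbounded above, i.e.\ a ray $(t_0,\infty)$ or $[t_0,\infty)$ with $t_0\geq0$; moreover $1\in T_y$ since $y\in\mathcal{B}$. For $s,t\in T_y$ we have $sy,ty\in\mathcal{B}$, hence $sy+ty=(s+t)y\in\mathcal{B}$, and additivity of $F$ gives $g(s+t)=g(s)+g(t)$. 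Thus $g\colon T_y\to\mathbb{R}_+$ is a non-negative additive function on a ray, and homogeneity of $F$ along $y$ is exactly the assertion $g(t)=g(1)\,t$.

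The heart of the argument is therefore the elementary lemma: a function $g\colon(t_0,\infty)\to[0,\infty)$ with $g(s+t)=g(s)+g(t)$ is linear. I would prove it in three steps. (i) Additivity yields integer homogeneity $g(nt)=ng(t)$ for $t>t_0$, and hence $g((p/q)t)=(p/q)g(t)$ whenever $t/q>t_0$, i.e.\ $\mathbb{Q}$-homogeneity on the high part of the domain. (ii) Non-negativity forces an eventual monotonicity: if $t-s>t_0$ then $g(t)=g(s)+g(t-s)\geq g(s)$. (iii) To conclude that $g(t)/t$ is constant, fix $s,t>t_0$; since $g(t)/t$ is invariant under $t\mapsto Kt$ for integers $K$ by integer homogeneity, I may replace $s,t$ by $Ks,Kt$ and so take the common scale $K$ as large as needed. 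Approximating $t/s$ by rationals $r_1<t/s<r_2$ of denominator $q$, and choosing $K$ large enough that $Ks>qt_0$ and that the gaps $K(t-r_1s)$ and $K(r_2s-t)$ both exceed $t_0$, steps (i)--(ii) give $r_1Kg(s)=g(r_1Ks)\leq g(Kt)\leq g(r_2Ks)=r_2Kg(s)$; since $g(Kt)=Kg(t)$, dividing by $K$ and letting $r_1,r_2\to t/s$ yields $g(t)=(t/s)g(s)$, i.e.\ $g(t)/t=g(s)/s$.

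Applying this lemma to each $g$ gives $g(t)=g(1)\,t$, i.e.\ $F(ay)=aF(y)$ for every $y\in\mathcal{B}$ and every $a\in T_y$, which is precisely the homogeneity hypothesis of Claim \ref{claim:fax-afx}; that claim then yields linearity of $F$, completing the proof. I expect the main obstacle to be step (iii) of the lemma: $\mathbb{Q}$-homogeneity is available only on the high part of the domain, whereas good rational approximation forces large denominators, and these two requirements pull in opposite directions. The scaling trick $t\mapsto Kt$ is what resolves the tension, simultaneously validating the $\mathbb{Q}$-homogeneity identity and making the monotonicity gaps exceed $t_0$.
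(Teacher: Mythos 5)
Your proof is correct and takes essentially the same route as the paper's: both reduce the claim to Claim \ref{claim:fax-afx}, restrict $F$ to the ray through a point (where the domain is an open half-line, since $\mathcal{B}$ is open), and combine rational homogeneity from additivity with monotonicity from non-negativity in a sandwich argument. The only difference is bookkeeping: the paper gets rational homogeneity relative to the base point directly (from $nF(qx)=F(nqx)=F(mx)=mF(x)$, which needs only $qx,x\in\mathcal{B}$) and absorbs the domain restriction into the monotonicity step via the same integer-scaling device, making your step (iii) $K$-scaling trick unnecessary, though perfectly valid.
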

\begin{proof}
Fix any $x\in\mathcal{B}$, and assume $ax\in\mathcal{B}$ for some
$a\in\mathbb{R}_{+}$. Since $\mathcal{B}$ is open, by Claim \ref{claim:fax-afx}
it suffices to show that $F(ax)=aF(x)$. The defining property of
$\mathcal{B}$ implies that the intersection of $\mathcal{B}$ and
the ray $\left\{ bx\,:\,b\geq0\right\} $ is of the form $\left\{ bx\,:\:b>a_{0}\right\} $
for some $a_{0}\geq0$. By the additive property of $F$, we have
that $F(qx)=qF(x)$ for every rational $q>a_{0}.$ Furthermore, if
$b>b'>a_{0}$ then $n(b-b')x\in\mathcal{S}$ for $n$ large enough.
Hence
\begin{align*}
F(bx) & =\frac{1}{n}F(nbx)\\
 & =\frac{1}{n}F\left(nb'x+(n(b-b')x)\right)\\
 & =\frac{1}{n}F\left(nb'x\right)+\frac{1}{n}F\left(n(b-b')x\right)\\
 & =F(b'x)+\frac{1}{n}F\left(n(b-b')x\right)\\
 & \geq F(b'x).
\end{align*}
Thus the map $f\colon(a_{0},\infty)\to\mathbb{R}^{+}$ given by $f(b)=F(bx)$
is monotone increasing, and its restriction to the rationals is linear.
So $f$ must be linear, and hence $F(ax)=aF(x).$
\end{proof}
Given these claims, we are ready to prove our theorem.
\begin{proof}
[Proof of Theorem~\ref{thm:cauchy}.]Fix any $x\in\mathcal{S}$, and
assume $ax\in\mathcal{S}$ for some $a\in\mathbb{R}_{+}$. By Claim
\ref{claim:fax-afx} it suffices to show that $F(ax)=aF(x)$. Let
$B$ be a ball with the property described in Claim \ref{claim:open-ball-semigroup},
and denote its center by $x_{0}$ and its radius by $r$. As in Claim
\ref{claim:open-linear}, let $\mathcal{B}$ be the semigroup given
by $\cup_{a\geq1}aB$; note that $\mathcal{B}\subseteq\mathcal{S}$.
Then there is some $y$ such that $x+y,a(x+y),y,ay\in\mathcal{B}$;
in fact, we can take $y=bx_{0}$ for $b=\max\left\{ a,1/a,|x|/r\right\} $
(see Figure \ref{fig:linear-proof}). Then, on the one hand, by additivity,
\[
F(ax+ay)=F(ax)+F(ay).
\]
On the other hand, since $x+y,a(x+y),y,ay\in\mathcal{B},$and since,
by Claim \ref{claim:open-linear}, the restriction of $F$ to $\mathcal{B}$
is linear, we have that
\[
F(ax+ay)=F(a(x+y))=aF(x+y)=aF(x)+aF(y)=aF(x)+F(ay),
\]
thus
\[
F(ax)+F(ay)=aF(x)+F(ay)
\]
and so $F(ax)=aF(x)$. 
\end{proof}
\begin{figure}[H]
\centering{}\includegraphics[scale=0.45]{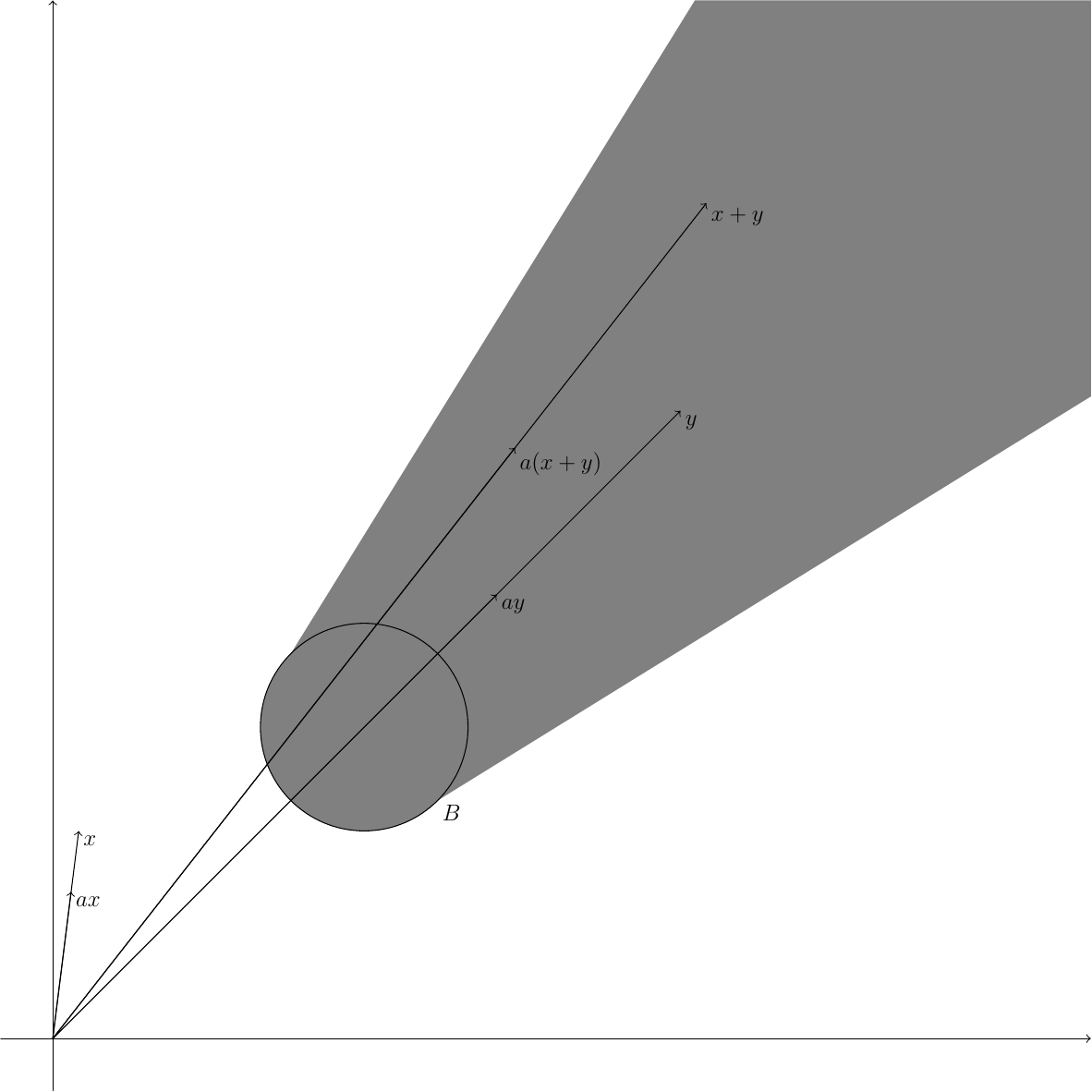}\caption{\label{fig:linear-proof}An illustration of the proof of Theorem \ref{thm:cauchy}. }
\end{figure}

\section{Proof of Theorem \ref{thm:repr-1}}

Throughout this section, we maintain the notation and terminology introduced in \S\ref{sec:pre}. It follows from the results in \S\ref{sec:dkl} that an LLR cost satisfies Axioms 1-4. For the rest of this section, we denote by $C$ a cost function that satisfies the axioms. Let $N$ be such that $C$ is uniformly continuous with respect to the distance $d_N$. We use the same $N$ to define the set $A = \{0,\ldots,N\}^n\backslash\{0,\ldots,0\}$ introduced in \S\ref{sec:cumulants}.
\begin{lem}\label{lem:cost1}
Let $\mu$ and $\nu$ be two experiments that induce the same vector $\sigma\in\mathcal{A}$. Then $C(\mu)=C(\nu)$.
\end{lem}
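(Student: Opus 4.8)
The plan is to reduce the statement directly to Axiom~\ref{axm:info-content} by way of Lemma~\ref{lem:llr}, the key observation being that the vector $\sigma$ determines the entire collection $(\bar\mu_i)_{i\in\Theta}$ of distributions of log-likelihood ratios. The content of the lemma is thus purely a translation between the two equivalent bookkeeping devices: $\sigma$, which records the log-likelihood ratios relative to the base state $0$, and $(\bar\mu_i)$, which records all pairwise log-likelihood ratios.

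First I would record the deterministic relationship between the two. Adopting the convention $\ell_0 = \ell_{00} = 0$, for every pair of states $j,k$ we have, $\mu$-almost surely, $\ell_{jk} = \ell_{j0} - \ell_{k0} = \ell_j - \ell_k$. Consequently the full vector $L = (\ell_{jk})_{j,k\in\Theta}$ is the image of $(\ell_1,\ldots,\ell_n)$ under the fixed linear (hence Borel measurable) map $T\colon\RR^n\to\RR^{\Theta\times\Theta}$ given by $T(\xi_1,\ldots,\xi_n) = (\xi_j - \xi_k)_{j,k}$, where $\xi_0 := 0$. Crucially, $T$ depends only on $\Theta$ and not on the experiment at hand.

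Next I would conclude that $\bar\mu_i$ is a function of $\sigma$ alone. Conditioning on a state $i$, the distribution of $(\ell_1,\ldots,\ell_n)$ is by definition $\sigma_i$, so the distribution of $L$ is the pushforward $\bar\mu_i = T_*\sigma_i$. Hence if $\mu$ and $\nu$ induce the same vector $\sigma\in\mathcal{A}$, then $\bar\mu_i = T_*\sigma_i = \bar\nu_i$ for every $i\in\Theta$. Applying Lemma~\ref{lem:llr} to the identity $\bar\mu_i = \bar\nu_i$ shows that $\mu$ and $\nu$ are equivalent in the Blackwell order, and Axiom~\ref{axm:info-content} then yields $C(\mu) = C(\nu)$.

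There is no substantive obstacle in this argument; it is essentially a definitional unwinding. The only points requiring a little care are the verification that $T$ is measurable (immediate, since it is linear) and that the pushforward identity $\bar\mu_i = T_*\sigma_i$ holds, which follows at once from the almost-sure identity $\ell_{jk} = \ell_j - \ell_k$. I would keep the exposition short and simply cite Lemma~\ref{lem:llr} and Axiom~\ref{axm:info-content} for the final two steps.
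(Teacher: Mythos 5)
Your proof is correct and follows essentially the same route as the paper's: both use the almost-sure identity $\ell_{jk}=\ell_{j0}-\ell_{k0}$ to conclude that equality of the induced vectors $\sigma$ forces $\bar\mu_i=\bar\nu_i$ for all $i$, and then invoke Lemma~\ref{lem:llr} and Axiom~\ref{axm:info-content}. Your explicit pushforward map $T$ is just a slightly more formal packaging of the same observation.
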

\begin{proof}
Conditional on each $k\in\Theta$, the two experiments induce the same distribution for $\left(\ell_{i0}\right)_{i\in\Theta}$. Because $\ell_{ij}=\ell_{i0}-\ell_{j0}$ almost surely, it follows that, conditional on each state, the two experiments induce the same distribution over the vector of all log-likelihood ratios $\left(\ell_{ij}\right)_{i,j\in\Theta}$. Hence, $\bar{\mu}_i = \bar{\nu}_i$ for every $i$. Therefore, by Lemma \ref{lem:llr} the two experiments are equivalent in the Blackwell order. The result now follows directly from Axiom \ref{axm:info-content}.
\end{proof}

Lemma \ref{lem:cost1} implies that we can define a function $c:\mathcal{A}\to\mathbb{R}_{+}$ as $c(\sigma)=C(\mu)$ where $\mu$ is an experiment inducing $\sigma$.

\begin{lem}\label{lem:c-prty-1}
Consider two experiments $\mu = (S,(\mu_i))$ and $\nu = (T,(\nu_i))$ that induce $\sigma$ and $\tau$ in $\mathcal{A}$, respectively. Then
\begin{enumerate}
\item The experiment $\mu \otimes \nu$ induces the vector $(\sigma_0\ast\tau_0,\ldots,\sigma_n\ast\tau_n) \in \mathcal{A}$;
\item The experiment $\alpha \cdot \mu$ induces the measure $\alpha \sigma + (1-\alpha)\delta_{\mathbf{0}}$.
\end{enumerate}
\end{lem}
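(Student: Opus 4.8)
The plan is to prove both parts by directly computing, for each operation, the resulting vector of log-likelihood ratios $(\ell_1,\ldots,\ell_n)$ and then reading off its conditional distributions. The single algebraic input is the behavior of Radon-Nikodym derivatives: they factor over products of measures and are invariant under rescaling, which translates into log-likelihood ratios adding under products and being unaffected by dilution.

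For Part 1, I would first note that $(\mu\otimes\nu)_i = \mu_i\times\nu_i$, and that for product measures the density factors, $\frac{\dd(\mu_i\times\nu_i)}{\dd(\mu_0\times\nu_0)}(s,t)=\frac{\dd\mu_i}{\dd\mu_0}(s)\,\frac{\dd\nu_i}{\dd\nu_0}(t)$ almost surely. Taking logarithms gives $\ell_i^{\mu\otimes\nu}(s,t)=\ell_i^{\mu}(s)+\ell_i^{\nu}(t)$ for each $i$, so the log-likelihood ratio vector of $\mu\otimes\nu$ is the sum of those of $\mu$ and $\nu$. Conditional on state $i$ the realization $(s,t)$ is distributed as $\mu_i\times\nu_i$, hence the two summands are independent, and the conditional law of the sum is the convolution $\sigma_i\ast\tau_i$; this shows that $\mu\otimes\nu$ induces $(\sigma_0\ast\tau_0,\ldots,\sigma_n\ast\tau_n)$. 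To conclude that this vector lies in $\mathcal{A}$, admissibility is automatic since the vector is derived from the genuine experiment $\mu\otimes\nu$ (alternatively one invokes Lemma~\ref{lem:admis}), and membership of each $\sigma_i\ast\tau_i$ in $\mathcal{P}_A$ follows by expanding a mixed absolute moment of a sum of independent vectors, via the triangle inequality and the multinomial theorem, into a finite sum of products of absolute moments of $\sigma_i$ and $\tau_i$ of orders at most $(N,\ldots,N)$, each finite because $\sigma,\tau\in\mathcal{A}$.

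For Part 2, I would compute the densities of the dilution $\alpha\cdot\mu=(S\cup\{o\},(\nu_i))$ on its two pieces. On $S$ both $\nu_i$ and $\nu_0$ are the corresponding measures of $\mu$ rescaled by $\alpha$, so the factor $\alpha$ cancels and $\frac{\dd\nu_i}{\dd\nu_0}(s)=\frac{\dd\mu_i}{\dd\mu_0}(s)$; thus $\ell_i$ coincides on $S$ with the log-likelihood ratio of $\mu$. At the new realization $o$ we have $\nu_i(\{o\})=1-\alpha$ for every state, so $\frac{\dd\nu_i}{\dd\nu_0}(o)=1$ and $\ell_i(o)=0$ for all $i$, i.e. the log-likelihood ratio vector equals $\mathbf{0}$ there. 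Conditional on state $i$ the realization lands in $S$ with probability $\alpha$, contributing the law $\sigma_i$, and equals $o$ with probability $1-\alpha$, contributing $\delta_{\mathbf{0}}$; hence the conditional distribution is $\alpha\sigma_i+(1-\alpha)\delta_{\mathbf{0}}$, as claimed. That this vector again lies in $\mathcal{A}$ is immediate, since a convex combination with the finitely supported $\delta_{\mathbf{0}}$ preserves finiteness of all moments.

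The calculations are routine and there is no serious obstacle; the only points requiring a little care are the measure-theoretic justifications---that the product density factors $\mu_0\times\nu_0$-almost everywhere, and that the rescaling together with the atom at $o$ leave the densities as described---along with the verification that $\sigma_i\ast\tau_i\in\mathcal{P}_A$, so that the output of Part 1 genuinely sits in $\mathcal{A}$.
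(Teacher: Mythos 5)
Your proof is correct and follows essentially the same route as the paper: for Part 1, densities factor over product measures so log-likelihood ratios add, and conditional independence turns the sum into a convolution; for Part 2, the rescaling cancels in the ratios on $S$ and the atom at $o$ contributes $\delta_{\mathbf{0}}$. You actually supply more detail than the paper, which dispatches Part 2 as ``immediate'' and does not spell out the moment-finiteness check for membership in $\mathcal{A}$; your multinomial-expansion argument for that point is sound.
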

\begin{proof}
(1) For every $E\subseteq\RR^{n}$ and every state $i$,
\begin{eqnarray*}
 &  & (\mu_{i}\times\nu_{i})\left(\left\{ (s,t):\left(\ell_{1}(s,t),\ldots\ell_{n}(s,t)\right)\in E\right\} \right)\\
 & = & (\mu_{i}\times\nu_{i})\left(\left\{ (s,t):\left(\log\frac{\dd\mu_{1}}{\dd\mu_{0}}(s)+\log\frac{\dd\nu_{1}}{\dd\nu_{0}}(t),\ldots,\log\frac{\dd\mu_{n}}{\dd\mu_{0}}(s)+\log\frac{\dd\nu_{1}}{\dd\nu_{n}}(t)\right)\in E\right\} \right)\\
 & = & (\sigma_i\ast\tau_i)(E)
\end{eqnarray*}
where the last equality follows from the definition of $\sigma_i$ and $\tau_i$. This concludes the proof of the claim.

(2) Immediate from the definition of $\alpha \cdot \mu$.
\end{proof}

\begin{lem}\label{lem:c-prty-2}
The function $c : \mathcal{A} \to \mathbb{R}$ satisfies, for all $\sigma,\tau\in\mathcal{A}$ and $\alpha\in [0,1]$:
\begin{enumerate}
    \item $c(\sigma_0\ast\tau_0, \ldots, \sigma_n\ast\tau_n) = c(\sigma) + c(\tau)$;
    \item $c(\alpha \sigma + (1-\alpha)\delta_{\mathbf{0}}) = \alpha c(\sigma)$.
\end{enumerate}
\end{lem}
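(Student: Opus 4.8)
The plan is to simply unwind the definition of $c$ and invoke the two structural axioms, using Lemma \ref{lem:c-prty-1} to identify which experiments realize the convolved and diluted vectors of measures. Both claims are nothing more than translations of Axioms \ref{axm:additivity} and \ref{axm:affinity-1} from cost functions on experiments to the induced function $c$ on $\mathcal{A}$; the substantive computation of the induced vectors has already been done in Lemma \ref{lem:c-prty-1}, so what remains is essentially bookkeeping.

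For the first claim, I would fix experiments $\mu$ and $\nu$ inducing $\sigma$ and $\tau$ respectively (these exist precisely because $\sigma,\tau \in \mathcal{A}$) and apply Lemma \ref{lem:c-prty-1}(1), which guarantees that $\mu \otimes \nu$ induces the vector $(\sigma_0 \ast \tau_0,\ldots,\sigma_n \ast \tau_n)$ and that this vector again lies in $\mathcal{A}$. Since $c$ is well-defined by Lemma \ref{lem:cost1}, we have $c(\sigma_0 \ast \tau_0,\ldots,\sigma_n \ast \tau_n) = C(\mu \otimes \nu)$, and Axiom \ref{axm:additivity} gives $C(\mu \otimes \nu) = C(\mu) + C(\nu) = c(\sigma) + c(\tau)$. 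For the second claim I would fix an experiment $\mu$ inducing $\sigma$ and invoke Lemma \ref{lem:c-prty-1}(2), so that $\alpha \cdot \mu$ induces $\alpha\sigma + (1-\alpha)\delta_{\mathbf{0}}$; then $c(\alpha\sigma + (1-\alpha)\delta_{\mathbf{0}}) = C(\alpha \cdot \mu)$ by definition of $c$, and Axiom \ref{axm:affinity-1} yields $C(\alpha \cdot \mu) = \alpha\, C(\mu) = \alpha\, c(\sigma)$.

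There is no genuine obstacle here. The only point deserving a line of care is that the two new vectors must lie in $\mathcal{A}$ so that $c$ is defined on them. For the convolution this is asserted directly in Lemma \ref{lem:c-prty-1}(1). For the dilution, admissibility is inherited from the experiment $\alpha \cdot \mu$, and the moment conditions are preserved because the added point mass sits at the origin and hence contributes nothing to any moment of order $\alpha \in A$ (as $|\alpha| \geq 1$), while the moments of $\sigma$ are merely rescaled by $\alpha$. The conceptual content — that the product experiment convolves the conditional log-likelihood-ratio distributions and that dilution mixes them with a point mass at $\mathbf{0}$ — was already established in Lemma \ref{lem:c-prty-1}, so this lemma is purely a restatement of the axioms in the $\sigma$-notation.
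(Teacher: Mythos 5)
Your proof is correct and takes essentially the same route as the paper's: both claims are obtained by picking experiments that induce $\sigma$ and $\tau$, invoking Lemma \ref{lem:c-prty-1} to identify what $\mu \otimes \nu$ and $\alpha \cdot \mu$ induce, and then applying Axioms \ref{axm:additivity} and \ref{axm:affinity-1} through the definition of $c$. Your extra remark verifying that the diluted vector stays in $\mathcal{A}$ (the point mass at the origin contributes nothing to moments of order $\alpha \in A$) is a detail the paper leaves implicit, and it is accurate.
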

\begin{proof}
(1) Let $\mu \in \mathcal{E}$ induce $\sigma$ and let $\nu \in \mathcal{E}$ induce $\tau$. Then $C(\mu) = c(\sigma),C(\nu) = c(\tau)$ and, by Axiom \ref{axm:additivity} and Lemma \ref{lem:c-prty-1}, $c(\sigma_0\ast\tau_0, \ldots, \sigma_n\ast\tau_n) = C(\mu\otimes\nu) = c(\sigma) + c(\tau)$. Claim (2) follows directly from Axiom \ref{axm:affinity-1} and Lemma \ref{lem:c-prty-1}.
\end{proof}

\begin{lem}
\label{lem:thm1_c(mu)}If $\sigma,\tau\in\mathcal{A}$ satisfy ${m}_{\sigma}={m}_{\tau}$ then $c(\sigma)=c(\tau)$.
\end{lem}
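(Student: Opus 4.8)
The plan is to deduce the statement from the continuity axiom by comparing, for each $r\in\mathbb{N}$, the diluted repeated experiments built from $\sigma$ and $\tau$. Given $\sigma\in\mathcal{A}$, write $\sigma^{*r}=(\sigma_0^{*r},\ldots,\sigma_n^{*r})$ for the vector of $r$-fold convolutions and set $\rho^r(\sigma)=\tfrac1r\sigma^{*r}+(1-\tfrac1r)\delta_{\mathbf{0}}$. Applying part (1) of Lemma~\ref{lem:c-prty-2} repeatedly gives $c(\sigma^{*r})=r\,c(\sigma)$, and then part (2) with $\alpha=1/r$ gives $c(\rho^r(\sigma))=\tfrac1r c(\sigma^{*r})=c(\sigma)$; likewise $c(\rho^r(\tau))=c(\tau)$. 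Hence it suffices to prove that $d_N(\rho^r(\sigma),\rho^r(\tau))\to 0$ as $r\to\infty$, because then uniform continuity of $c$ (Axiom~\ref{axm:continuity}) forces $|c(\sigma)-c(\tau)|=|c(\rho^r(\sigma))-c(\rho^r(\tau))|\to 0$.

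I would control the two parts of $d_N$ separately. For the total-variation term, the atom at the origin is shared by $\rho^r(\sigma)$ and $\rho^r(\tau)$, so $d_{tv}(\rho^r(\sigma)_i,\rho^r(\tau)_i)=\tfrac1r\,d_{tv}(\sigma_i^{*r},\tau_i^{*r})\le\tfrac1r\to 0$. The moment term is the crux. For $\alpha\neq\mathbf{0}$ the atom contributes nothing, so $M_i^{\rho^r(\sigma)}(\alpha)=\tfrac1r M_i^{\sigma^{*r}}(\alpha)$, and the task reduces to $M_i^{\sigma^{*r}}(\alpha)-M_i^{\tau^{*r}}(\alpha)=o(r)$. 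Under the $r$-fold convolution each log-likelihood ratio $\ell_{ik}$ is a sum of $r$ i.i.d.\ copies whose mean is $\dkl(\mu_i\Vert\mu_k)\ge 0$ (Proposition~\ref{clm:dkl}); when this mean is strictly positive the sum concentrates on the positive half-line, so up to an event of vanishing probability the absolute moment $M_i^{\sigma^{*r}}(\alpha)$ coincides with the ordinary moment $\int\prod_{k\neq i}\ell_{ik}^{\alpha_k}\,\dd(\sigma_i^{*r})$, while if the mean is zero then $\ell_{ik}\equiv 0$ and the corresponding factor is degenerate for both experiments. The ordinary moments of the convolution are, via the moment--cumulant formulas of Theorem~\ref{thm:Shiryaev} together with the additivity of cumulants under convolution (Lemma~\ref{lem:facts-cmlts}), polynomials in the single-copy cumulants, which are in turn determined by the moments recorded in $m_\sigma$. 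Since $m_\sigma=m_\tau$, these ordinary moments coincide, and the two absolute moments therefore differ only by the contribution of the rare sign-flip events.

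I expect the sign/absolute-value bookkeeping to be the main obstacle. The individual moments $M_i^{\sigma^{*r}}(\alpha)$ themselves diverge, like $r^{|\alpha|}$, so one cannot argue that $\rho^r(\sigma)$ converges; the whole argument rests on the cancellation of the leading behaviour between $\sigma$ and $\tau$ and on showing that the residual from the events where some $\ell_{ik}$ takes the ``wrong'' sign is negligible after dividing by $r$. To make this rigorous I would bound that residual by Cauchy--Schwarz, pairing an even ordinary moment of the convolution (which grows only polynomially in $r$) against the probability of a sign flip, and then bound the latter by $O(r^{-p})$ for arbitrarily large $p$ using a high-order Markov inequality; this is precisely where finiteness of all moments --- guaranteed by the restriction to the class $\mathcal{E}$ --- enters. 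Collecting these estimates gives $d_N(\rho^r(\sigma),\rho^r(\tau))\to 0$, and hence $c(\sigma)=c(\tau)$.
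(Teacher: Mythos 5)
Your proof is correct and follows the same skeleton as the paper's: dilute the $r$-fold product so that the cost stays fixed (Lemma~\ref{lem:c-prty-2}), bound the total-variation term of $d_N$ by $1/r$, match the moments of the $r$-fold convolutions through the Leonov--Shiryaev moment--cumulant bijection and the additivity of cumulants under convolution (Theorem~\ref{thm:Shiryaev}, Lemma~\ref{lem:facts-cmlts}), and conclude by uniform continuity. The one place where you genuinely depart from the paper is the treatment of the absolute value in the definition of $M_i^{\mu}(\alpha)$. The paper's proof silently identifies $M_i^{\mu[r]}(\alpha)$ with the \emph{signed} mixed moment $m_{\sigma[r]_i}(\alpha)$ (its equation~\eqref{eq:cont} carries no absolute values), which makes the moment terms agree exactly and yields $d_N(\mu[r],\nu[r])\leq 1/r$. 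You instead take the metric as defined: since the cumulant argument only pins down signed moments, you add a concentration step, showing that the sign-flip events where some $\ell_{ik}<0$ have probability $O(r^{-p})$ for every $p$ (high-order Markov, which is where the finite-moments restriction defining $\mathcal{E}$ enters), and then pair this against an even moment of polynomial growth via Cauchy--Schwarz so that the gap between absolute and signed moments vanishes. Your dichotomy is also sound as stated: whether $\dkl(\mu_i\Vert\mu_k)=0$ is itself determined by $m_\sigma$ (it is a difference of first moments), so the degenerate coordinates match across the two experiments. What this buys you is robustness rather than brevity: you obtain $d_N\to 0$ instead of the paper's exact bound $1/r$, but your extra step is precisely what is needed for the argument to go through with $d_N$ as literally defined, so your version is, if anything, more complete than the paper's own proof.
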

\begin{proof}
Let $\mu$ be and $\nu$ be two experiments inducing $\sigma$ and $\tau$, respectively. Let $\mu^{\otimes r} = \mu \otimes \ldots \otimes \mu$ be the experiment obtained as the $r$-th fold independent product of $\mu$. Axioms \ref{axm:additivity} and \ref{axm:affinity-1} imply
\[
    C((1/r) \cdot \mu^{\otimes r}) = C(\mu) \text{~~and~~} C((1/r) \cdot \nu^{\otimes r}) = C(\nu)
\]
In order to show that $C(\mu) = C(\nu)$ we now prove that $C((1/r) \cdot \mu^{\otimes r}) - C((1/r) \cdot \nu^{\otimes r}) \to 0$ as $r \to \infty$. To simplify the notation let, for every $r\in\mathbb{N}$,
\[
\mu[r] = (1/r) \cdot \mu^{\otimes r}\text{~~and~~}\nu[r] = (1/r) \cdot \nu^{\otimes r}
\]
Let $\sigma[r] = (\sigma[r]_0,\ldots,\sigma[r]_n)$ and $\tau[r] = (\tau[r]_0,\ldots,\tau[r]_n)$ in $\mathcal{A}$ be the vectors of measures induced by $\mu[r]$ and $\nu[r]$.

We claim that $d_N(\mu[r],\nu[r]) \to 0$ as $r \to \infty$. First, notice that $\overline{\mu[r]}_i$ and $\overline{\nu[r]}_i$ assign probability $(r-1)/r$ to the zero vector $\mathbf{0} \in \RR^{(n+1)^2}$. Hence
\[
    d_{tv}(\overline{\mu[r]}_i, \overline{\nu[r]}_i) = \sup_{E}\frac{1}{r}\left\vert \overline{\mu^{\otimes r} }_i(E)-\overline{\nu^{\otimes r}}_i(E)\right\vert \leq\frac{1}{r}.
\]
For every $\alpha \in A$ we have
\begin{equation}\label{eq:cont}
    M^{\mu[r]}_i(\alpha) = \int \ell^{\alpha_1}_{10}\ldots\ell^{\alpha_n}_{n0}\,\dd \mu[r]_i = \int_{\RR^n} \xi^{\alpha_1}_1\cdots\xi^{\alpha_n}_n\,\dd\sigma[r]_i(\xi) = m_{{\sigma[r]}_i}(\alpha)
\end{equation}
We claim that  ${m}_{\sigma[r]}={m}_{\tau[r]}$. Theorem \ref{thm:Shiryaev} shows the existence of a bijection $H:\mathcal{M}\to\mathcal{K}$ such that $H({m}_{\upsilon})={\kappa}_{\upsilon}$ for every $\upsilon\in\mathcal{A}$. The experiment $\mu^{\otimes r}$ induces the vector $(\sigma_0^{*r},\ldots,\sigma_n^{*r}) \in \mathcal{A}$, where $\sigma_i^{*r}$ denotes the $r$-th fold convolution of $\sigma_i$ with itself. Denote such a vector as $\sigma^{*r}$. Let $\tau^{*r}\in\mathcal{A}$ be the corresponding vector induced by $\nu^{\otimes r}$. Thus we have ${\kappa}_{\sigma}=H({m}_{\sigma})= H({m}_{\tau}) = {\kappa}_{\tau}$, and
\[
 H({m}_{\mu^{*r}})={\kappa}_{\sigma^{*r}}=(\kappa_{\sigma_0^{*r}},\ldots,\kappa_{\sigma_n^{*r}}) = (r\kappa_{\sigma_0},\ldots,r\kappa_{\sigma_n})= r{\kappa}_{\sigma}=r{\kappa}_{\tau}={\kappa}_{\tau^{*r}}=H({m}_{\tau^{*r}})
\]
Hence ${m}_{\sigma^{*r}}={m}_{\tau^{*r}}$. It now follows from
\[
m_{\sigma[r]_{i}}(\alpha)=\frac{1}{r}m_{\sigma_{i}^{*r}}(\alpha)+\frac{r-1}{r}0
\]
that ${m}_{\sigma[r]}={m}_{\tau[r]}$, concluding the proof of the claim.

Equation \eqref{eq:cont} therefore implies that $M^{\mu[r]}_i(\alpha) = M^{\nu[r]}_i(\alpha)$. Thus
\[
    d_N(\mu[r],\nu[r]) = \max_i d_{tv}(\overline{\mu[r]}_i, \overline{\nu[r]}_i) \leq \frac{1}{r}.
\]
Hence $d_N(\mu[r],\nu[r])$ converges to $0$. Since $C$ is uniformly continuous, then $C(\mu[r]) - C(\nu[r]) = 0$ must converge to $0$ as well. This implies $C(\mu) = C(\nu)$.

\end{proof}

\begin{lem}\label{lem:thm1_add}
There exists an additive function $F\colon\mathcal{K}\to\mathbb{R}$ such that $c(\sigma)=F({\kappa}_{\sigma})$.
\end{lem}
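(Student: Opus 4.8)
The plan is to transport the cost $c$ along the moment-to-cumulant correspondence and then read off additivity of the resulting map from the two additivity properties already at our disposal. Concretely, I would define $F\colon\mathcal{K}\to\mathbb{R}$ by setting $F(\kappa_\sigma) = c(\sigma)$ for each $\sigma\in\mathcal{A}$, and the work is then split into (i) checking this is well-defined and (ii) checking it is additive.

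First I would verify that $F$ is well-defined, i.e.\ that $\kappa_\sigma = \kappa_\tau$ forces $c(\sigma) = c(\tau)$. This is exactly where the preceding lemmas do the real work. Theorem \ref{thm:Shiryaev} expresses the moments of a measure as polynomial functions of its cumulants and conversely, so for each coordinate $i$ the correspondence $m_{\sigma_i}\leftrightarrow\kappa_{\sigma_i}$ is a bijection; applying it coordinatewise shows that $\kappa_\sigma = \kappa_\tau$ implies $m_\sigma = m_\tau$. Lemma \ref{lem:thm1_c(mu)} then gives $c(\sigma) = c(\tau)$, so $F$ is unambiguously defined on all of $\mathcal{K}$.

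It remains to establish additivity. I would first note that $\mathcal{K}$ is closed under addition: given $\sigma,\tau\in\mathcal{A}$, Lemma \ref{lem:c-prty-1} shows that the convolution $\sigma\ast\tau := (\sigma_0\ast\tau_0,\ldots,\sigma_n\ast\tau_n)$ again lies in $\mathcal{A}$, while Lemma \ref{lem:facts-cmlts} gives $\kappa_{\sigma\ast\tau} = \kappa_\sigma + \kappa_\tau$ coordinatewise. Hence for any $\kappa_\sigma,\kappa_\tau\in\mathcal{K}$ the sum $\kappa_\sigma+\kappa_\tau = \kappa_{\sigma\ast\tau}$ is itself an element of $\mathcal{K}$, and then, invoking Lemma \ref{lem:c-prty-2}(1),
\[
F(\kappa_\sigma + \kappa_\tau) = F(\kappa_{\sigma\ast\tau}) = c(\sigma\ast\tau) = c(\sigma) + c(\tau) = F(\kappa_\sigma) + F(\kappa_\tau),
\]
which is precisely additivity of $F$.

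The substance of the argument has already been absorbed into Lemma \ref{lem:thm1_c(mu)}, so this lemma is essentially a bookkeeping step. The only genuinely delicate point is well-definedness, and it rests entirely on the invertibility of the moment–cumulant relation supplied by Theorem \ref{thm:Shiryaev}; once that is in hand, additivity is automatic, because convolution of experiments simultaneously adds cumulants (Lemma \ref{lem:facts-cmlts}) and adds costs (Lemma \ref{lem:c-prty-2}). I therefore expect no serious obstacle beyond recording these observations carefully.
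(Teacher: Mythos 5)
Your proposal is correct and follows essentially the same route as the paper's proof: the paper defines $F = G\circ H^{-1}$, where $G$ is the map on moments given by Lemma \ref{lem:thm1_c(mu)} and $H$ is the moment--cumulant bijection from Theorem \ref{thm:Shiryaev}, which is exactly your "define $F(\kappa_\sigma)=c(\sigma)$ and check well-definedness" phrased as a composition. The additivity step is identical in both, resting on Lemmas \ref{lem:c-prty-1} and \ref{lem:c-prty-2} together with the additivity of cumulants under convolution.
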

\begin{proof}
It follows from Lemma \ref{lem:thm1_c(mu)} that we can define a map $G\colon\mathcal{M}\to\mathbb{R}$ such that $c(\sigma)=G({m}_{\sigma})$ for every $\sigma \in \mathcal{A}$. We can use Theorem \ref{thm:Shiryaev} to define a bijection $H:\mathcal{M}\to\mathcal{K}$ such that $H({m}_{\sigma})={\kappa}_{\sigma}$. Hence $F=G\circ H^{-1}$ satisfies $c(\sigma)=F({\kappa}_{\sigma})$
for every $\sigma$. For every $\sigma,\tau\in\mathcal{A}$, Lemmas \ref{lem:c-prty-1} and \ref{lem:c-prty-2} imply
\[
F({\kappa}_{\sigma})+F({\kappa}_{\tau})=c(\sigma)+c(\tau)=c(\sigma_0\ast\tau_0,\ldots,\sigma_n\ast\tau_n)=F(\kappa_{\sigma_0\ast\tau_0},\ldots,\kappa_{\sigma_n\ast\tau_n}) = F(\kappa_\sigma + \kappa_\tau)
\]
where the last equality follows from the additivity of cumulants with respect to convolution.
\end{proof}
\begin{lem}\label{lem:thm1_lin}
There exist $\left(\lambda_{i,\alpha}\right)_{i\in\Theta,\alpha\in A}$
in $\mathbb{R}$ such that \[c(\sigma)=\sum_{i\in\Theta}\sum_{\alpha\in A}\lambda_{i,\alpha}\kappa_{\sigma_{i}}(\alpha) \text{~~for every~~} \sigma\in\mathcal{A}.\]
\end{lem}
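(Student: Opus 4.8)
The plan is to obtain the representation by feeding the additive function from Lemma~\ref{lem:thm1_add} into the abstract automatic-linearity result, Theorem~\ref{thm:cauchy}. Recall that Lemma~\ref{lem:thm1_add} produces an additive $F\colon\mathcal{K}\to\mathbb{R}$ with $c(\sigma)=F(\kappa_\sigma)$ for every $\sigma\in\mathcal{A}$. The first thing I would note is that $F$ is in fact nonnegative: for any $\sigma\in\mathcal{A}$ we have $c(\sigma)=C(\mu)\geq 0$, where $\mu$ is an experiment inducing $\sigma$, and since every element of $\mathcal{K}$ is of the form $\kappa_\sigma$ this shows $F(\mathcal{K})\subseteq\mathbb{R}_+$. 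Thus $F\colon\mathcal{K}\to\mathbb{R}_+$ is additive, exactly the kind of map Theorem~\ref{thm:cauchy} handles.

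Next I would verify the two hypotheses of Theorem~\ref{thm:cauchy} for the set $\mathcal{S}=\mathcal{K}$. That $\mathcal{K}$ is a subsemigroup of $\mathbb{R}^d$ follows by combining two earlier facts: given $\sigma,\tau\in\mathcal{A}$, Lemma~\ref{lem:c-prty-1} shows that the componentwise convolution $(\sigma_0\ast\tau_0,\ldots,\sigma_n\ast\tau_n)$ again lies in $\mathcal{A}$, and by the additivity of cumulants under convolution (Lemma~\ref{lem:facts-cmlts}) its cumulant vector is precisely $\kappa_\sigma+\kappa_\tau$; hence $\kappa_\sigma+\kappa_\tau\in\mathcal{K}$, so $\mathcal{K}$ is closed under addition. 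That $\mathcal{K}$ has nonempty interior is exactly the content of Theorem~\ref{thm:K_int}. With both hypotheses established, Theorem~\ref{thm:cauchy} gives that $F$ is linear, so there is a collection of real coefficients $(\lambda_{i,\alpha})_{i\in\Theta,\alpha\in A}$ with $F(x)=\sum_{i\in\Theta}\sum_{\alpha\in A}\lambda_{i,\alpha}x_{i,\alpha}$ for all $x\in\mathcal{K}$. Specializing to $x=\kappa_\sigma$ yields $c(\sigma)=\sum_{i\in\Theta}\sum_{\alpha\in A}\lambda_{i,\alpha}\kappa_{\sigma_i}(\alpha)$, which is the desired identity.

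I do not expect a serious obstacle at this stage, since the genuinely hard work has already been isolated into the two invoked results: the nonemptiness of the interior of $\mathcal{K}$ (Theorem~\ref{thm:K_int}), whose proof rests on Brouwer's invariance of domain, and the automatic-linearity statement on subsemigroups (Theorem~\ref{thm:cauchy}). The only points requiring care are the verification that $F$ lands in $\mathbb{R}_+$ (so that Theorem~\ref{thm:cauchy} applies verbatim) and the transport of the linear functional on $\mathbb{R}^d$ back to the cumulant coordinates. It is worth flagging that the linear form produced here carries a coefficient $\lambda_{i,\alpha}$ for every coordinate of $\kappa_\sigma$, including those indexed by $i=0$; the further reduction that removes the higher-order cumulants (and ultimately leaves only the expectations of the log-likelihood ratios) is deferred to the subsequent step, where the affinity of $c$ with respect to mixtures with $\delta_{\mathbf{0}}$ recorded in Lemma~\ref{lem:c-prty-2} is exploited.
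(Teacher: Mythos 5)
Your proof is correct and takes essentially the same route as the paper's: the paper likewise notes that $\mathcal{K}$ is a subsemigroup of $\mathbb{R}^{d}$ with nonempty interior (Theorem~\ref{thm:K_int}) and applies Theorem~\ref{thm:cauchy} to the additive map $F$ from Lemma~\ref{lem:thm1_add}. You merely spell out two points the paper leaves implicit---the nonnegativity of $F$ and the verification, via Lemmas~\ref{lem:c-prty-1} and~\ref{lem:facts-cmlts}, that $\mathcal{K}$ is closed under addition---which is fine.
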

\begin{proof}
As implied by Theorem \ref{thm:K_int}, the set $\mathcal{K}\subseteq\mathbb{R}^{d}$ has nonempty interior. It is closed under addition, i.e.\ a subsemigroup. We can therefore apply Theorem \ref{thm:cauchy} and conclude that the function $F$ in Lemma \ref{lem:thm1_add} is linear.
\end{proof}

\begin{lem}\label{lem:thm1_lin_m}
Let $\left(\lambda_{i,\alpha}\right)_{i\in\Theta,\alpha\in A}$ be as in Lemma \ref{lem:thm1_lin}. Then
\[
c(\sigma)=\sum_{i\in\Theta}\sum_{\alpha\in A}\lambda_{i,\alpha}m_{\sigma_{i}}\left(\alpha\right) \text{~~for every~~} \sigma\in\mathcal{A}
\]
\end{lem}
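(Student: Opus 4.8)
The plan is to transfer the linear-in-cumulants representation of Lemma~\ref{lem:thm1_lin} into a linear-in-moments representation by probing the cost $c$ along the dilutions of $\sigma$, exploiting the fact that each moment is \emph{linear} in the measure whereas each cumulant is only \emph{polynomial} in the moments. Fix $\sigma\in\mathcal{A}$ and, for $t\in[0,1]$, consider the diluted vector $\sigma^{t}=t\sigma+(1-t)\delta_{\mathbf{0}}$, which by Lemma~\ref{lem:c-prty-1} is induced by the experiment $t\cdot\mu$ and hence again lies in $\mathcal{A}$. I would apply Lemma~\ref{lem:thm1_lin} to $\sigma^{t}$ and compare the resulting polynomial in $t$ with the value forced by the dilution property.

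The key computation is the behaviour of the cumulants under dilution. Since every $\alpha\in A$ satisfies $|\alpha|\geq1$ and every vector $\lambda^{p}$ occurring in a collection of $\Lambda(\alpha)$ is nonzero, the atom $\delta_{\mathbf{0}}$ contributes nothing to any moment of order in $A$, so $m_{\sigma_{i}^{t}}(\lambda^{p})=t\,m_{\sigma_{i}}(\lambda^{p})$. Substituting this into the Leonov--Shiryaev expansion (Theorem~\ref{thm:Shiryaev}, part~2) makes a collection $(\lambda^{1},\ldots,\lambda^{q})$ of length $q$ scale as $t^{q}$, and since the unique length-one collection is $(\alpha)$ itself (with coefficient $1$), I obtain
\[
\kappa_{\sigma_{i}^{t}}(\alpha)=t\,m_{\sigma_{i}}(\alpha)+R_{i,\alpha}(t),
\]
where $R_{i,\alpha}$ is a polynomial in $t$ all of whose terms have degree at least $2$.

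It then remains to match coefficients. By Lemma~\ref{lem:thm1_lin}, $c(\sigma^{t})=\sum_{i,\alpha}\lambda_{i,\alpha}\kappa_{\sigma_{i}^{t}}(\alpha)$, which by the display above equals $t\sum_{i,\alpha}\lambda_{i,\alpha}m_{\sigma_{i}}(\alpha)$ plus a remainder of degree at least $2$ in $t$. On the other hand, Lemma~\ref{lem:c-prty-2}, part~2, gives $c(\sigma^{t})=t\,c(\sigma)$, which is linear in $t$. Two polynomials in $t$ that agree on $[0,1]$ are identical, so comparing the coefficients of $t^{1}$ yields $c(\sigma)=\sum_{i,\alpha}\lambda_{i,\alpha}m_{\sigma_{i}}(\alpha)$, as desired. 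The only delicate point is the degree bookkeeping in the previous paragraph: one must verify that the atom at $\mathbf{0}$ annihilates every moment of order in $A$ and that only the trivial length-one partition produces a term linear in $t$; granting this, the rest is routine.
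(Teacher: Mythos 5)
Your proposal is correct and follows essentially the same route as the paper's proof: dilute $\sigma$ to $t\sigma+(1-t)\delta_{\mathbf{0}}$, expand the cumulants of the diluted vector via the Leonov--Shiryaev identity using $m_{t\sigma_i+(1-t)\delta_{\mathbf{0}}}(\lambda^p)=t\,m_{\sigma_i}(\lambda^p)$, and combine with the homogeneity $c(t\sigma+(1-t)\delta_{\mathbf{0}})=t\,c(\sigma)$ to isolate the degree-one term. The only cosmetic difference is that the paper divides by $t$ and lets $t\downarrow 0$, whereas you match coefficients of two polynomials agreeing on $[0,1]$ --- these are interchangeable ways of extracting the same linear coefficient.
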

\begin{proof}
Fix $\sigma \in \mathcal{A}$. Given $t\in\left(0,1\right)$, Lemma \ref{lem:thm1_lin} and Theorem \ref{thm:Shiryaev} imply
\begin{eqnarray*}
c\left(t\sigma+(1-t)\delta_{\mathbf{0}}\right) & = & \sum_{i\in\Theta}\sum_{\alpha\in A}\lambda_{i,\alpha}\left(\sum_{\lambda = \left(\lambda^{1},\ldots,\lambda^{q}\right)\in \Lambda\left(\alpha\right)}\frac{\left(-1\right)^{q-1}}{q}\frac{\alpha!}{\lambda^{1}!\cdots\lambda^{q}!}\prod_{p=1}^{q}m_{t\sigma_{i}+\left(1-t\right)\delta_{0}}\left(\lambda^{p}\right)\right)\\
 & = & \sum_{i\in\Theta}\sum_{\alpha\in A}\lambda_{i,\alpha}\left(\sum_{\lambda =  \left(\lambda^{1},\ldots,\lambda^{q}\right)\in \Lambda\left(\alpha\right)}\frac{\left(-1\right)^{q-1}}{q}\frac{\alpha!}{\lambda^{1}!\cdots\lambda^{q}!}t^{q}\prod_{p=1}^{q}m_{\sigma_{i}}\left(\lambda^{p}\right)\right)\\
 & = & \sum_{i\in\Theta}\sum_{\alpha\in A}\lambda_{i,\alpha}\left(\sum_{\lambda=\left(\lambda^{1},\ldots,\lambda^{q}\right)\in \Lambda\left(\alpha\right)}\rho\left(\lambda\right)t^{q}\prod_{p=1}^{q}m_{\sigma_{i}}\left(\lambda^{p}\right)\right)
\end{eqnarray*}
where for every tuple $\lambda=\left(\lambda^{1},\ldots,\lambda^{q}\right)\in \Lambda(\alpha)$ we let 
\[
\rho\left(\lambda\right)=\frac{\left(-1\right)^{q-1}}{q}\frac{\alpha!}{\lambda^{1}!\cdots\lambda^{q}!}
\]
Lemma \ref{lem:c-prty-2} implies $c(\sigma)=\frac{1}{t}c(t\sigma+\left(1-t\right)\delta_{\mathbf{0}})$ for every $t$. Hence
\[
c(\sigma)=\sum_{i\in\Theta}\sum_{\alpha\in A}\lambda_{i,\alpha}\left(\sum_{\lambda=\left(\lambda^{1},\ldots,\lambda^{q}\right)\in \Lambda(\alpha)}\rho(\lambda)t^{q-1}\prod_{p=1}^{q}m_{\sigma_{i}}(\lambda^{p})\right)\text{ for all }t\in (0,1).
\]
By considering the limit $t\downarrow0$, we have $t^{q-1}\to0$ whenever $q\neq1$. Therefore
\[
c(\sigma)=\sum_{i\in\Theta}\sum_{\alpha\in A}\lambda_{i,\alpha}m_{\sigma_{i}}(\alpha)\text{~~for all~}\sigma\in\mathcal{A}.
\]
\end{proof}

\begin{lem}\label{lem:thm1_last}
Let $\left(\lambda_{i,\alpha}\right)_{i\in\Theta,\alpha\in A}$ be as in Lemmas \ref{lem:thm1_lin} and \ref{lem:thm1_lin_m}. Then, for every $i$, if $\vert\alpha\vert>1$ then $\lambda_{i,\alpha}=0$.
\end{lem}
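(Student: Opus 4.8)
The plan is to play the two representations of $c$ against each other. By Lemmas~\ref{lem:thm1_lin} and~\ref{lem:thm1_lin_m} the \emph{same} coefficients $(\lambda_{i,\alpha})$ yield
\[
\sum_{i\in\Theta}\sum_{\alpha\in A}\lambda_{i,\alpha}\,\kappa_{\sigma_i}(\alpha)=c(\sigma)=\sum_{i\in\Theta}\sum_{\alpha\in A}\lambda_{i,\alpha}\,m_{\sigma_i}(\alpha)
\]
for every $\sigma\in\mathcal{A}$. First I would use the first Leonov--Shiryaev identity in Theorem~\ref{thm:Shiryaev} to rewrite each moment $m_{\sigma_i}(\alpha)$ as a polynomial in the cumulants of $\sigma_i$. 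Both sides then become polynomial functions of the cumulant vector $\kappa_\sigma$, the left-hand side being \emph{linear} in the coordinates $\kappa_{\sigma_i}(\beta)$ and the right-hand side a genuine polynomial in them. The two polynomials agree at every point of $\mathcal{K}=\{\kappa_\sigma:\sigma\in\mathcal{A}\}$, and by Theorem~\ref{thm:K_int} this set has nonempty interior; since a polynomial vanishing on a nonempty open subset of $\mathbb{R}^d$ vanishes identically, the identity holds as an identity of polynomials in the variables $\{\kappa_{\sigma_i}(\beta)\}_{i,\beta}$.

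The heart of the argument is a comparison of top-degree homogeneous parts. In the Leonov--Shiryaev expansion of $m_{\sigma_i}(\alpha)$, a partition $(\lambda^1,\dots,\lambda^q)\in\Lambda(\alpha)$ produces a monomial of degree $q$ in the cumulants, and $q\le|\alpha|$ with equality only for the finest partition into unit vectors $e_1,\dots,e_n$; hence $m_{\sigma_i}(\alpha)$ has degree exactly $|\alpha|$, and a short count of the multinomially many finest partitions shows that its degree-$|\alpha|$ part equals $\prod_j \kappa_{\sigma_i}(e_j)^{\alpha_j}$ with coefficient one (each $e_j$ lying in $A$). Now suppose for contradiction that $D:=\max\{|\alpha|:\lambda_{i,\alpha}\neq 0\text{ for some }i\}\ge 2$. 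The degree-$D$ component of the right-hand side is then $\sum_i\sum_{|\alpha|=D}\lambda_{i,\alpha}\prod_j\kappa_{\sigma_i}(e_j)^{\alpha_j}$, while the linear left-hand side contributes nothing in degree $D\ge 2$. The monomials $\prod_j\kappa_{\sigma_i}(e_j)^{\alpha_j}$ are pairwise distinct as $(i,\alpha)$ ranges over $|\alpha|=D$ (distinct $i$ use disjoint variables, distinct $\alpha$ give distinct exponents), hence linearly independent, forcing $\lambda_{i,\alpha}=0$ for all such $(i,\alpha)$ and contradicting the choice of $D$. Therefore $D\le1$, which is exactly the claim.

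The step I expect to need the most care is the transition from equality of the two expressions \emph{as functions on} $\mathcal{A}$ to equality \emph{as polynomials} in the cumulant coordinates: this is precisely where the nonempty interior of $\mathcal{K}$ (Theorem~\ref{thm:K_int}) is indispensable, since it rules out hidden algebraic relations among the coordinates that could otherwise prevent reading off individual coefficients. Once this is in place, extracting the leading monomial from the Leonov--Shiryaev formula and invoking linear independence of distinct monomials is routine, and one does not even need to separate the identity by state since the leading monomials already live in disjoint variable blocks across states.
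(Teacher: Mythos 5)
Your proof is correct, but it isolates the top-degree coefficients by a genuinely different mechanism than the paper. The paper never equates the two representations as polynomials; instead it works only with the moment representation of Lemma~\ref{lem:thm1_lin_m}, and exploits additivity of $c$ under convolution (Lemma~\ref{lem:c-prty-2}) together with the scaling $\kappa_{\sigma_i^{*r}}=r\kappa_{\sigma_i}$: writing $c(\sigma^{*r})=rc(\sigma)$, each partition of length $q$ contributes a term of order $r^{q}$, so dividing by $r^{\gamma}$ and letting $r\to\infty$ kills every term with $q<\gamma$ and leaves exactly the identity
\[
\sum_{i\in\Theta}\ \sum_{\alpha\in A:\,|\alpha|=\gamma}\lambda_{i,\alpha}\prod_{j}\Bigl(\int_{\RR^n}\xi_j\,\dd\sigma_i(\xi)\Bigr)^{\alpha_j}=0 ,
\]
which it then shows holds on an open set of first-moment vectors (by replicating the argument of Lemma~\ref{lem:M_int}) and differentiates to kill the coefficients. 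You instead play the cumulant-linear representation of Lemma~\ref{lem:thm1_lin} against the moment-linear representation of Lemma~\ref{lem:thm1_lin_m} with the \emph{same} coefficients, promote the pointwise equality on $\mathcal{K}$ to an identity of polynomials on all of $\RR^d$ via Theorem~\ref{thm:K_int}, and then read off the degree-$D$ homogeneous part, where the paper's own computation (the multinomial count showing the leading part of $m_{\sigma_i}(\alpha)$ is $\prod_j\kappa_{\sigma_i}(e_j)^{\alpha_j}$ with coefficient one, via Theorem~\ref{thm:Shiryaev}) does the same work in both proofs. Your route is more static and arguably cleaner: it needs no limiting argument and no second open-set construction, since the full-dimensional interior of $\mathcal{K}$ already makes linear independence of distinct monomials available; the cost is that you invoke Theorem~\ref{thm:K_int} a second time and lean on the coincidence of coefficients across the two representations, whereas the paper's asymptotic argument uses the additivity of $c$ more directly and only ever needs a polynomial identity in the lower-dimensional space of first moments. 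Both proofs handle the key combinatorial facts identically (only unit-vector partitions achieve $q=|\alpha|$, and the resulting monomials are distinct across pairs $(i,\alpha)$), so your argument is a valid, self-contained substitute for the paper's.
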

\begin{proof}
Let $\gamma=\max\text{\ensuremath{\left\{  |\alpha|:\lambda_{i,\alpha}\neq0\text{ for some }i\right\} } }$.
Assume, as a way of contradiction, that $\gamma>1$. Fix $\sigma\in\mathcal{A}$. Theorem \ref{thm:Shiryaev} implies
\begin{eqnarray*}
c(\sigma) & = & \sum_{i\in\Theta}\sum_{\alpha\in A}\lambda_{i,\alpha}m_{\sigma_{i}}(\alpha)\\
 & = & \sum_{i\in\Theta}\sum_{\alpha\in A}\lambda_{i,\alpha}\left(\sum_{\left(\lambda^{1},\ldots,\lambda^{q}\right)\in \Lambda(\alpha)}\frac{1}{q!}\frac{\alpha!}{\lambda^{1}!\cdots\lambda^{q}!}\prod_{p=1}^{q}\kappa_{\sigma_{i}}(\lambda^{p})\right)
\end{eqnarray*}
For all $r\in\mathbb{N}$, let $\sigma^{*r} = (\sigma_0^{*r},\ldots,\sigma_0^{*r})$, where each $\sigma_i^{*r}$ is the $r$-th fold convolution of $\sigma_i$ with itself. Hence, using the fact that $\kappa_{\sigma_i^{*r}}=r\kappa_{\sigma_i}$, we obtain
\begin{equation}\label{eq:sigma_m}
c(\sigma^{*r})=\sum_{i\in\Theta}\sum_{\alpha\in A}\lambda_{i,\alpha}\left(\sum_{\left(\lambda^{1},\ldots,\lambda^{q}\right)\in \Lambda(\alpha)}\frac{1}{q!}\frac{\alpha!}{\lambda^{1}!\cdots\lambda^{q}!}r^{q}\prod_{p=1}^{q}\kappa_{\sigma_{i}}(\lambda^{p})\right)
\end{equation}
By the additivity of $c$, $c(\sigma^{*r})=rc(\sigma)$. Hence, because $\gamma >1$, $c(\sigma^{*r})/r^{\gamma}\to0$ as $r\to\infty$. Therefore, diving \eqref{eq:sigma_m} by $r^\gamma$ implies
\begin{equation}\label{eq:thm1}
\sum_{i\in\Theta}\sum_{\alpha\in A}\lambda_{i,\alpha}\left(\sum_{\left(\lambda^{1},\ldots,\lambda^{q}\right)\in \Lambda(\alpha)}\frac{1}{q!}\frac{\alpha!}{\lambda^{1}!\cdots\lambda^{q}!}r^{q-\gamma}\prod_{p=1}^{q}\kappa_{\sigma_i}(\lambda^{p})\right)\to0\text{ as }r \to\infty.
\end{equation}
We now show that \eqref{eq:thm1} leads to a contradiction. By construction, if $\left(\lambda^{1},\ldots,\lambda^{q}\right)\in \Lambda(\alpha)$ then $q\leq|\alpha|$. Hence $q \leq \gamma$ whenever $\lambda_{i,\alpha}\neq0$. So, in equation \eqref{eq:thm1} we have $r^{q-\gamma}\to0$ as $r \to \infty$ whenever $q<\gamma$. Hence in order for \eqref{eq:thm1} to hold it must be that
\[
 \sum_{i\in\Theta}\sum_{\alpha \in A: |\alpha|=\gamma}\lambda_{i,\alpha}\left(\sum_{\left(\lambda^{1},\ldots,\lambda^{q}\right)\in \Lambda(\alpha), q = \gamma}\frac{1}{q!}\frac{\alpha!}{\lambda^{1}!\cdots\lambda^{q}!}\prod_{p=1}^{q}\kappa_{\sigma_i}\left(\lambda^{p}\right)\right)=0.   
\]
If $q = \gamma$ and $\lambda_{i,\alpha} \neq 0$ then $\gamma = |\alpha|$. In this case, in order for $\lambda=\left(\lambda^{1},\ldots,\lambda^{q}\right)$ to satisfy $\sum_{p=1}^{q}\lambda^{p}=\alpha$, it must be that each $\lambda^{p}$ is a unit vector.
Every such $\lambda$ satisfies%
\footnote{It follows from the definition of cumulant that for every unit vector $1_j \in \RR^n$, $\kappa_{\sigma_i}(1_j) = \int_{\RR^n}\xi_j\,\dd\sigma_i(\xi)$.}
\[
    \prod_{p=1}^{q}\kappa_{\sigma_i}(\lambda^{p}) = \left(\int_{\RR^n} \xi_{1}\,\dd\sigma_{i}\left(\xi\right)\right)^{\alpha_{1}}\cdots\left(\int_{\RR^n} \xi_{n}\,\dd\sigma_{i}\left(\xi\right)\right)^{\alpha_{n}}
\]
and
\[
    \sum_{\left(\lambda ^{1},\ldots,\lambda ^{q}\right)\in \Lambda(\alpha), q=|\alpha|}\frac{1}{q!}\frac{\alpha!}{\lambda ^{1}!\cdots\lambda ^{q}!} = \sum_{\left(\lambda ^{1},\ldots,\lambda ^{q}\right)\in \Lambda(\alpha), q=|\alpha|}\frac{\alpha!}{|\alpha|! } = L(\alpha)
\]
where $L(\alpha)$ is the cardinality of the set of $\left(\lambda ^{1},\ldots,\lambda ^{q}\right)\in \Lambda(\alpha)$ such that $q=|\alpha|$. We obtain that
\begin{equation}\label{eq:notation}
\sum_{i\in\Theta}\sum_{\alpha\in A:|\alpha | = \gamma}L(\alpha)\lambda_{i,\alpha}\left(\int_{\RR^n} \xi_{1}\,\dd\sigma_{i}\left(\xi\right)\right)^{\alpha_{1}}\cdots\left(\int_{\RR^n} \xi_{n}\,\dd\sigma_{i}\left(\xi\right)\right)^{\alpha_{n}}=0.
\end{equation}
By replicating the argument in the proof of Lemma \ref{lem:M_int} we obtain that the set 
\[
\left\{ \left(\int_{\RR^n} \xi_{j}\,\dd\sigma_i(\xi)\right)_{i,j\in\Theta,j>0}:\sigma\in\mathcal{A}\right\} \subseteq\RR^{(n+1)n}
\]
contains an open set $U$. Consider now the function $f:\RR^{(n+1)n}\to\mathbb{R}$
defined as
\[
f(z)=\sum_{i\in\Theta}\sum_{\alpha\in A:|\alpha|=\gamma}L(\alpha)\lambda_{i,\alpha}z_{i,1}^{\alpha_{1}}\cdots z_{i,n}^{\alpha_{n}}, ~~ z\in\RR^{(n+1)n}
\]
Then \eqref{eq:notation} implies that $f$ equals $0$ on $U$. Hence, for every $z \in U$,$i\in\Theta$ and $\alpha\in A$ such that $|\alpha|=\gamma$, 
\[
L(\alpha)\lambda_{i,\alpha}=\frac{\partial^{\gamma}}{\partial^{\alpha_{1}}z_{i,1}\cdots\partial^{\alpha_{n}}z_{i,n}}f(z)=0
\]
hence $\lambda_{i,\alpha} = 0$. This contradicts the assumption that $\gamma>1$ and  concludes the proof.
\end{proof}

For every $j\in\{1,\ldots,n\}$ let $1_{j} \in A$ be the corresponding unit vector. We write $\lambda_{ij}$ for $\lambda_{i,j}$.
Lemma \ref{lem:thm1_last} implies that for every distribution $\sigma\in\mathcal{A}$ induced by an experiment $(S,(\mu_{i}))$, the function $c$ satisfies
\begin{eqnarray*}
c(\sigma) & = & \sum_{i\in\Theta}\sum_{j\in\{1,\ldots,n\}}\lambda_{ij}\int_{\RR^n} \xi_{j}\,\dd\sigma_i(\xi)\\
 & = & \sum_{i\in\Theta}\sum_{j\in\{1,\ldots,n\}}\lambda_{ij}\int_{S}\log\frac{\dd\mu_{j}}{\dd\mu_{0}}(s)\,\dd\mu_{i}(s)\\
 & = & \sum_{i\in\Theta}\sum_{j\in\{1,\ldots,n\}}\lambda_{ij}\int_{S}\log\frac{\dd\mu_{j}}{\dd\mu_{0}}(s)+\log\frac{\dd\mu_{0}}{\dd\mu_{i}}(s)-\log\frac{\dd\mu_{0}}{\dd\mu_{i}}(s)\,\dd\mu_{i}(s)
\end{eqnarray*}
Hence, using the fact that $\frac{\dd \mu_j}{\dd \mu_0} \frac{\dd \mu_0}{\dd \mu_i} = \frac{\dd \mu_j}{\dd \mu_i}$, we obtain
\begin{eqnarray*}
c(\sigma) & = & \sum_{i\in\Theta}\sum_{j\in\{1,\ldots,n\}}\lambda_{ij}\int_{S}\log\frac{\dd\mu_{j}}{\dd\mu_{i}}\,\dd\mu_{i}(s) + \sum_{i\in\Theta}\left(-\sum_{j\in\{1,\ldots,n\}}\lambda_{ij}\right)\int_S\log\frac{\dd\mu_{0}}{\dd\mu_{i}}(s)\,\dd\mu_{i}(s)\\
 & = & \sum_{i,j\in\Theta}\beta_{ij}\int_{S}\log\frac{\dd\mu_{i}}{\dd\mu_{j}}(s)\,\dd\mu_{i}(s)
\end{eqnarray*}
where in the last step, for every $i$, we set $\beta_{ij}=-\lambda_{ij}$
if $j\neq0$ and $\beta_{i0}=\sum_{j\neq0}\lambda_{ij}$.

It remains to show that the coefficients $(\beta_{ij})$ are positive and unique. Because $C$ takes positive values, Lemma \ref{lem:domain} immediately implies $\beta_{ij} \geq 0$ for all $i,j$. The same Lemma easily implies that the coefficients are unique given $C$.

\section{Proofs of the Results of Section~\ref{sec:stochastic}}

\begin{proof}[Proof of Proposition~\ref{prop:foc}]
Let $\mu^\star \in \P(A)^n$ be an optimal experiment. Let $A^\star=\mathrm{supp}(\mu^\star)$ be the set of actions played in $\mu^\star$. It solves
\begin{align}
    &\max_{\mu \in \RR_+^{|\Theta| \times |A^\star|} } \left[ \sum_{i \in \Theta} q_i \left( \sum_{a \in A} \mu_i (a) u(a,i) \right) - \sum_{i,j \in \Theta} \beta_{ij} \sum_{a \in A^\star} \mu_i(a) \log \frac{\mu_i(a)}{\mu_j(a)} \right] \label{eq:choice-probabilties-2}\\
    \text{subject to}&\hspace{1cm}\sum_{a\in A^\star} \mu_i(a) = 1 \text{ for all } i \in \Theta.\label{eq:sum-constraint} 
\end{align}
Reasoning as in \citet[][Theorem 2.7.2]{cover2012elements} the Log-sum inequality implies that the function $\dkl$ is convex when its domain is extended from pairs of probability distributions to pairs of vectors in $\RR_+^{|A^\star|}$. Moreover, expected utility is linear in the choice probabilities. It then follows that the objective function in \eqref{eq:choice-probabilties-2} is concave over $\RR_+^{|\Theta| \times |A^\star|}$. 


As  \eqref{eq:choice-probabilties-2} equals $-\infty$ whenever $\mu_i(a)=0$ for some $i$ and $\mu_j(a)>0$ for some $j\neq i$ we have that $\mu^\star_i(a)>0$ for all $i\in \Theta, a\in A^\star$.
For every $\lambda \in \RR^{|\Theta|}$ we define the Lagrangian $L_\lambda(\mu)$ as
\[
    L_\lambda(\mu) = \left[ \sum_{i \in \Theta} q_i \left( \sum_{a \in A} \mu_i (a) u(a,i) \right) - \sum_{i,j \in \Theta} \beta_{ij} \sum_{a \in A} \mu_i(a) \log \frac{\mu_i(a)}{\mu_j(a)} \right] - \sum_{i \in \Theta} \lambda_i \sum_{a \in A} \mu_i(a) \,.
\]
As $\mu^\star$ is an interior solution to \eqref{eq:choice-probabilties-2}, it follows from the Karush-Kuhn-Tucker theorem that there exists Lagrange multipliers $\lambda\in \RR^{|\Theta|}$ such that $\mu^\star$ maximizes $L_\lambda(\cdot)$ over $\RR_+^{|\Theta| \times |A^\star|}$. As $\mu^\star$ is interior it satisfies the first order condition
\[
\nabla L_\lambda(\mu^\star) = 0 \,.
\]
We thus have that for every state $i \in \Theta$ and every action $a \in A^\star$
\begin{equation}\label{eq:foc-lagrange}
    0 = q_i u_i(a) - \lambda_i - \sum_{j \neq i}  \left \{ \beta_{i j} \left[ \log \left( \frac{\mu^\star_i(a)}{\mu^\star_j(a)} \right) - 1 \right] - \beta_{ji} \frac{\mu^\star_j(a)}{\mu^\star_i(a)} \right\} \,.
\end{equation}

Subtracting \eqref{eq:foc-lagrange} evaluated at $a'$ from \eqref{eq:foc-lagrange} evaluated at $a$ yields the desired necessary conditions for the optimality of $\mu^\star$.
\end{proof}

\begin{proof}[Proof of Proposition~\ref{prop:choice-continuity}]
 We prove a slightly more general result. Assume the coefficients satisfy $\beta_{ij} \geq 1/f(d(i,j))^2$, where $f$ is a strictly positive and increasing function $f$.
 
  The cost of the optimal experiment $\mu^\star$ must satisfy $\Vert u \Vert \geq C(\mu^\star)$, otherwise the decision maker would be better off acquiring no information. Pinsker's inequality \citep[see][p. 13]{borwein2010convex} implies
  \[
        C(\mu^\star) \geq \min\{\beta_{ij},\beta_{ji}\} (\dkl(\mu^\star_i \Vert \mu^\star_j) + \dkl(\mu^\star_j \Vert \mu^\star_i)) \geq \min\{\beta_{ij},\beta_{ji}\} \Vert \mu^\star_i - \mu^\star_j \Vert_1^2.
  \]
  where $\Vert \mu^\star_i - \mu^\star_j \Vert_1 = \sum_{a \in A}| \mu^\star_i(a) - \mu^\star_j(a) |$ denotes the total-variation norm between the two distributions. We then obtain
  \[
    \Vert \mu^\star_i - \mu^\star_j \Vert_1 \leq \sqrt{ \Vert u \Vert \frac{1}{\min\{\beta_{ij},\beta_{ji}\}}} \leq \sqrt{ \Vert u \Vert} f(d(i,j)) \,.
  \]
  In particular, if $f$ is the identity function then $\Vert \mu^\star_i - \mu^\star_j \leq \sqrt{ \Vert u \Vert} d(i,j)$.
 \end{proof}

\begin{proof}[Proof of Proposition~\ref{prop:perception-monotonicity}]

 Given a vector $\mu \in \P(\{B,R\})^{\Theta}$, we use the shorthand $\mu_i$ to denote the probability $\mu_i(B)$ of guessing $B$ in state $i$. For every $\mu$, let
  \begin{equation}\label{eq:perception-monotonicity-1}
	U(\mu) =  \frac{1}{\vert \Theta \vert}\left(\sum_{i < n/2} (1-\mu_i) + \sum_{i > n/2} \mu_i \right) - C(\mu) \, 
 \end{equation}
 be the net expected payoff provided by $\mu$, where $C$ is an LLR cost function such that $\beta_{ij} = f(\vert i - j \vert)$ for some positive and strictly decreasing function $f$.
 
 Let $\P_+$ be the set of probabilities $\mu$ such that each $\mu_i$ has support $\{B,R\}$. Let $\mu^\star$ be a solution to the problem $\max_{\mu \in \P_+} U(\mu)$. Such a solution exists and is unique. In fact, the problem $\max_{\mu \in \P(\{B,R\})^{\Theta}} U(\mu)$ has a solution. Now, if $\mu^\star$ is optimal and $\mu^\star \notin \P_+$, then either $\mu^\star_i = 0$ for every $i$ or $\mu^\star_i = 1$ for every $i$. In either case $U(\mu^\star) = U(\mu)$, where $\mu \in \P_+$ is defined as $\mu_i = 1/2$ for every $i$. It follows that the problem $\max_{\mu \in \P_+} U(\mu)$ admits a solution $\mu^\star$. Over $\P_+$ the function $C$ is strictly convex,\footnote{See Corollary 1.55 in \cite*{liese1987convex}} and thus $U$ is strictly concave. Thus, the solution is unique.

 We claim that $\mu^\star$ satisfies $\mu^\star_{n/2 + r} = 1 - \mu^\star_{n/2 - r}$ for every $r$. To see this, define $\mu \in \P_+$ as $\mu_{n/2 + r} = 1 - \mu^\star_{n/2 - r}$ for every $r$. Because $U(\mu^\star) = U(\mu)$ and $U$ is strictly concave on $\P_+$, we conclude that $\mu = \mu^\star$.
 
 Let $I  \subseteq \P(\{B,R\})^{\Theta}$ be the set of vectors $\mu$ that are increasing, that is, satisfy $\mu_i \leq \mu_{i+1}$ for every $i < n$, and consider the optimization problem
 \[
    \max_{\mu \in I \cap \P_+} U(\mu).
 \]
 The set $I$ is closed and $U$ is upper semi-continuous. Thus, the problem $\max_{\mu \in I} U(\mu)$ has a solution. The same argument applied in the previous paragraph implies $\max_{\mu \in I \cap \P_+} U(\mu)$ admits a solution as well, and that such a solution is unique. We denote it by $\hat{\mu}$.
 
 As we show in the next paragraph, the vector $\hat{\mu}$ is strictly increasing: it satisfies $\hat{\mu}_i < \hat{\mu}_{i+1}$ for every $i$. This implies $\mu^\star = \hat{\mu}$. Indeed, we have $U(\mu^\star) \geq U(\hat{\mu})$, since $\mu^\star$ is obtained by maximizing $U$ over a larger domain. If $U(\mu^\star) > U(\hat{\mu})$ the concavity of $U$ implies $U(\alpha \mu^\star + (1-\alpha) \hat{\mu}) > U(\hat{\mu})$ for all $\alpha \in [0,1]$. Because $\hat{\mu}$ is strictly increasing, then for $\alpha$ small enough the vector $\alpha \mu^\star + (1-\alpha) \hat{\mu}$ belongs to $I$, contradicting the optimality of $\hat{\mu}$. It follows that $U(\mu^\star) = U(\hat{\mu})$, and hence $\mu^\star = \hat{\mu}$, since the problem $\max_{\mu \in \P_+} U(\mu)$ has a unique solution.
 
 We now show $\hat{\mu}$ is strictly increasing. Given $\nu,\rho \in (0,1)$ we denote by $D_1(\nu \Vert \rho)$ and $D_2(\nu \Vert \rho)$ the partial derivatives of the Kullback-Leibler divergence $\dkl$ with respect to its the first and second arguments:
 \begin{align*}
     D_1(\rho \Vert \nu) &= \log{\frac{\rho}{\nu}} - \log{\frac{1-\rho}{1-\nu}} \\
     D_2(\rho \Vert \nu) &= -{\frac{\rho}{\nu}} + {\frac{1-\rho}{1-\nu}}.
 \end{align*}
 Both derivatives are equal to zero if and only if $\nu = \rho$.
 
 As a way of contradiction, suppose $\hat{\mu}$ is not strictly increasing. Let $[i,k]$ be a maximal interval of states over which $\hat{\mu}$ is constant. Let $\mu^\varepsilon$ be the vector obtained from $\hat{\mu}$ by increasing $\hat{\mu}_k$ by $\varepsilon > 0$ and decreasing $\hat{\mu}_i$ by $\varepsilon$ (since $\hat{\mu} \in \P_+$, both operations are feasible for $\varepsilon$ small enough). The function $\varepsilon \mapsto U(\mu^\varepsilon)$ is differentiable. Its derivative at $\varepsilon = 0$ is equal to
 \begin{equation}\label{eq.derivative}
 \frac{\mathrm{sgn}(k - n/2)}{\vert\Theta\vert} - \sum_{j \neq k} \beta_{jk}(D_2(\hat{\mu}_j \Vert \hat{\mu}_k) + D_1(\hat{\mu}_k \Vert \hat{\mu}_j)) - \frac{\mathrm{sgn}(i - n/2)}{\vert\Theta\vert} + \sum_{j \neq i} \beta_{ij}(D_2(\hat{\mu}_j \Vert \hat{\mu}_i) + D_1(\hat{\mu}_k \Vert \hat{\mu}_i))\,.
 \end{equation}
 Since $\hat{\mu}$ is constant in the interval $[i,k]$, then $D_1(\hat{\mu}_j \Vert \hat{\mu}_m) = D_2(\hat{\mu}_j \Vert \hat{\mu}_m)$ whenever $i \leq j \leq m \leq k$. We can therefore rewrite \eqref{eq.derivative} as
 \begin{equation}\label{eq.align.derivative}
 \begin{split}
 &\frac{\mathrm{sgn}(k - n/2)}{\vert\Theta\vert} - \sum_{j > k} \beta_{jk}(D_2(\hat{\mu}_j \Vert \hat{\mu}_k) + D_1(\hat{\mu}_k \Vert \hat{\mu}_j)) - \sum_{j < i} \beta_{jk}(D_2(\hat{\mu}_j \Vert \hat{\mu}_k) + D_1(\hat{\mu}_k \Vert \hat{\mu}_j))  \\
         - &\frac{\mathrm{sgn}(i - n/2)}{\vert\Theta\vert} + \sum_{j > k} \beta_{ij}(D_2(\hat{\mu}_j \Vert \hat{\mu}_i) + D_1(\hat{\mu}_k \Vert \hat{\mu}_i)) + \sum_{j < i} \beta_{ij} (D_2(\hat{\mu}_j \Vert \hat{\mu}_i) + D_1(\hat{\mu}_i \Vert \hat{\mu}_j))\, .
 \end{split}
 \end{equation}
 The derivative \eqref{eq.align.derivative} is strictly positive. Indeed, because $k \geq i$ then $ \mathrm{sgn}(k - n/2) - \mathrm{sgn}(i - n/2) \geq 0$. Whenever $j > k$, since $\hat{\mu}_j > \hat{\mu}_k = \hat{\mu}_i$ and $D$ is strictly convex over $\P_+$, we have
 \[
    D_2(\hat{\mu}_j \Vert \hat{\mu}_k) = D_2(\hat{\mu}_j \Vert \hat{\mu}_i) < 0 \text{~and~} D_1(\hat{\mu}_k \Vert \hat{\mu}_j) = D_1(\hat{\mu}_i \Vert \hat{\mu}_j) < 0
 \]
 Moreover $\beta_{jk} > \beta_{ji}$ since $|j-k| < |i-k|$. It follows that
 \[
    - \sum_{j > k} \beta_{jk}(D_2(\hat{\mu}_j \Vert \hat{\mu}_k) + D_1(\hat{\mu}_k \Vert \hat{\mu}_j)) + \sum_{j > k} \beta_{ij}(D_2(\hat{\mu}_j \Vert \hat{\mu}_i) + D_1(\hat{\mu}_i \Vert \hat{\mu}_j))
 \]
 is strictly positive if $k < n$, and equal to $0$ if $k = n$. An analogous argument shows that
 \[
    - \sum_{j < i} \beta_{jk}(D_2(\hat{\mu}_j \Vert \hat{\mu}_k) + D_1(\hat{\mu}_k \Vert \hat{\mu}_j)) +  \sum_{j < i} \beta_{ij} (D_2(\hat{\mu}_j \Vert \hat{\mu}_i) + D_1(\hat{\mu}_i \Vert \hat{\mu}_j))
 \]
 is strictly positive if $i > 0$, and equal to $0$ if $i = 0$. Because $\hat{\mu} \in \P_+$, then either $k < n/2 + r$, $i > n/2 - r$, or both. This implies that \eqref{eq.align.derivative} is strictly positive. Hence, for small enough $\varepsilon$, the vector $\mu^\varepsilon$ satisfies $U(\mu^\varepsilon) > U(\hat{\mu})$, contradicting the hypothesis that $\hat{\mu}$ is optimal. We therefore conclude that $\hat{\mu}$ is strictly increasing, and thus $\mu^\star$ is strictly increasing as well.
 
 Because $\mu^\star$ satisfies $\mu^\star_{n/2 + r} = \mu^\star_{n/2 - r}$ for every $r$, and $\mu^\star$ is strictly increasing, it follows that $m_i > m_j$ for every pair of states such that $\vert i - n/2 \vert > \vert j - n/2 \vert$.
\end{proof}

\begin{proof}[Proof of Proposition \ref{prop:binary-choice}]
 Denote by $\P_+$ be the set of probabilities $\mu \in \P(\{a_1,a_2\})^2$ such that $\mathrm{supp}(\mu) = \{a_1,a_2\}$. Let $\mu \in \P_+$ be an optimal experiment. We first show that $\mu$ satisfies $\mu_1(a_1) = \mu_2(a_2)$. To see this, define $\mu'$ as $\mu'_1(a_1) = \mu_2(a_2)$ and $\mu'_2(a_2) = \mu_1(a_1)$. Let $\mu'' = \frac{1}{2} \mu + \frac{1}{2}\mu'$. By the symmetry of the payoffs functions and of the prior, we have
 \[
  \sum_{i \in \Theta} q_i \left( \sum_{a \in A} \mu_i (a) u(a,i) \right) = \sum_{i \in \Theta} q_i \left( \sum_{a \in A} \mu'_i (a) u(a,i) \right) = \sum_{i \in \Theta} q_i \left( \sum_{a \in A} \mu''_i (a) u(a,i) \right) \,.
 \]
 Moreover, $C(\mu'') \leq \frac{1}{2}C(\mu) + \frac{1}{2}C(\mu')$ if $\mu \neq \mu'$, as $C$ is strictly convex on $\P_+$. Since $\mu$ is optimal, it must be that $\mu = \mu'$.
 
 The optimality equation $\mathrm{MB}_1(a_1,a_2) = \mathrm{MC}_1(a_1,a_2)$ can now be rewritten as
 \[
    \frac{1}{2}v = \beta\left[\xi\left(\log\left(\frac{\mu_1(a_1)}{\mu_2(a_1)}\right)\right)- \xi\left(\log\left(\frac{\mu_1(a_2)}{\mu_2(a_2)}\right)\right)\right].
 \]
 with $\xi(x) = x + \ee^x$. Simple calculations show the expression is in turn equal to
 \[
    \frac{v}{2\beta} = \xi\left(\log\left(\frac{\mu[v]}{1-\mu[v]}\right)\right) - \xi\left(\log\left(- \frac{\mu[v]}{1-\mu[v]} \right)\right) = \zeta\left(\log\left(\frac{\mu[v]}{1-\mu[v]}\right)\right)
 \]
 where $\zeta(x) = 2 x + e^x - e^{-x}$. The result now follows by defining $\eta = \zeta^{-1}$.
\end{proof}

 \begin{proof}[Proof of Proposition~\ref{prop:comparative-statics-in-beta}]
 Consider a decision problem described by a payoff function $u$ and a prior $q$. let $\mu$ and $\mu'$ be the optimal choice probabilities obtained under the coefficients $(\beta_{ij})$ and $(\beta'_{ij})$. The optimality of $\mu$ and $\mu'$ implies
 \begin{align*}
      \sum_{i,a} q_i u(i,a)\mu_i(a) - \sum_{i,j} \beta_{ij} D(\mu_i \Vert \mu_j) &\geq \sum_{i,a} q_i u(i,a)\mu'_i(a) - \sum_{i,j} \beta_{ij} D(\mu'_i \Vert \mu'_j) \\
      \sum_{i,a} q_i u(i,a)\mu'_i(a) - \sum_{i,j} \beta'_{ij} D(\mu'_i \Vert \mu'_j) &\geq \sum_{i,a} q_i u(i,a)\mu_i(a) - \sum_{i,j} \beta'_{ij} D(\mu_i \Vert \mu_j) \,
 \end{align*}
 Rearranging the two inequalities leads to
 \[
    \sum_{i,j }\beta_{ij} (D(\mu'_i \Vert \mu'_j) - D(\mu_i \Vert \mu_j)) \geq \sum_{i,a} q_i u(i,a)(\mu'_i(a) - \mu_i(a)) \geq  \sum_{i,j }\beta'_{ij} (D(\mu'_i \Vert \mu'_j) - D(\mu_i \Vert \mu_j)).
 \]
 The result now follows.
 \end{proof}

\section{Proof of Proposition \ref{prop:onedimens} and Extensions}

\begin{proof}[Proof of Proposition~\ref{prop:onedimens}]
Denote by $w >0$ the length of $W$. Let $|\Theta| = n$. By Axiom~\ref{axb1} there exists a function $f \colon (0,w) \to \RR_+$ such that $\beta^\Theta_{ij}=f(|i-j|)$ for $i \neq j$. Hence, if we translate $W$ then $\beta^\Theta_{ij}$ remains unchanged. We can therefore assume without loss of generality that $W = (-\delta,w-\delta)$, for any $\delta \in (0,w)$.

Let $g \colon (0,w) \to \RR_{+}$ be given by $g(t) = \frac{1}{2}f(t)t^2$. The Kullback-Leibler divergence between two normal distributions with unit variance and expectations $i$ and $j$ is $(i-j)^2/2$. Hence, by Axiom~\ref{axb2} there exists a constant $\kappa \geq 0$, independent of $n$, so that 
\begin{align}
  \label{eq:kappa}
  \frac{1}{2}\kappa = C^\Theta(\zeta^\Theta) = \sum_{i \neq j\in \Theta} \beta^\Theta_{ij}\frac{(i-j)^2}{2} = \sum_{i \neq j\in \Theta} g(|i-j|)\quad\quad\text{for any } \Theta \in \mathcal{T}
\end{align}
We show that \eqref{eq:kappa} implies that
\begin{align*}
  g(t) = \frac{\kappa}{2n(n-1)},
\end{align*}
so that
\begin{align*}
  \beta^\Theta_{ij} = 2g(|i-j|)\frac{1}{(i-j)^2} = \frac{\kappa}{n(n-1)}\frac{1}{(i-j)^2},
\end{align*}
which will complete the proof. The case $n=2$ is immediate, since then $\Theta = \{i,j\}$ and so \eqref{eq:kappa} reduces to
\[
    \frac{1}{2}\kappa = 2g(|i-j|).
\]

We now consider the case $n  > 2$. Let $\Theta = \{i_1,i_2,\ldots, i_{n-1}, x\}$ with $i_1 < i_2 < \cdots < i_{n-1} < x$ and $x \in (0,w-\delta)$.  Then \eqref{eq:kappa} implies
\begin{align*}
  \kappa &= 2\sum_{\ell=1}^{n-1}g(x-i_\ell) + 2\sum_{k=1}^{n-1}\sum_{\ell=1}^{k-1} g(i_k-i_\ell).
\end{align*}
Taking the difference between this equation and the analogous one corresponding to $\Theta' = \{i_1,i_2,\ldots,i_{n-1},y\}$ with $y \in (x,w-\delta)$ yields
\begin{align*}
  0 = \sum_{\ell=1}^{n-1}g(x-i_\ell)-g(y-i_\ell).
\end{align*}
Denoting $i_1=-\varepsilon$, for some $\varepsilon \in (\delta,0)$, we can write this as
\begin{align*}
  0 = g(x+\varepsilon) -g(y+\varepsilon)+ \sum_{\ell=2}^{n-1}g(x-i_\ell)-g(y-i_\ell).
\end{align*}
Again taking a difference, this time of this equation with the analogous one obtained by setting $i_1 = 0$, we get
\begin{align*}
  g(x) -g(y) = g(x+\varepsilon) -g(y+\varepsilon).
\end{align*}
Rearranging yields
\begin{align}
  \label{eq:double-diff}
     g(y+\varepsilon)-g(y) = g(x+\varepsilon) -g(x) \quad\quad\text{for all }x,y \in (0,w-\delta) \text{ and } \varepsilon \in (0,\delta). 
\end{align}
Accordingly, for $\varepsilon \in (0,\delta)$ denote
\begin{align}
  \label{eq:h}
  h(\varepsilon) = g(x+\varepsilon)-g(x),
\end{align}
where by \eqref{eq:double-diff} the right hand side does not depend on the choice of $x \in (0,w-\delta-\varepsilon)$. It follows that
\begin{align}
  \label{eq:h-cauchy}
  h(\varepsilon_1 + \varepsilon_2) = [g(x+\varepsilon_1+\varepsilon_2) - g(x+\varepsilon_1)]+[g(x+\varepsilon_1)-g(x)] = h(\varepsilon_1) + h(\varepsilon_2)
\end{align}
for all $\varepsilon_1,\varepsilon_2 \in (0,\delta/2)$. That is, $h$ satisfies the Cauchy functional equation on $(0,\delta/2)$.

Since $g$ is non-negative, it follows from \eqref{eq:kappa} that $g$ is bounded by $\kappa$. Hence the absolute value of $h$ is bounded by $\kappa$, by \eqref{eq:h}. It follows that $\lim_{\varepsilon \to 0}h(\varepsilon)=0$. Otherwise, there is some $n$ such that $|h(\varepsilon)|>\kappa/n$ for arbitrarily small $\varepsilon$, and then, by repeated application of \eqref{eq:h-cauchy},
\begin{align*}
  h(n\varepsilon) = n h(\varepsilon) > \kappa,
\end{align*}
where we choose $\varepsilon$ small enough so that $n\varepsilon < \delta/2$.

From $\lim_{\varepsilon \to 0}h(\varepsilon)=0$ and \eqref{eq:h-cauchy} it follows that $h$ is continuous on $(0,\delta/2)$. As the Cauchy equation easily implies that $h$ is linear when restricted to the rationals, continuity implies that $h$ is linear on $(0,\delta/2)$. Thus, by \eqref{eq:double-diff} $g$ is affine on $(0,w-\delta)$, and of the form $g(t) = at+b$ for some $a,b \in \RR$. We claim that it must be that $a=0$. Otherwise, for a given $\Theta = \{i_1,\ldots,i_{n-1},x\}$, $\sum_{i\neq j \in \Theta}g(|i-j|)$ changes with $x$, in violation of \eqref{eq:kappa}. It follows that $g$ is constant on $(0,w-\delta)$. And since we can take $\delta$ arbitrarily small, $g$ is constant on its domain $(0,w)$. Finally, for $\eqref{eq:kappa}$ to be satisfied, this constant must be  $\frac{\kappa}{2n(n-1)}$.
\end{proof}

 Axiom b calibrates the parameters $(\beta^\Theta_{ij})$ using an experiment consisting of a measurement with Normally distributed noise. Different distributions for the noise would lead to different representations for the coefficients. For example, a natural alternative would be an experiment $(\RR, (\xi_i)_{i\in\Theta})$ where each $\xi_i$ is Laplace distribution with variance 1 and mean equal to the state $i$ (the corresponding probability density function is $f(x) = \frac{1}{2}\ee^{\vert x - i \vert}$). The divergence $D(\xi_i\Vert\xi_j)$ between any two such distribution is
 \[
    \ee^{-\vert i - j \vert} + \vert i - j \vert - 1.
 \]
 As in the Normal case, this is a decreasing function of the distance between states. Even if the distribution used in axiom b is different, the proof of Proposition~\ref{prop:onedimens} can be applied with almost no modifications, and leads to a representation with parameters
 \[
    \beta^\Theta_{ij} = \frac{\kappa}{n(n-1)}\,\,\frac{1}{\ee^{-\vert i - j \vert} + \vert i - j \vert - 1}.
 \]

\section{Identification}\label{sec:identification}

 Consider the setup of \S\ref{sec:identifying-the-cost}, and given a pair of choice probabilities $(\mu_1,\mu_2)$ define the quantities
 \[
    \hat{\beta}_{12} = \frac{l_2-l_1+\log \frac{l_1}{l_2}}{\frac{(l_1-l_2)^2}{l_1 l_2}- (\log \frac{l_1}{l_2})^2} \hspace{1cm}\text{and}\hspace{1cm}
    \hat{\beta}_{21} = \frac{v}{2} \frac{ \frac{l_2-l_1}{l_1 l_2} +  \log \frac{l_1}{l_2}}{ \frac{(l_1-l_2)^2}{l_1 l_2}-(\log \frac{l_1}{l_2})^2}
 \]

 \begin{prop}
  The choice probabilities $(\mu_1,\mu_2)$ are the optimal solution with respect to an  LLR cost function if and only if $\hat{\beta}_{12}$ and $\hat{\beta}_{21}$ are non-negative and at least one is positive.
 \end{prop}
 
 \begin{proof}
 As shown by Proposition~\ref{prop:llr-convex}, $C$ is a convex function. 
 We note that the condition \eqref{eq:identified-beta-example} is equivalent to \eqref{eq:foc-example} which equals the first order condition for the optimization problem, which is sufficient because of the concavity of the optimization problem. If at least one of $\hat\beta_{1,2},\hat\beta_{2,1}$ is positive, then the solution of the optimization problem is internal and the first order condition applies. Conversely, if both are zero then the optimization problem has no solution within its domain.
 \end{proof}

\section{The cost of bounded experiments with binary state}

In this section we restrict ourselves to the case of a binary state space $\Theta = \{0,1\}$, and the class of {\em bounded} experiments $\mathcal{B}$: an experiment is said to be bounded if the beliefs that it induces are bounded away from $0$ and $1$. In terms of log-likelihood ratios, it is bounded if there is some $M$ such that $\ell_{01}(s)$ is $\mu_0$- and $\mu_1$-almost surely in $[-M,M]$. The class of bounded experiments is contained in the class $\mathcal{E}$ of experiments considered in the rest of the paper. The bounded experiments contain all the experiments that have a finite set of possible realizations, and in which not state is ever conclusively excluded. 

As we discuss above, a strengthening of Axiom~\ref{axm:info-content} is Blackwell monotonicity: $C$ is said to be Blackwell monotone if $C(\mu) \geq C(\nu)$ whenever
If $\mu$ Blackwell dominates $\nu$.

For the class of bounded experiments, we show that~\ref{axm:additivity} and~\ref{axm:affinity-1} are sufficient for proving that a Blackwell monotone cost is an LLR cost: the continuity axiom~\ref{axm:continuity} is not needed. This proof heavily relies on a recent result of \cite*{mu2020blackwell}, which characterizes the monotone and additive functions on the class of bounded Blackwell experiments with binary state. An extension of this result to large state spaces is currently out of reach, and so we do not have a more general proof. Nevertheless, we conjecture that the continuity axiom is generally redundant.

\begin{thm}\label{thm:repr-bayes}
Let $\Theta=\{0,1\}$. A Blackwell monotone information cost function $C \colon \mathcal{B} \to \mathbb{R}_+$ satisfies Axioms~\ref{axm:additivity} and \ref{axm:affinity-1} if and only if there exist $\beta_{01},\beta_{10} \geq 0$ such that for every experiment $\mu \in \mathcal{B}$,
\begin{equation*}
C(\mu)=\beta_{01}\dkl(\mu_0 \Vert \mu_1) + \beta_{10}\dkl(\mu_1\Vert\mu_0).
\end{equation*}
\end{thm}

Before proving Theorem~\ref{thm:repr-bayes}, we will introduce some definitions and results from \cite{mu2020blackwell}.

For $t \in (0,\infty]$, we denote by $R_t(\mu_0 \Vert \mu_1)$ the R\'enyi $t$-divergence between two probability $\mu_0,\mu_1$ defined on the same measurable space $S$. For $t \neq 1$, $t \neq \infty$,
\begin{equation*}
    R_t(\mu_0 \Vert \nu_1) = \frac{1}{t-1}\log\int_S \left(\frac{\dd \mu_0}{\dd \mu_1}(s)\right)^{t-1}\,\dd \mu_0(s).
\end{equation*}
For $t=1$
\begin{equation*}
    R_1(\mu_0 \Vert \mu_1) = \int_S \log\frac{\dd \mu_0 \hfill}{\dd \mu_1}(s) \,\dd \mu_0 (s) = \dkl(\mu_0 \Vert \mu_1).
\end{equation*}
For $t=\infty$, $R_\infty(\mu_0 \Vert \mu_1)$ is the essential maximum of the log-likelihood ratio $\log\frac{\dd \mu_0 \hfill}{\dd \mu_1}$. Note that $R_t(\mu_0\Vert\mu_1)$ is always non-negative, and positive whenever $\mu_0 \neq \mu_1$. Note also that if $\log\frac{\dd \mu_0 \hfill}{\dd \mu_1}$ is almost surely in $[-M,M]$ (as is always the case for bounded experiments, for some $M$) then $R_t \leq M$. 

The following result is a reformulation of Theorem 2 in \cite{mu2020blackwell} (see also Lemmas 5 and 6).\footnote{The {\em data processing inequality} in that paper is monotonicity with respect to deterministic garblings, which is implied by Blackwell monotonicity. The additivity there translates immediately to additivity in the sense of Axiom~\ref{axm:additivity}.}

\begin{thm}[\citealt{mu2020blackwell}]
\label{thm:mu}
An information cost function $C \colon \mathcal{B} \to \mathbb{R}_+$ satisfies Axioms~\ref{axm:info-content} and \ref{axm:additivity} if and only if there exist two finite Borel measures $m_0,m_1$ on $[1/2,\infty]$ such that for every bounded experiment $\mu = (S,\mu_0,\mu_1)$ it holds that
\begin{align*}
   C(\mu) = \int_{[1/2,\infty]}R_t(\mu_0 \Vert \mu_1)\,\dd m_0(t) + \int_{[1/2,\infty]}R_t(\mu_1 \Vert \mu_0)\,\dd m_1(t).
\end{align*}
\end{thm}

Using this result, we can now prove Theorem~\ref{thm:repr-bayes}.

\begin{proof}[Proof of Theorem~\ref{thm:repr-bayes}]
The argument that this representation satisfies the axioms is identical to the same argument in the proof of Theorem~\ref{thm:repr-1}. It thus remains to be shown that the representation is implied by the axioms.

By Theorem~\ref{thm:mu}, 
\begin{align}
   C(\mu) &= \beta_{01}\dkl(\mu_0\Vert\mu_1)+\beta_{10}\dkl(\mu_1\Vert\mu_0)\nonumber\\
   &\quad+\int_{[1/2,1)}R_t(\mu_0 \Vert \mu_1)\,\dd m_0(t) + \int_{[1/2,1)}R_t(\mu_1 \Vert \mu_0)\,\dd m_1(t) \nonumber\\
   &\quad+\int_{(1,\infty]}R_t(\mu_0 \Vert \mu_1)\,\dd m_0(t) + \int_{(1,\infty]}R_t(\mu_1 \Vert \mu_0)\,\dd m_1(t).\label{eq:mu1infty}
\end{align}
for some $\beta_{01},\beta_{10} \geq 0$ and  $m_0$, $m_1$ finite Borel measures on $[1/2,\infty]$ that assign measure $0$ to the singleton $\{1\}$. To prove the claim, we show that $m_0$ and $m_1$ are the zero measures.

Let $\mu=(S,\mu_0,\mu_1)$ be a non-trivial bounded experiment, and let $\nu = (1/r) \cdot \mu^{\otimes r}$ for some $r$. It follows from the definition of R\'enyi $t$-divergences that for $t \neq 1$, $t \neq \infty$
\begin{align*}
    R_t(\nu_0 \Vert \nu_1)
    &=\frac{1}{t-1}\log\left(\frac{r-1}{r}+\frac{1}{r}\left(\int_S \left(\frac{\dd \mu_0}{\dd \mu_1}(s)\right)^{t-1}\,\dd \mu_0(s)\right)^r\right).
\end{align*}
Now, for $x > 1$,
\begin{align*}
    \lim_{r \to \infty}\log \left(\frac{r-1}{r}+\frac{1}{r}x^r\right) = \infty,
\end{align*}
and for $x < 1$ this same limit is $0$. It thus follows that for $t > 1$ (including, trivially, $t=\infty$)
\begin{align}
\label{eq:R_t_infty}
    \lim_{r \to \infty}R_t(\nu_0 \Vert \nu_1) = \infty,
\end{align}
since $R_t$ is positive for non-trivial experiments, and so the integral in the expression for $R_t$ is strictly greater than 1. For $t < 0$
\begin{align}
\label{eq:R_t_0}
    \lim_{r \to \infty}R_t(\nu_0 \Vert \nu_1) = 0,
\end{align}
since, again by the positivity of $R_t$, the integral in the expression for $R_t$ is strictly less than 1.

It follows from~\eqref{eq:R_t_infty} that both $m_0$ and $m_1$ must assign no mass to $(1,\infty]$, i.e.\  $m_0((1,\infty])=m_1((1,\infty])=0$, since otherwise the integral $\int_{(1,\infty]}R_t(\mu_0 \Vert \mu_1)\,\dd m_0(t)$ or $\int_{(1,\infty]}R_t(\mu_0 \Vert \mu_1)\,\dd m_1(t)$ would diverge and by \eqref{eq:mu1infty} the cost of the experiment $(1/r)\cdot\mu^{\otimes r}$ would diverge 
\begin{align*}
    \lim_{r \to \infty} C((1/r)\cdot\mu^{\otimes r}) = \infty\,.
\end{align*}
This would contradict the axioms which imply that $C((1/r)\cdot\mu^{\otimes r}) = C(\mu)$. It then follows from \eqref{eq:R_t_0} that $m_0((1/2,1))=m_1((1/2,1)) =0$, since otherwise
\begin{align*}
    \lim_{r \to \infty} C((1/r)\cdot\mu^{\otimes r}) < C(\mu). ~~~\qedhere
\end{align*}
\end{proof}

\section{Uniform Separable Bayesian LLR Cost}
\label{app:uniform}


\begin{proof}[Proof of Proposition~\ref{prop:uniform-posterior}]
 It is straightforward to verify that if the parameters satisfy $\beta_{ij}(q) = \gamma_{ij}q_i$, then $C$ is uniformly posterior separable. We now prove the opposite implication.
 
 Fix a prior $q$ with full support, and consider an experiment $\mu$ where the set of signal realizations is a product $S_1 \times S_2$, with $S_1$ a finite set, and each $\mu_i$ satisfies $\mu_i(\{s\} \times S_2) >0$ for every $s \in S_1$. We denote by $\mu_i^1$ the marginal of $\mu_i$ on $S_1$, and by $\mu_i(\cdot \vert s)$ the measure on $S_2$ obtained by conditioning $\mu_i$ on $s \in S_1$.
 
 The chain rule for the KL-divergence implies that the cost of such an experiment can be written as
 \begin{equation}\label{eq:ups1}
    C(\mu,q) = \sum_{ij} \beta_{ij}(q) \left[ \dkl(\mu^1_i\Vert\mu^1_j) + \sum_{s_1 \in S_1} \mu^1_i(s_1) \dkl(\mu_i(\cdot\vert s_1) \Vert \mu_j(\cdot \vert s_1) ) \right].
 \end{equation}
 
 Now assume $C$ is uniformly posterior separable with respect to a function $G$. The cost of the experiment $\mu$ can then be written as follows. It will be convenient to denote posterior beliefs as random variables defined over the probability space $(\Theta \times S_1 \times S_2, \PP)$ where $\PP$ is obtained from $q$ and $\mu$ in the obvious way. Let $p^2$ be the posterior belief over $\Theta$ obtained by conditioning $q$ on a realization $(s_1,s_2)$, and let $p^1$ be the posterior belief obtained by conditioning $q$ on a realization $s_1$. Then
 \begin{align*}
     C(\mu,q) &= \mathbb{E}\left[ G(p^2) - G(p^1) + G(p^1) - G(q)\right] \\
              &= \mathbb{E}\left[G(p^1) - G(q)\right] + \sum_{s_1 \in S_1} \PP(s_1) \mathbb{E}\left[G(p^2) - G(p^1) \vert p^1 = q(\cdot\vert s_1)\right].
 \end{align*}
 Now consider the experiment $((\mu^1_i),S)$ which consists of observing the first realization $s_1$ but not the second. By uniform posterior separability, its cost, at the prior $q$, is given by
 \[
    \mathbb{E}\left[G(p^1) - G(q)\right] = \sum_{ij} \beta_{ij}(q) \dkl(\mu^1_i\Vert\mu^1_j).
 \]
 Given a realization $s_1 \in S$, consider the experiment $((\mu_i(\cdot\vert s_1)),S_2)$. By considering now $p^1 = q(\cdot \vert s_1)$ as a prior, uniform separability implies that the cost of the experiment $((\mu_i(\cdot\vert s_1)),S_2)$ is equal to
 \[
    \mathbb{E}\left[G(p^2) - G(p^1) \vert p^1 = q(\cdot\vert s^1)\right] = \sum_{ij} \beta_{ij}(q(\cdot \vert s_1))\dkl(\mu_i(\cdot \vert s_1) \Vert \mu_j(\cdot \vert s_1)).
 \]
 The last two equations imply that the cost $C(\mu,q)$ can be rewritten as
 \begin{equation}\label{eq:ups2}
    \sum_{ij} \beta_{ij}(q) \dkl(\mu^1_i\Vert\mu^1_j) + \sum_{s_1 \in S_1}\PP(s_1) \left(\sum_{ij} \beta_{ij}(q(\cdot \vert s_1))\dkl(\mu_i(\cdot \vert s_1) \Vert \mu_j(\cdot \vert s_1))\right).
 \end{equation}
 This equation can be interpreted as saying that the cost of running the experiment $\mu$ is equal to the cost of running the first experiment $((\mu^1_i),S_1)$ plus the expected cost of running the second experiment $((\mu_i(\cdot\vert s_1)),S_2)$, conditional on the signal realization $s_1$ from the first experiment. 
 By equating \eqref{eq:ups1} and \eqref{eq:ups2} we obtain that
 \begin{equation}\label{eq:ups3}
    \sum_{s_1 \in S_1}  \sum_{ij} \left[ \beta_{ij}(q) \mu_i^1(s_1) - \PP(s_1)\beta_{ij}(q(\cdot\vert s_1)) \right] \dkl(\mu_i(\cdot\vert s_1) \Vert \mu_j(\cdot \vert s_1))  = 0.
 \end{equation}
 Given a particular realization $s_1 \in S_1$, we are free to choose $\mu$ such that all the conditional experiments $((\mu_i(\cdot\vert s'_1)),S_2)$, $s'_1 \neq s_1$, are completely uninformative, and hence have cost $0$. Thus, it must hold that for every $s_1 \in S_1$, 
 \[
    \sum_{ij} \left[ \beta_{ij}(q) \mu_i^1(s_1) - \PP(s_1)\beta_{ij}(q(\cdot\vert s_1)) \right] \dkl(\mu_i(\cdot\vert s_1) \Vert \mu_j(\cdot \vert s_1))  = 0.
 \]
 By Lemma~\ref{lem:domain}, the latter can hold only if $$\beta_{ij}(q) \mu_i^1(s_1) = \PP(s_1)\beta_{ij}(q(\cdot\vert s_1)).$$ By dividing and multiplying the left-hand side by $q_i$ and then applying Bayes' rule we obtain that
 \[
    \frac{\beta_{ij}(q)}{q_i} = \frac{\beta_{ij}(q(\cdot\vert s_1))}{q(\cdot \vert s_1)}.
 \]
 Given any $q' \in \P(\Theta)$ with full support, we can choose $\mu$ such that $q(\cdot \vert s_1) = q'$ for some $s_1$. The conclusion now follows by defining $\gamma_{ij} = \beta_{ij}(q)/q_i$.
\end{proof}

\paragraph{Prior Dependence of Bayesian LLR Cost.}
As we prove in Proposition~\ref{prop:uniform-posterior}, the only uniformly posterior separable LLR cost potentially assigns different cost to the same experiment at different prior beliefs.
%
We next explore which experiments have prior dependent cost, through a simple example of binary experiments. Consider the standard setting of a binary state space $\Theta = \{1,2\}$, and an experiment $\mu$ with a binary signal which equals the state with some probability $1/2 < r < 1$. For concreteness, imagine a coin whose probability of heads depends on the state and is either $r$ or $1-r$, and the experiment $\mu$ consists of tossing the coin. Consider a Bayesian LLR cost, with $b_{12}=b_{21}=b$. In this case, even though the effective $(\beta_{ij})$'s depend on the prior, a simple calculation shows that the cost of the experiment does not, and equals
 $$
  C(\mu,q) = b (2 r-1) \log \frac{r}{1-r}
 $$
for every prior $q$.\footnote{This contrasts with mutual information, where the prior affects the cost of this experiment: the cost is highest for the uniform prior, and vanishes as the prior tends towards certainty.}

Consider now the experiment $\nu$ in which the coin is tossed until a ``heads'' outcome. Under Bayesian LLR costs, the cost can be calculated to be
\begin{align*}
C(\nu,q) = \left(\frac{q_1}{r}+\frac{q_2}{1-r}\right)C(\mu,q).
\end{align*}
This cost does depend on the prior: as the above display shows, it is equal to the cost of one toss of the coin, times the expected number of times that it is to be tossed. The latter quantity depends on the prior, in the obvious way. This cost is thus consistent with our additivity axiom, in the sense that this one-shot experiment $\nu$---which is equivalent to a dynamic experiment in which $\mu$ is carried out a random number of times---has a cost that equals the expected number of repetition of $\mu$, times the cost of each independent realization of $\mu$. 


We generalize the example of a biased coin toss to any experiment $\mu$ for which $\dkl(\mu_1 \Vert\mu_2) = \dkl(\mu_2 \Vert\mu_1)$. As the next proposition shows, this condition exactly captures prior independence of Bayesian LLR costs, in the symmetric case in which $b_{12}=b_{21}$.
\begin{prop}\label{prop:prior-independence}
Let $\Theta=\{1,2\}$. Let $C$ be a uniformly posterior separable Bayesian LLR cost specified by $b_{12}=b_{21}=b > 0$. Let $\mu$ be a Blackwell experiment. Then the following are equivalent.
\begin{enumerate}[(i)]
    \item $\dkl(\mu_1 \Vert\mu_2) = \dkl(\mu_2 \Vert\mu_1)$.
    \item $C(\mu,q)$ is independent of the prior $q$.
\end{enumerate}
\end{prop}

\begin{proof}
Under the assumption that $b_{12}=b_{21}=b > 0$, the cost of an experiment $\mu$ at prior $q$ is
\[
    C(\mu,q) = b \left[q_1 \dkl(\mu_1\Vert\mu_2) +  q_2 \dkl(\mu_2\Vert\mu_1)\right].
\]
Clearly, this quantity depends on $q$ if and only if $\dkl(\mu_1 \Vert\mu_2) \neq \dkl(\mu_2 \Vert\mu_1)$.
\end{proof}

\end{appendices}
\end{document}